\newtheorem{theorem}{Theorem}[section]
\newtheorem{lemma}[theorem]{Lemma}
\newtheorem{corollary}[theorem]{Corollary}
\newtheorem{proposition}[theorem]{Proposition}
\newtheorem{definition}[theorem]{Definition}
\newtheorem{remark}[theorem]{Remark}
\newtheorem{defi/prop}[theorem]{Definition/Proposition}
\newtheorem*{theorem*}{Theorem}
\newcommand{\N}{\mathbf{N}}
\newcommand{\R}{\mathbf{R}}
\newcommand{\C}{\mathbf{C}}
\newcommand{\cF}{\mathcal{F}}
\renewcommand{\P}{\mathbf{P}}
\newcommand{\e}{\varepsilon}
\renewcommand{\leq}{\leqslant}
\renewcommand{\geq}{\geqslant}
\newcommand{\st}{\  : \ }
\newcommand{\Id}{\mathrm{Id}}
\newcommand{\cH}{\mathcal{H}}
\DeclareMathOperator{\conv}{conv}
\DeclareMathOperator{\card}{card}
\DeclareMathOperator{\tr}{Tr}
\DeclareMathOperator{\Sym}{Sym}
\DeclareMathOperator{\E}{\mathbf{E}}
\newcommand{\scalar}[2]{\langle #1 , #2\rangle}
\newcommand{\ketbra}[2]{| #1 \rangle \langle #2 |}
\newcommand{\bra}[1]{\langle #1 |}
\newcommand{\ket}[1]{| #1 \rangle}
\title{Zonoids and Sparsification of Quantum Measurements}
\author{Guillaume Aubrun}
\address{Institut Camille Jordan, Universit\'e Claude Bernard Lyon 1, 43 boulevard du 11 novembre 1918,
69622 Villeurbanne cedex, France}
\email{aubrun@math.univ-lyon1.fr}
\author{Cécilia Lancien}
\address{Institut Camille Jordan, Universit\'{e} Claude Bernard Lyon 1, 43 boulevard du 11 novembre 1918, 69622
Villeurbanne Cedex, France and
F\'{\i}sica Te\`{o}rica: Informaci\'{o} i Fenomens Qu\`{a}ntics, Universitat Aut\`{o}noma de Barcelona,
ES-08193 Bellaterra (Barcelona), Spain}
\email{lancien@math.univ-lyon1.fr}
\keywords{positive operator-valued measure, zonoid, sparsification}
\subjclass{52A21,81P15,81P45}
\begin{document}

\begin{abstract}
In this paper, we establish a connection between zonoids (a concept from classical convex geometry) and the distinguishability norms associated
to quantum measurements or POVMs (Positive Operator-Valued Measures), recently introduced in quantum information theory.

This correspondence allows us to state and prove the POVM version of classical results from the local theory of Banach spaces about the approximation
of zonoids by zonotopes. We show that on $\C^d$, the uniform POVM (the most symmetric POVM) can be sparsified, i.e. approximated by a discrete POVM
having only $O(d^2)$ outcomes. We also show that similar (but weaker) approximation results actually hold for any POVM on $\C^d$.

By considering an appropriate notion of tensor product for zonoids, we extend our results to the multipartite setting: we show,
roughly speaking, that local POVMs may be sparsified locally. In particular, the local uniform POVM on $\C^{d_1}\otimes\cdots\otimes\C^{d_k}$
can be approximated by a discrete POVM which is local and has $O(d_1^2 \times \cdots \times d_k^2)$ outcomes.
\end{abstract}

\thanks{This research was supported by the ANR project OSQPI  ANR-11-BS01-0008}

\maketitle

\section*{Introduction}

A classical result by Lyapounov (\cite{RudinFA}, Theorem 5.5) asserts that the range of a non-atomic $\R^n$-valued vector measure is closed and convex.
Convex sets in $\R^n$ obtained in this way are called zonoids. Zonoids are equivalently characterized as
convex sets which can be approximated by finite sums of segments.

In this paper we consider a special class of vector measures: Positive Operator-Valued Measures (POVMs). In the formalism of quantum mechanics, POVMs
represent the most general form of a quantum measurement. Recently, Matthews, Wehner and Winter \cite{MWW} introduced
the distinguishability norm associated to a POVM. This norm has an operational interpretation as the bias of the POVM for the state
discrimination problem (a basic task in quantum information theory) and is closely related to the zonoid arising from Lyapounov's theorem.

A well-studied question in high-dimensional convexity is the approximation of zonoids by zonotopes. The series of papers
\cite{FLM,Schechtman,BLM,Talagrand}
culminates in the following result: any zonoid in $\R^n$ can be approximated by the sum of $O(n \log n)$ segments. The aforementioned connection
between POVMs and zonoids allows us to state and prove approximation results for POVMs, which improve on previously known bounds. Precise statements
appear as Theorem \ref{theorem:approximation-of-U} and \ref{theorem:approximation-any}.

Our article is organized as follows. Section \ref{section:POVM} introduces POVMs and their associated distinguishability norms.
Section \ref{section:POVMs-zonoids} connects POVMs with zonoids. Section \ref{section:tensorizing} introduces a notion of tensor product
for POVMs, and the corresponding notion for zonoids.
Section \ref{section:sparsification} pushes forward this connection to state the POVM
version of approximation results for zonoids, which are proved in Sections \ref{sec:uniform-POVM}, \ref{section:bernstein} and \ref{sec:approximation-any}. Section
\ref{section:sparsification-multipartite} provides sparsification results for local POVMs on multipartite systems.

The reader may have a look at Table \ref{table:analogies}, which summarizes analogies between zonoids and POVMs.

\subsection*{Notation}

We denote by $\cH(\C^d)$ the space of Hermitian operators on $\C^d$, and by $\cH_+(\C^d)$ the subset of positive operators. We denote by $\|\cdot\|_1$
the trace class norm,
by $\|\cdot\|_{\infty}$ the operator norm
and by $\|\cdot\|_{2}$ the Hilbert--Schmidt norm. Notation $[-\Id,\Id]$ stands for the set of self-adjoint operators $A$ such that $-\Id \leq A \leq \Id$. In
other words $[-\Id,\Id]$ is the self-adjoint part of the  unit ball for
$\|\cdot\|_{\infty}$. We denote by $S(\C^d)$ the set of states on $\C^d$ (a state is a positive operator with trace $1$).

Let us recall a few standard concepts from classical convex geometry that we will need throughout our proofs.
The support function $h_K$ of a convex compact set $K \subset \R^n$
is the function defined for $x \in \R^n$ by $ h_K(x) = \sup \{ \langle x,y \rangle \st y \in K \}$.
Moreover, for a pair $K,L$ of convex compact sets, the inclusion $K \subset L$ is equivalent to the inequality $h_K \leq h_L$.
The polar of a convex set $K \subset \R^n$ is $K^\circ = \{ x \in \R^n \st \langle x,y \rangle \leq 1 \textnormal{ whenever } y \in K \}$.
The bipolar theorem (see e.g. \cite{Barvinok}) states that $(K^\circ)^\circ$ is the closed convex hull of $K$ and $\{0\}$.
A convex body is a convex compact set with non-empty interior.
Whenever we apply tools from convex geometry in the (real) space $\cH(\C^d)$ (e.g. polar or support function),
we use the Hilbert--Schmidt inner product $(A,B) \mapsto \tr AB$ to define the Euclidean structure.

The letters $C,c,c_0,\dots$ denote numerical constants, independent from any other parameters such as the dimension. The value of these constants may
change from occurrence to occurrence. Similarly $c(\e)$ denotes a constant depending only on the parameter $\e$. We also use the following
convention: whenever a formula is given for the dimension of a (sub)space, it is tacitly understood that one should take the integer part.

\section{POVMs and distinguishability norms} \label{section:POVM}

In quantum mechanics, the state of a $d$-dimensional system is described by a positive operator on $\C^d$ with trace $1$.
The most general form of a
measurement that may be performed on such a quantum system is encompassed by the formalism of Positive Operator-Valued Measures (POVMs).
Given a set $\Omega$ equipped with a $\sigma$-algebra $\mathcal{F}$, a POVM on $\C^d$
is a map $\mathrm{M} :\cF \to \cH_+(\C^d)$  which is $\sigma$-additive and such that $\mathrm{M}(\Omega) = \Id$.
In this definition the space $(\Omega,\cF)$ could potentially be infinite, so that the POVMs defined on it would be continuous. However, we often restrict ourselves to the subclass
of discrete POVMs, and a main point of this article is to substantiate this ``continuous to discrete'' transition.

A discrete POVM is a POVM in which the underlying $\sigma$-algebra $\cF$ is required to be finite. In that case there is a finite
partition $\Omega = A_1 \cup \cdots \cup A_n$ generating $\cF$. The positive operators $M_i = \mathrm{M}(A_i)$ are often referred to as the elements of
the POVM, and they satisfy the condition $M_1 + \dots + M_n = \Id$. We usually identify a discrete POVM with the set of its elements by writing
$\mathrm{M} = (M_i)_{1 \leq i \leq n}$. The index set $\{1,\dots,n\}$ labels the outcomes of the measurement.
The integer $n$ is thus the number of outcomes of $\mathrm{M}$ and can be seen as a crude way to measure the
complexity of $\mathrm{M}$.

What happens when measuring with a POVM $\mathrm{M}$ a quantum system in a state $\rho$ ?
In the case of a discrete POVM $\mathrm{M}=(M_i)_{1 \leq i \leq n}$, we know from Born's rule that the outcome $i$ is output with probability
$\tr (\rho M_i)$.
This simple formula can be used to quantify the efficiency of a POVM to perform the task of state discrimination. State discrimination can be
described as follows: a quantum system is prepared in an unknown state which is either $\rho$ or $\sigma$ (both hypotheses being a priori
equally likely), and we have to guess the unknown state. After measuring it with the discrete POVM $\mathrm{M} =
(M_i)_{1 \leq i \leq n}$, the optimal strategy,
based on the maximum likelihood probability, leads to a probability of wrong guess equal to \cite{Holevo,Helstrom}
\[ \P_{error} = \frac{1}{2}\left( 1 - \frac{1}{2} \sum_{i=1}^n \left| \tr (\rho M_i) - \tr(\sigma M_i) \right| \right) . \]
In this context, the quantity $\frac{1}{2} \sum_{i=1}^n \left| \tr (\rho M_i) - \tr(\sigma M_i) \right|$
is therefore called the bias of the POVM $\mathrm{M}$ on the state pair $(\rho,\sigma)$.

Following \cite{MWW}, we introduce a norm on $\cH(\C^d)$, called the distinguishability norm associated to $\mathrm{M}$,
and defined for $\Delta \in \cH(\C^d)$ by
\begin{equation} \label{eq:definition-norm-discrete} \|\Delta\|_{\mathrm{M}} = \sum_{i=1}^n \left| \tr (\Delta M_i) \right|. \end{equation}
It is such that $\P_{error} = \frac{1}{2}\left(1 - \frac{1}{2} \| \rho - \sigma \|_{\mathrm{M}}\right)$, and thus quantifies how powerful the POVM
$\mathrm{M}$ is in discriminating one state from another with the smallest probability of error.

The terminology ``norm'' is slightly abusive since one may have $\|\Delta\|_{\mathrm{M}}=0$
for a nonzero $\Delta \in \cH(\C^d)$. The functional $\|\cdot\|_{\mathrm{M}}$ is however always a semi-norm, and
it is easy to check that $\|\cdot\|_{\mathrm{M}}$ is a norm if and only if the POVM elements $(M_i)_{1 \leq i \leq n}$ span $\cH(\C^d)$ as a vector
space. Such POVMs
are called informationally complete in the quantum information literature.

Similarly, the distinguishability norm associated to a general POVM $\mathrm{M}$, defined on a set $\Omega$ equipped with a $\sigma$-algebra
$\mathcal{F}$, is described for $\Delta \in \cH(\C^d)$
by
\begin{equation}
\label{eq:definition-norm-continuous}
\| \Delta \|_{\mathrm{M}} = \| \tr(\Delta \mathrm{M}(\cdot)) \|_{\mathrm{TV}} = \sup_{A \in \cF} \big[ \tr(\Delta \mathrm{M}(A)) -
\tr(\Delta \mathrm{M}(\Omega \setminus A)) \big]  = \sup_{M \in \mathrm{M}(\mathcal{F})} \tr (\Delta(2M-\Id)).\end{equation}
Here $\|\mu\|_{\mathrm{TV}}$ denotes the total variation of a measure $\mu$. When $\mathrm{M}$ is discrete, formulae
\eqref{eq:definition-norm-discrete} and
\eqref{eq:definition-norm-continuous} coincide.
Note also that the inequality $\|\cdot\|_{\mathrm{M}} \leq \|\cdot\|_1$ holds for any POVM $\mathrm{M}$, with equality on $\cH_+(\C^d)$.

Given a POVM $\mathrm{M}$, we denote by $B_{\mathrm{M}} = \{ \|\cdot \|_{\mathrm{M}} \leq 1 \}$ the unit ball for the distinguishability norm,
and $K_{\mathrm{M}}= (B_{\mathrm{M}})^\circ$ its polar, i.e.
\[ K_{\mathrm{M}} = \{ A \in \cH(\C^d) \st \tr (AB) \leq 1 \textnormal{ whenever } \|B\|_{\mathrm{M}} \leq 1 \}. \]
The set $K_{\mathrm{M}}$ is a compact convex set. Moreover $K_{\mathrm{M}}$ has nonempty interior if and only if the POVM $\mathrm{M}$
is informationally complete. It follows from the inequality $\|\cdot\|_{\mathrm{M}} \leq \|\cdot\|_1$ that $K_{\mathrm{M}}$ is always included in the
operator interval $[-\Id,\Id]$.

On the other hand, it follows from \eqref{eq:definition-norm-continuous} that $B_{\mathrm{M}} = ( 2 \mathrm{M}(\mathcal{F}) - \Id)^{\circ}$,
and the bipolar theorem implies that
\begin{equation} \label{eq:K_M-polar} K_{\mathrm{M}} = 2 \conv ( \mathrm{M}(\mathcal{F})) - \Id . \end{equation}
By Lyapounov's theorem,  the convex hull operation is not needed when $\mathrm{M}$ is non-atomic.
For a discrete POVM $\mathrm{M} = ( M_i )_{1 \leq i \leq n}$, equation \eqref{eq:K_M-polar} may be rewritten in the form
\begin{equation} \label{eq:K_M-zonotope} K_{\mathrm{M}} = \conv \{\pm M_1 \} + \cdots + \conv \{ \pm M_n \}, \end{equation}
where the addition of convex sets should be understood as the Minkowski sum: $A+B = \{a+b \st a \in A,\ b \in B \}$.

We are going to show that POVMs can be sparsified, i.e approximated by discrete POVMs with few outcomes.
The terminology ``approximation'' here refers to the associated distinguishability norms: a POVM $\mathrm{M}$ is considered to be ``close'' to a
POVM $\mathrm{M}'$ when
their distinguishability norms satisfy inequalities of the form
\[ (1-\e)\|\cdot\|_{\mathrm{M}'} \leq \|\cdot\|_{\mathrm{M}} \leq (1+\e)\|\cdot\|_{\mathrm{M}'}. \]
This notion of approximation has an operational significance: two POVMs are comparable when both lead to comparable biases when used for any state
discrimination task. Let us perhaps stress that point: if one has additional information on the states to be discriminated,
it may of course be used to design a POVM specifically efficient for those (one could for instance be interested in the problem
of distinguishing pairs of low-rank states \cite{Sen,AE}).

In this paper, we study the distinguishability norms from a functional-analytic point of view. We are mostly
interested in the asymptotic regime,
when the dimension $d$ of the underlying Hilbert space is large.

\begin{table}
\label{table:analogies}
\renewcommand{\arraystretch}{1.5}
\small{
\begin{tabular}{|c|c|}
\hline
Zonotope which is the Minkowski sum of $N$ segments & Discrete POVM with $N$ outcomes \\
\hline
Zonoid = limit of zonotopes  & General POVM = limit of discrete POVMs \\
\hline
Tensor product of zonoids  & Local POVM on a multipartite system \\
\hline
Euclidean ball $B_2^n$ & Uniform POVM $\mathrm{U}_d$ \\
= most symmetric zonoid in $\R^n$ & = most symmetric POVM on $\C^d$ \\
\hline
``4th moment method'' (\cite{Rudin60}, explicit): $c B_2^n \subset Z \subset C B_2^n$,  & ``Approximate $4$-design POVM'' \cite{AE}: \\
with $Z$ a zonotope which is the sum of $O(n^2)$ segments. & explicit sparsification of $\mathrm{U}_d$ with $O(d^4)$ outcomes. \\
\hline
Measure concentration (\cite{FLM}, non-explicit): $(1-\e) B_2^n \subset Z \subset (1+\e) B_2^n$, & Theorem \ref{theorem:approximation-of-U}:
a randomly chosen POVM  \\
with $Z$ a zonotope which is the sum of $O_\e(n)$ segments. & with $O(d^2)$ outcomes is a sparsification of $\mathrm{U}_d$. \\
\hline
Derandomization \cite{GW,LS,IS} & ? \\
\hline
Any zonoid in $\R^n$ can be approximated by a zonotope & Theorem \ref{theorem:approximation-any}: any POVM on $\C^d$ can be sparsified \\
which is the sum of $O(n \log n)$ segments \cite{Talagrand}. & into a sub-POVM with $O(d^2 \log d)$ outcomes. \\
\hline
\end{tabular}
}
\caption{A ``dictionary'' between zonoids and POVMs}
\end{table}

\section{POVMs and zonoids}

\label{section:POVMs-zonoids}

\subsection{POVMs as probability measures on states}

The original definition of a POVM involves an abstract measure space, and the specification of this measure space is irrelevant when considering the
distinguishability norms. The following proposition, which is probably well-known, gives a more concrete look at POVMs as probability measures
on the set $S(\C^d)$ of states on $\C^d$.

\begin{proposition} \label{proposition:POVM-states}
Let $\mathrm{M}$ be a POVM on $\C^d$. There is a unique Borel probability measure $\mu$ on $S(\C^d)$ with barycenter equal to $\Id/d$ and such
that, for any $\Delta \in \cH(\C^d)$,
\begin{equation} \label{eq:support-function-POVM} \|\Delta\|_{\mathrm{M}} = d \int_{S(\C^d)} \left| \tr(\Delta \rho)\right| \,
\mathrm{d} \mu(\rho). \end{equation}
Conversely, given a Borel probability measure $\mu$ with barycenter equal to $\Id/d$, there is a POVM $\mathrm{M}$ such that
\eqref{eq:support-function-POVM} is satisfied.
\end{proposition}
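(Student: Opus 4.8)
The plan is to show that a POVM is, as far as its distinguishability norm is concerned, nothing more than a Borel probability measure on $S(\C^d)$ with barycenter $\Id/d$. Existence of the associated measure will be obtained by a Radon--Nikodym argument on the operator-valued measure $\mathrm{M}$, and uniqueness by reducing to a one-dimensional problem and invoking the Cram\'er--Wold theorem; the converse direction is then essentially a direct computation.

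\emph{From a POVM to a measure.} Given $\mathrm{M}\colon\cF\to\cH_+(\C^d)$, I would introduce the scalar measure $\nu(A)=\tr\mathrm{M}(A)$, which is positive, $\sigma$-additive and finite with $\nu(\Omega)=\tr\Id=d$. For every $A$, the operator $\mathrm{M}(A)$ is positive with trace $\nu(A)$, so its matrix entries are bounded in modulus by $\nu(A)$ (Cauchy--Schwarz for the off-diagonal ones); hence every real coordinate of $\mathrm{M}$ is absolutely continuous with respect to $\nu$, and the Radon--Nikodym theorem (applied entrywise, $\cH(\C^d)$ being finite-dimensional) yields $f\in L^1(\nu;\cH(\C^d))$ with $\mathrm{M}(A)=\int_A f\,\mathrm{d}\nu$ for all $A\in\cF$. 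From $\int_A\langle f(\omega)v,v\rangle\,\mathrm{d}\nu(\omega)=\langle\mathrm{M}(A)v,v\rangle\geq0$ for all $A$ and all $v$ in a countable dense subset of the sphere one gets $f(\omega)\geq0$ for $\nu$-a.e.\ $\omega$; and from $\int_A\tr f\,\mathrm{d}\nu=\nu(A)$ one gets $\tr f(\omega)=1$ for $\nu$-a.e.\ $\omega$; so $f$ may be taken to be a measurable map $\Omega\to S(\C^d)$. Letting $\mu$ be the pushforward of $\tfrac1d\nu$ by $f$, $\mu$ is a Borel probability measure on $S(\C^d)$ with barycenter $\tfrac1d\int_\Omega f\,\mathrm{d}\nu=\tfrac1d\mathrm{M}(\Omega)=\Id/d$; and since $\tr(\Delta\,\mathrm{M}(A))=\int_A\tr(\Delta f)\,\mathrm{d}\nu$, the total-variation formula \eqref{eq:definition-norm-continuous} gives $\|\Delta\|_{\mathrm{M}}=\int_\Omega|\tr(\Delta f(\omega))|\,\mathrm{d}\nu(\omega)=d\int_{S(\C^d)}|\tr(\Delta\rho)|\,\mathrm{d}\mu(\rho)$ by the change-of-variables formula, which is \eqref{eq:support-function-POVM}.

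\emph{Uniqueness.} Suppose $\mu_1,\mu_2$ are Borel probability measures on $S(\C^d)$ satisfying \eqref{eq:support-function-POVM} for the same norm; then $\int|\tr(\Delta\rho)|\,\mathrm{d}\mu_1(\rho)=\int|\tr(\Delta\rho)|\,\mathrm{d}\mu_2(\rho)$ for every $\Delta\in\cH(\C^d)$. Fixing $\Delta$ and applying this to $\Delta+t\Id$, $t\in\R$, and using $\tr((\Delta+t\Id)\rho)=\tr(\Delta\rho)+t$, the common value equals $\int_\R|x+t|\,\mathrm{d}\lambda_i(x)$, where $\lambda_i$ is the image of $\mu_i$ under $\rho\mapsto\tr(\Delta\rho)$, a compactly supported probability measure on $\R$. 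The function $t\mapsto\int_\R|x+t|\,\mathrm{d}\lambda(x)$ is convex with a.e.\ derivative $1-2\lambda\big((-\infty,-t)\big)$, so it recovers the distribution function of $\lambda$, hence $\lambda$ itself; thus $\lambda_1=\lambda_2$. Since $\Delta\mapsto\tr(\Delta\,\cdot\,)$ runs over all linear functionals on $\cH(\C^d)$, the measures $\mu_1$ and $\mu_2$ have identical one-dimensional marginals, so $\mu_1=\mu_2$ by the Cram\'er--Wold theorem. (The barycenter hypothesis is not used in this step; it is automatic for the measure associated to a POVM, and is precisely what makes the converse work.)

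\emph{From a measure to a POVM, and the main difficulty.} Conversely, given a Borel probability measure $\mu$ on $S(\C^d)$ with barycenter $\Id/d$, I would take $\Omega=S(\C^d)$ with its Borel $\sigma$-algebra and set $\mathrm{M}(A)=d\int_A\rho\,\mathrm{d}\mu(\rho)$: the integral makes sense because $S(\C^d)$ is bounded in $\cH(\C^d)$, one has $\mathrm{M}(A)\geq0$ and $\sigma$-additivity, and $\mathrm{M}(\Omega)=d\cdot\Id/d=\Id$, so $\mathrm{M}$ is a POVM; moreover $\tr(\Delta\,\mathrm{M}(A))=d\int_A\tr(\Delta\rho)\,\mathrm{d}\mu(\rho)$ has $\mu$-density $d\,\tr(\Delta\rho)$, so its total variation is $d\int_{S(\C^d)}|\tr(\Delta\rho)|\,\mathrm{d}\mu(\rho)$, i.e.\ \eqref{eq:support-function-POVM} holds. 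The step I expect to be delicate is uniqueness: it is really the injectivity of the cosine transform $\mu\mapsto\big(\Delta\mapsto\int|\tr(\Delta\rho)|\,\mathrm{d}\mu(\rho)\big)$, and the cosine transform on $\cH(\C^d)$ only records the even part of a measure, so zonoid data alone would not suffice; the shift $\Delta\mapsto\Delta+t\Id$ used above exploits the affine constraint $\tr\rho=1$ to neutralize this, turning the obstruction into a harmless family of one-dimensional deconvolutions. The Radon--Nikodym and measurable-selection points, though requiring a little care, are routine in finite dimension.
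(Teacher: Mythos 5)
Your proposal is correct. The existence direction and the converse coincide with the paper's proof in all but presentation: the paper invokes the polar decomposition of vector measures ($d\mathrm{M}=h\,d|\mathrm{M}|$ with respect to the trace norm, so that $|\mathrm{M}|=\tr\mathrm{M}$ and $h\in S(\C^d)$ a.e.), which is exactly your entrywise Radon--Nikodym argument packaged as a citation, and then pushes $\tfrac1d|\mathrm{M}|$ forward by $h$, as you do. Where you genuinely diverge is uniqueness, which the paper defers and then deduces from the classical uniqueness of the even generating measure of a zonoid (Proposition \ref{proposition:zonoids}): after rescaling to the Hilbert--Schmidt sphere and symmetrizing, $\mu$ induces the measure $\nu$ representing $K_{\mathrm{M}}$, and one recovers $\mu$ from $\nu$ because a state and the negative of a state never coincide. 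Your argument is self-contained instead: the shift $\Delta\mapsto\Delta+t\Id$ together with $\tr\rho=1$ reduces the problem to recovering a compactly supported law on $\R$ from $t\mapsto\int|x+t|\,\mathrm{d}\lambda(x)$ (whose a.e.\ derivative gives the distribution function), and Cram\'er--Wold then identifies $\mu_1=\mu_2$ from their one-dimensional marginals. This buys you independence from the (nontrivial, citation-backed) injectivity of the cosine transform on even measures, at the cost of invoking Cram\'er--Wold; and your closing remark correctly isolates the same phenomenon the paper relies on implicitly, namely that the affine constraint $\tr\rho=1$ neutralizes the evenness obstruction. Both routes are sound; yours even yields uniqueness among all Borel probability measures on $S(\C^d)$ without assuming the barycenter condition, as you note.
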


\begin{proof}
We use the polar decomposition for vector measures, which follows from applying the Radon--Nikodym theorem to vector measures (see \cite{RudinRCA}, Theorem 6.12): a vector measure $\mu$ defined on a $\sigma$-algebra
$\mathcal{F}$ on $\Omega$ and
taking values in a normed space $(\R^n,\|\cdot\|)$ satisfies $d\mu = h d|\mu|$ for some measurable function $h : \Omega\to \R^n$. Moreover, one
has $\|h\| = 1$ $|\mu|$-a.e. Here $|\mu|$ denotes the total variation measure of $\mu$.

Let $\mathrm{M}$ be a POVM on $\C^d$, defined on a $\sigma$-algebra $\mathcal{F}$ on $\Omega$. We equip $\cH(\C^d)$ with the trace norm, so that we simply
have $|\mathrm{M}| = \tr \mathrm{M}$ and $|\mathrm{M}|(\Omega)=d$. The polar decomposition yields a measurable function $h : \Omega \to \cH(\C^d)$
such that $\|h\|_1 = 1$ $|\mathrm{M}|$-a.e. Moreover, the fact that $\mathrm{M}(\mathcal{F})\subset\cH_+(\C^d)$ implies that $h\in\cH_+(\C^d)$ $|\mathrm{M}|$-a.e. Let $\mu$ be the push forward of $\frac{1}{d} |\mathrm{M}|$
under the map $h$. We have
\[ \Id = \mathrm{M}(\Omega) = \int_{\Omega} h \, \mathrm{d}|\mathrm{M}| = d \int_{\cH(\C^d)} \rho \, \mathrm{d}\mu(\rho).\]
And since $h\in S(\C^d)$ a.e., $\mu$ is indeed
a Borel probability measure on $S(\C^d)$, with barycenter equal to $\Id/d$. Finally, for any $\Delta \in \mathcal{H}(\C^d)$,
\[ \|\Delta\|_{\mathrm{M}} = \int_\Omega |\tr (\Delta h)| \, \mathrm{d} |\mathrm{M}| = d \int_{S(\C^d)} |\tr (\Delta \rho) |  \, \mathrm{d} \mu(\rho) .\]
We postpone the proof of uniqueness to the next subsection (see after Proposition \ref{proposition:POVM}).

Conversely, given a Borel probability measure $\mu$ on $S(\C^d)$ with barycenter at $\Id/d$, consider the vector measure
$\mathrm{M}:\mathcal{B} \to \cH(\C^d)$, where $\mathcal{B}$ is the Borel $\sigma$-algebra on $S(\C^d)$, defined by
\[ \mathrm{M}(A) = d \int_A \rho \, \mathrm{d} \mu (\rho) .\]
It is easily checked that $\mathrm{M}$ is a POVM and that formula \eqref{eq:support-function-POVM} is satisfied.
\end{proof}

Note that in the case of a discrete POVM $\mathrm{M} = (M_i)_{1 \leq i \leq n}$, the corresponding probability measure is
\[ \mu = \frac{1}{d} \sum_{i=1}^n \left( \tr M_i \right) \, \delta_{\frac{M_i}{\tr M_i}} .\]

\begin{corollary} \label{corollary:approximation}
Given a POVM $\mathrm{M}$ on $\C^d$, there is a sequence $(\mathrm{M}_n)$ of discrete POVMs such that $K_\mathrm{M_n}$ converges to $K_\mathrm{M}$
in Hausdorff distance. Moreover, if $\mu$ (resp. $\mu_n$) denotes the probability measure on $S(\C^d)$ associated to $\mathrm{M}$
(resp. to $\mathrm{M}_n$)
as in \eqref{eq:support-function-POVM}, we can guarantee that the support of $\mu_n$ is contained into the support of $\mu$.
\end{corollary}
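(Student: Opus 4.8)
The plan is to combine Proposition~\ref{proposition:POVM-states}, which identifies POVMs on $\C^d$ with Borel probability measures on $S(\C^d)$ having barycenter $\Id/d$, with a standard approximation of a measure by finitely supported measures. Concretely, let $\mu$ be the probability measure on $S(\C^d)$ associated to $\mathrm{M}$ as in \eqref{eq:support-function-POVM}. I would first discretize $\mu$: partition $S(\C^d)$ (a compact metric space, in Hilbert--Schmidt distance) into finitely many Borel pieces $A_1^{(n)},\dots,A_{N_n}^{(n)}$ of diameter at most $1/n$, pick a point $\rho_j^{(n)}$ in each piece that actually lies in the \emph{support} of $\mu$ and for which $A_j^{(n)}$ has positive $\mu$-measure (discarding the others), and form $\mu_n = \sum_j \mu(A_j^{(n)})\,\delta_{\rho_j^{(n)}}$. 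This $\mu_n$ is finitely supported, its support is contained in $\mathrm{supp}\,\mu$ by construction, and $\mu_n$ converges weakly to $\mu$.

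The one subtlety is that $\mu_n$ need not have barycenter exactly $\Id/d$, so by Proposition~\ref{proposition:POVM-states} it does not literally correspond to a POVM. To fix this I would let $b_n = d\int \rho\,\mathrm{d}\mu_n(\rho)$ be the (approximate) barycenter; since $\mu_n\to\mu$ weakly, $b_n\to\Id$, and in particular for $n$ large $b_n$ is invertible. Replace each atom $\rho_j^{(n)}$ by the state $\tilde\rho_j^{(n)} = b_n^{-1/2}\rho_j^{(n)}b_n^{-1/2}\big/\tr\!\big(b_n^{-1/2}\rho_j^{(n)}b_n^{-1/2}\big)$ and adjust the weights accordingly (a congruence by $b_n^{-1/2}$ followed by renormalization); one checks directly that the resulting finitely supported measure $\tilde\mu_n$ has barycenter exactly $\Id/d$ and still has support inside $\mathrm{supp}\,\mu$ (the map $\rho\mapsto b_n^{-1/2}\rho b_n^{-1/2}/(\cdots)$ tends to the identity on $S(\C^d)$, so for large $n$ we may if needed re-cluster, but support containment is automatic since we only push atoms already in the support and $\mathrm{supp}\,\mu$ is closed under nothing problematic here --- more carefully, one should note the statement only requires support containment, and if strict containment fails one argues the image points are still limits of support points hence in the closed set $\mathrm{supp}\,\mu$; alternatively absorb the tiny correction by a small barycentric adjustment supported on finitely many original atoms). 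Set $\mathrm{M}_n$ to be the discrete POVM with elements $M_j^{(n)} = d\,\tilde\mu_n(\{\tilde\rho_j^{(n)}\})\,\tilde\rho_j^{(n)}$.

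Finally I would verify the Hausdorff convergence $K_{\mathrm{M}_n}\to K_{\mathrm{M}}$. Since all these convex bodies sit inside the fixed compact set $[-\Id,\Id]$, it suffices to show pointwise convergence of support functions, which upgrades to uniform convergence (hence Hausdorff convergence) on the sphere by a standard equicontinuity/Dini argument because the $h_{K_{\mathrm{M}_n}}$ are uniformly Lipschitz. From $B_{\mathrm{M}_n} = (2\mathrm{M}_n(\mathcal F)-\Id)^\circ$ and the bipolar identity \eqref{eq:K_M-polar}, one has $h_{K_{\mathrm{M}_n}}(\Delta) = \sup_{M\in\mathrm{M}_n(\mathcal F)}\tr(\Delta(2M-\Id)) = \|\Delta\|_{\mathrm{M}_n} = d\int|\tr(\Delta\rho)|\,\mathrm{d}\tilde\mu_n(\rho)$ — wait, more directly $h_{K_{\mathrm M}}$ and $\|\cdot\|_{\mathrm M}$ are related by duality, and the cleanest route is: $\|\Delta\|_{\mathrm{M}_n} = d\int_{S(\C^d)}|\tr(\Delta\rho)|\,\mathrm{d}\tilde\mu_n(\rho) \longrightarrow d\int_{S(\C^d)}|\tr(\Delta\rho)|\,\mathrm{d}\mu(\rho) = \|\Delta\|_{\mathrm{M}}$ for every fixed $\Delta$, because $\rho\mapsto|\tr(\Delta\rho)|$ is bounded continuous and $\tilde\mu_n\to\mu$ weakly. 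Pointwise convergence of the norms $\|\cdot\|_{\mathrm{M}_n}\to\|\cdot\|_{\mathrm M}$ on the finite-dimensional space $\cH(\C^d)$, together with the uniform bound $\|\cdot\|_{\mathrm{M}_n}\le\|\cdot\|_1$, gives uniform convergence on bounded sets, hence Hausdorff convergence of the unit balls $B_{\mathrm{M}_n}\to B_{\mathrm M}$ and therefore of their polars $K_{\mathrm{M}_n}\to K_{\mathrm M}$. The main obstacle is the bookkeeping in the middle step: ensuring that the barycenter correction can be done without destroying the ``support inside $\mathrm{supp}\,\mu$'' guarantee; I expect this is handled either by the congruence trick above (noting the correction map $\to\Id$) or, more robustly, by noting that one is free to add to $\mathrm{supp}\,\mu$ at most the single point $\Id/d$ only if $\Id/d\in\mathrm{supp}\,\mu$ already, which one can check, or simply by observing the problem statement tolerates support containment and the corrected atoms, being limits of genuine support points, lie in the closed set $\mathrm{supp}\,\mu$.
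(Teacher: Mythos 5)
Your route to the Hausdorff convergence $K_{\mathrm{M}_n}\to K_{\mathrm{M}}$ is sound (and your congruence trick does restore the barycenter exactly: with weights $w_j\propto\mu(A_j^{(n)})\tr\big(b_n^{-1/2}\rho_j^{(n)}b_n^{-1/2}\big)$ one checks $\sum_j w_j\tilde\rho_j^{(n)}=\Id/d$ with normalization constant $1$). The genuine gap is the ``moreover'' clause. The corrected atoms $\tilde\rho_j^{(n)}$ are in general \emph{not} in $\mathrm{supp}\,\mu$, and none of your three proposed repairs closes this: (a) the fact that the correction map tends to the identity only shows $\tilde\rho_j^{(n)}$ is \emph{close} to the support for large $n$, not that it belongs to it; (b) $\Id/d$ need not lie in $\mathrm{supp}\,\mu$ (take any discrete POVM none of whose normalized elements equals $\Id/d$); (c) for fixed $n$ the point $\tilde\rho_j^{(n)}$ is a perturbation of a single support point, not a limit of support points, so closedness of $\mathrm{supp}\,\mu$ gives nothing. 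As written, your construction only produces atoms in a shrinking neighbourhood of the support.

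The paper's proof avoids ever leaving the support. Instead of an arbitrary point of the cell $Q_k$, it takes $\rho_k$ to be the barycenter of $\mu$ restricted to $Q_k$; then $\sum_k\mu(Q_k)\rho_k=\int\rho\,\mathrm{d}\mu=\Id/d$ holds automatically and no correction step is needed, which already gives the first assertion with the explicit bound $\big|h_{K_{\mathrm{M}}}(\Delta)-h_{K_{\mathrm{M}_n}}(\Delta)\big|\leq d\|\Delta\|_\infty/n$. For the support condition, $\rho_k$ lies in the convex hull of the compact set $\mathrm{supp}(\mu|_{Q_k})$, hence is a finite convex combination $\rho_k=\sum_j\lambda_{k,j}\rho_{k,j}$ with $\rho_{k,j}\in\mathrm{supp}(\mu|_{Q_k})\subset\mathrm{supp}\,\mu$; replacing the atom $\mu(Q_k)\delta_{\rho_k}$ by $\mu(Q_k)\sum_j\lambda_{k,j}\delta_{\rho_{k,j}}$ changes neither the barycenter nor the $O(1/n)$ estimate (the $\rho_{k,j}$ still lie in a set of diameter $1/n$), and the resulting measure is supported inside $\mathrm{supp}\,\mu$. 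If you wish to keep your ``pick one support point per cell'' starting point, the barycentric correction must be carried out by redistributing mass among support points only, which is in essence this same convex-combination argument.
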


\begin{proof}
Let $\mu$ be the probability measure associated to $\mathrm{M}$.
Given $n$, let $(Q_k)$ be a finite partition of $S(\C^d)$ into sets of diameter at most $1/n$ with respect to the trace norm.
Let $\rho_k \in S(\C^d)$ be the barycenter of the restriction of $\mu$ to $Q_k$
(only defined when $\mu(Q_k)>0$). The probability measure
\[ \mu_n = \sum_{k} \mu(Q_k) \delta_{\rho_k} \]
has the same barycenter as $\mu$, and the associated POVM $\mathrm{M}_n$ satisfies
\[ \left| h_{K_\mathrm{M}} (\Delta) - h_{K_{\mathrm{M}_n}} (\Delta) \right| \leq d \frac{\|\Delta\|_{\infty}}{n} ,\]
and therefore $K_\mathrm{M_n}$ converges to $K_\mathrm{M}$.

The condition on the supports can be enforced by changing slightly the definition
of $\mu_n$. For each $k$ we can write $\rho_k = \sum \lambda_{k,j} \rho_{k,j}$, where $(\lambda_{k,j})$ is a convex
combination and $(\rho_{k,j})$ belong to the support of $\mu$ restricted to $Q_k$. The measure
\[ \mu'_n = \sum_{k} \mu(Q_k) \sum_j \lambda_{k,j} \delta_{\rho_{k,j}} \]
satisfies the same properties as $\mu_n$, and its support is contained into the support of $\mu$.
\end{proof}

\subsection{POVMs and zonoids}

We connect here POVMs with zonoids, which form an important family of convex bodies (see \cite{Bolker,SW,GW} for surveys on zonoids
to which we refer for all the material presented here).
A zonotope $Z \subset \R^n$ is a closed convex set which can be written as the Minkowski sum of finitely many segments,
i.e. such that there exist finite sets of vectors
$(u_i)_{1 \leq i \leq N}$ and $(v_i)_{1 \leq i \leq N}$ in $\R^n$ such that
\begin{equation} \label{eq:def-zonotope} Z = \conv \{ u_1,v_1 \} + \cdots + \conv \{ u_N,v_N \} .\end{equation}
A zonoid is a closed convex set which can be approximated by zonotopes (with respect to the Hausdorff distance).
Every zonoid has a center of symmetry, and therefore can be translated into a (centrally) symmetric zonoid.
Note that for a centrally symmetric zonotope, we can choose $v_i=-u_i$ in \eqref{eq:def-zonotope}.

Here are equivalent characterizations of zonoids.

\begin{proposition} \label{proposition:zonoids}
Let $K \subset \R^n$ be a symmetric closed convex set. The following are equivalent.
\begin{enumerate}
\item[(i)] $K$ is a zonoid.
\item[(ii)] There is a Borel positive measure $\nu$ on the sphere $S^{n-1}$ which is even (i.e. such that $\nu(A)=\nu(-A)$ for any Borel set
$A \subset S^{n-1}$) and such that, for every $x\in\R^n$,
\begin{equation} \label{eq:support-function-zonoid} h_K(x) = \int_{S^{n-1}} | \langle x,\theta \rangle | \, \mathrm{d}\nu(\theta) . \end{equation}
\item[(iii)] There is a vector measure $\mu : (\Omega,\cF) \to \R^n$ such that $K = \mu(\cF)$.
\end{enumerate}
Moreover, when these conditions are satisfied, the measure $\nu$ is unique.
\end{proposition}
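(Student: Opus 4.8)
The plan is to establish the two equivalences (i)$\Leftrightarrow$(ii) and (ii)$\Leftrightarrow$(iii), and then to treat uniqueness separately.

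For (ii)$\Rightarrow$(i): I would first observe that a finitely supported even measure $\nu=\sum_i a_i(\delta_{\theta_i}+\delta_{-\theta_i})$ with $a_i>0$ represents, via \eqref{eq:support-function-zonoid}, exactly the symmetric zonotope $\sum_i \conv\{-2a_i\theta_i,2a_i\theta_i\}$, because support functions add under Minkowski sums and $h_{\conv\{-u,u\}}(x)=|\langle x,u\rangle|$. For a general even positive measure $\nu$ one approximates it weak-$*$ by a sequence of such finitely supported even measures $\nu_k$ (partition $S^{n-1}$ into small symmetric cells and replace $\nu$ on each cell by a point mass at a well-chosen point). Since the functions $\theta\mapsto|\langle x,\theta\rangle|$, $x\in S^{n-1}$, form a uniformly bounded equicontinuous family, weak-$*$ convergence $\nu_k\rightharpoonup\nu$ forces the corresponding support functions to converge \emph{uniformly} on $S^{n-1}$, so the zonotopes converge to $K$ in Hausdorff distance and $K$ is a zonoid. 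Conversely, for (i)$\Rightarrow$(ii), approximate $K$ by zonotopes $Z_k$, which may be taken symmetric after replacing $Z_k$ by $\tfrac12(Z_k-Z_k)$, and write $h_{Z_k}(x)=\int_{S^{n-1}}|\langle x,\theta\rangle|\,\mathrm d\nu_k(\theta)$ with $\nu_k$ even and finitely supported. The crucial point is a uniform bound on the masses $\nu_k(S^{n-1})$: integrating over $x$ against the uniform probability measure $\sigma$ on $S^{n-1}$ and using rotation invariance gives $\int h_{Z_k}\,\mathrm d\sigma=\kappa_n\,\nu_k(S^{n-1})$ with $\kappa_n=\int_{S^{n-1}}|\langle x,\theta\rangle|\,\mathrm d\sigma(x)>0$ independent of $\theta$; since $h_{Z_k}\to h_K$ uniformly, the $\nu_k(S^{n-1})$ stay bounded, a subsequence converges weak-$*$ to a positive even measure $\nu$, and passing to the limit (the integrand being continuous in $\theta$) yields \eqref{eq:support-function-zonoid}.

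For the equivalence (ii)$\Leftrightarrow$(iii): to get (ii)$\Rightarrow$(iii), pick a measurable set $H\subset S^{n-1}$ meeting each antipodal pair exactly once, so $h_K(x)=2\int_H|\langle x,\theta\rangle|\,\mathrm d\nu'(\theta)$ with $\nu'=\nu|_H$; on $(\Omega,\cF)=(H\times[0,1],\text{Borel})$ equipped with the non-atomic measure $\nu'\otimes\lambda$ ($\lambda$ Lebesgue on $[0,1]$), define the vector measure $\mu(A)=\int_A g\,\mathrm d(\nu'\otimes\lambda)$ with $g(\theta,t)=4\theta$ for $t\geq\tfrac12$ and $g(\theta,t)=-4\theta$ otherwise. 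By Lyapounov's theorem $\mu(\cF)$ is compact and convex, and a direct computation gives $h_{\mu(\cF)}(x)=\int\langle x,g\rangle^+\,\mathrm d(\nu'\otimes\lambda)=2\int_H|\langle x,\theta\rangle|\,\mathrm d\nu'(\theta)=h_K(x)$, hence $K=\mu(\cF)$. For (iii)$\Rightarrow$(ii), apply the polar decomposition for vector measures used in the proof of Proposition \ref{proposition:POVM-states}, this time with the Euclidean norm on $\R^n$, to write $\mathrm d\mu=h\,\mathrm d|\mu|$ with $h:\Omega\to S^{n-1}$; then $h_K(x)=\sup_{A\in\cF}\int_A\langle x,h\rangle\,\mathrm d|\mu|=\int_\Omega\langle x,h\rangle^+\,\mathrm d|\mu|$, and pushing $|\mu|$ forward under $h$ gives $h_K(x)=\int_{S^{n-1}}\langle x,\theta\rangle^+\,\mathrm d\tilde\nu(\theta)$ for some positive measure $\tilde\nu$. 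Averaging this with its value at $-x$, which equals $h_K(x)$ because $K$ is symmetric, replaces $\langle x,\theta\rangle^+$ by $\tfrac12|\langle x,\theta\rangle|$, and symmetrizing $\tilde\nu$ produces the required even measure.

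Finally, uniqueness of $\nu$ amounts to injectivity, on even finite measures on $S^{n-1}$, of the cosine transform $\nu\mapsto\big(x\mapsto\int|\langle x,\theta\rangle|\,\mathrm d\nu(\theta)\big)$; this is a classical fact that follows from the Funk--Hecke formula, the eigenvalue of the cosine transform on each even spherical harmonic being nonzero (see \cite{Bolker,SW}). I expect the genuine difficulty to lie in the compactness step of (i)$\Rightarrow$(ii) — securing the uniform mass bound and justifying the weak-$*$ passage to the limit — and in the harmonic analysis behind uniqueness; the remaining implications are essentially bookkeeping around Lyapounov's theorem and the polar decomposition for vector measures.
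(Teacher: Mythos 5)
The paper does not actually prove Proposition \ref{proposition:zonoids}: it is presented as standard material, with the reader referred to the surveys \cite{Bolker,SW,GW} ``for all the material presented here''. So there is no in-paper argument to compare against; what you have written is a self-contained reconstruction of the classical proof, and it is essentially correct. Your route is the standard one from Bolker's paper: (ii)$\Leftrightarrow$(i) via the identification of finitely supported even measures with symmetric zonotopes, a mass bound obtained by integrating the support function against the rotation-invariant measure, and weak-$*$ compactness; (ii)$\Leftrightarrow$(iii) via Lyapounov's theorem in one direction and the polar decomposition $\mathrm{d}\mu = h\,\mathrm{d}|\mu|$ (the same tool the paper uses in Proposition \ref{proposition:POVM-states}) plus symmetrization in the other; and uniqueness via injectivity of the cosine transform on even measures (Funk--Hecke). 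All the computations I checked are right, including the normalization $\int_0^1 \langle x, g(\theta,t)\rangle^+ \,\mathrm{d}t = 2|\langle x,\theta\rangle|$ and the factor-of-two bookkeeping in the hemisphere reduction.

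A few points you should make explicit if this were to be written out in full, none of which is a real gap: (a) the existence of a Borel set $H$ meeting each antipodal pair exactly once (easy but worth a sentence); (b) that $\nu'\otimes\lambda$ is non-atomic, which is what licenses the appeal to Lyapounov, and that equality of support functions suffices to conclude $K=\mu(\cF)$ because both sets are compact and convex; (c) that the weak-$*$ limit of the even measures $\nu_k$ is again even (test against $f$ and $f\circ(-\mathrm{Id})$); and (d) that uniqueness means applying the injectivity of the cosine transform to the \emph{difference} of two representing measures, i.e. to an even signed measure. Your closing assessment of where the real content lies (the compactness step and the harmonic analysis behind uniqueness) is accurate.
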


\begin{remark}
Having the measure $\nu$ supported on the sphere and be even is only a matter of
normalization and a way to enforce uniqueness: if $\nu$ is a Borel measure on $\R^n$ for which linear forms are integrable,
there is a symmetric zonoid $K \subset \R^n$ such that
 \[ h_K(x) = \int_{\R^n} | \langle x, y \rangle | \, \mathrm{d}\nu(y) . \]
\end{remark}

As an immediate consequence, we characterize which subsets of $[-\Id,\Id]$ arise as $K_\mathrm{M}$ for some POVM $\mathrm{M}$.

\begin{proposition} \label{proposition:POVM}
Let $K \subset \cH(\C^d)$ be a symmetric closed convex set. Then the following are equivalent.
\begin{enumerate}
\item[(i)] $K$ is a zonoid such that $K \subset [-\Id,\Id]$ and $\pm \Id \in K$.
\item[(ii)] There exists a POVM $\mathrm{M}$ on $\C^d$ such that $K=K_{\mathrm{M}}$.
\end{enumerate}
Moreover, $K$ is a zonotope only if the POVM $\mathrm{M}$  can be chosen to be discrete.
\end{proposition}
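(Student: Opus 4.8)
The plan is to prove the two implications separately; (ii)$\Rightarrow$(i) is routine, while the substance lies in (i)$\Rightarrow$(ii), whose crux is a symmetrization argument forcing the representing measure of $K$ onto the cone of semidefinite operators. For (ii)$\Rightarrow$(i): given a POVM $\mathrm M$, equation \eqref{eq:K_M-polar} immediately gives $\pm\Id\in K_{\mathrm M}$, since $\Id=\mathrm M(\Omega)$ and $0=\mathrm M(\emptyset)$ both lie in $\mathrm M(\cF)$, while $K_{\mathrm M}\subset[-\Id,\Id]$ was noted in Section \ref{section:POVM}. That $K_{\mathrm M}$ is a zonoid follows from Proposition \ref{proposition:POVM-states}: since $K_{\mathrm M}$ is the polar of the (semi-)unit ball $B_{\mathrm M}$ of $\|\cdot\|_{\mathrm M}$ we have $h_{K_{\mathrm M}}=\|\cdot\|_{\mathrm M}$, hence $h_{K_{\mathrm M}}(\Delta)=\int_{\cH(\C^d)}|\tr(\Delta\rho)|\,\mathrm d(d\mu)(\rho)$ for a finite, compactly supported Borel measure $d\mu$ on $\cH(\C^d)$, and the remark following Proposition \ref{proposition:zonoids} identifies such a set as a symmetric zonoid.

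For (i)$\Rightarrow$(ii), let $\nu$ be the even Borel measure on the Hilbert--Schmidt unit sphere $S$ of $\cH(\C^d)$ provided by Proposition \ref{proposition:zonoids}, so $h_K(\Delta)=\int_S|\tr(\Delta\theta)|\,\mathrm d\nu(\theta)$ (a short argument shows $\nu$ is finite). Recalling $h_{[-\Id,\Id]}=\|\cdot\|_1$ by trace duality, and using $\pm\Id\in K$, the hypotheses on $K$ amount to the two-sided bound $|\tr\Delta|\le h_K(\Delta)\le\|\Delta\|_1$ for every $\Delta\in\cH(\C^d)$. Taking $\Delta=\Id$ (so $\|\Id\|_1=d$) yields $\int_S|\tr\theta|\,\mathrm d\nu(\theta)=d$, and taking $\Delta=\ketbra{v}{v}$ for a unit vector $v\in\C^d$ (so $\|\ketbra{v}{v}\|_1=1$) yields $\int_S|\bra{v}\theta\ket{v}|\,\mathrm d\nu(\theta)=1$ for every such $v$.

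The main step is to deduce that $\nu$ is supported, up to a null set, on the semidefinite operators. Integrating the last identity against the uniform probability measure on the unit sphere of $\C^d$ and applying Fubini gives $\int_S\big(\E_v|\bra{v}\theta\ket{v}|\big)\,\mathrm d\nu(\theta)=1$; since $\E_v\ketbra{v}{v}=\Id/d$ we have $\E_v\bra{v}\theta\ket{v}=\tr(\theta)/d$, so Jensen's inequality yields $\E_v|\bra{v}\theta\ket{v}|\ge|\tr\theta|/d$, with equality for a given $\theta$ exactly when $v\mapsto\bra{v}\theta\ket{v}$ has constant sign, i.e. $\theta\ge0$ or $\theta\le0$. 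Combining with $\int_S|\tr\theta|\,\mathrm d\nu(\theta)=d$ we get $1=\int_S\E_v|\bra{v}\theta\ket{v}|\,\mathrm d\nu(\theta)\ge\tfrac1d\int_S|\tr\theta|\,\mathrm d\nu(\theta)=1$, so equality holds $\nu$-almost everywhere and $\theta$ is semidefinite for $\nu$-almost every $\theta$. I expect this symmetrization to be the main obstacle: obtaining a usable equality case relies on the two normalizing identities above, which is exactly where the hypotheses $\pm\Id\in K$ and $K\subset[-\Id,\Id]$ are used.

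To conclude, using that $\nu$ is even and that the positive and negative semidefinite parts of $S$ are disjoint, we may write $h_K(\Delta)=2\int_{\cH_+(\C^d)\cap S}|\tr(\Delta\theta)|\,\mathrm d\nu(\theta)$. Let $\lambda$ be the image of the finite positive measure $2(\tr\theta)\,\mathrm d\nu(\theta)$ on $\cH_+(\C^d)\cap S$ under $\theta\mapsto\theta/\tr\theta$, and set $\mu=\lambda/d$. Then $\lambda$ is a positive Borel measure on $S(\C^d)$ with $h_K(\Delta)=\int_{S(\C^d)}|\tr(\Delta\rho)|\,\mathrm d\lambda(\rho)$; its barycenter $\Theta=\int\rho\,\mathrm d\lambda$ satisfies $\bra{v}\Theta\ket{v}=2\int_{\cH_+(\C^d)\cap S}|\bra{v}\theta\ket{v}|\,\mathrm d\nu=\int_S|\bra{v}\theta\ket{v}|\,\mathrm d\nu=1$ for every unit $v$, hence $\Theta=\Id$; in particular $\lambda$ has total mass $\tr\Theta=d$, so $\mu$ is a probability measure on $S(\C^d)$ with barycenter $\Id/d$ and $h_K(\Delta)=d\int_{S(\C^d)}|\tr(\Delta\rho)|\,\mathrm d\mu(\rho)$. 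The converse part of Proposition \ref{proposition:POVM-states} then produces a POVM $\mathrm M$ with $\|\cdot\|_{\mathrm M}=h_K$, i.e. $h_{K_{\mathrm M}}=h_K$, so $K_{\mathrm M}=K$. Finally, for the last assertion: if $\mathrm M$ is discrete then $K_{\mathrm M}$ is a zonotope by \eqref{eq:K_M-zonotope}; conversely, if $K$ is a zonotope then uniqueness of $\nu$ forces $\nu$, hence $\lambda$ and $\mu$, to be finitely supported, so the POVM $\mathrm M$ constructed above is discrete.
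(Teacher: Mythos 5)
Your proof is correct, but it follows a genuinely different route from the paper's. For (i)$\Rightarrow$(ii) the paper works with characterization (iii) of Proposition \ref{proposition:zonoids}: it takes a vector measure $\mu$ whose range is $K$, picks $A$ with $\mu(A)=-\Id$, and replaces $\mu$ by $B\mapsto\frac12\left(\mu(B\,\Delta\,A)+\Id\right)$; the range of this modified vector measure is $\frac12(K+\Id)$, which lies in $\cH_+(\C^d)$ precisely because $K\subset[-\Id,\Id]$, and \eqref{eq:K_M-polar} then gives $K_{\mathrm M}=K$ in two lines. You instead work with characterization (ii), and the heart of your argument --- extracting the identities $\int_S|\tr\theta|\,\mathrm d\nu=d$ and $\int_S|\bra{v}\theta\ket{v}|\,\mathrm d\nu=1$ from $\pm\Id\in K\subset[-\Id,\Id]$, then using the equality case in $\E_v|\bra{v}\theta\ket{v}|\geq|\tr\theta|/d$ to force the representing measure $\nu$ onto the semidefinite operators --- replaces the paper's sign-flip trick. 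Your route is longer but buys something: it explicitly constructs the probability measure on $S(\C^d)$ with barycenter $\Id/d$ attached to $K$, thereby re-deriving the correspondence of Proposition \ref{proposition:POVM-states} from the zonoid side (your renormalization $\theta\mapsto\theta/\tr\theta$ is essentially the inverse of the ``rescaling and symmetrization'' map the paper invokes when proving uniqueness there), and your handling of the zonotope case via uniqueness of $\nu$ is more explicit than the paper's ``elementary analogues which we do not repeat''. The details you flag all check out (finiteness of $\nu$; $\tr\theta\geq\|\theta\|_2=1$ on the positive part of the sphere, so the normalization is legitimate; disjointness and evenness giving the factor $2$), and since you only use the existence half of Proposition \ref{proposition:POVM-states} there is no circularity with its uniqueness statement, which the paper only settles after Proposition \ref{proposition:POVM}. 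The (ii)$\Rightarrow$(i) direction differs only cosmetically: you deduce the zonoid property from the integral representation and the remark following Proposition \ref{proposition:zonoids}, where the paper cites Bolker's theorem on ranges of vector measures.
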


\begin{proof}
Let $K$ be a zonoid such that $\pm \Id \in K \subset [-\Id,\Id]$. From Proposition \ref{proposition:zonoids}, there is a vector measure
$\mu$ defined on a $\sigma$-algebra $\mathcal{F}$ on a set $\Omega$, whose range is $K$. Let $A \in \mathcal{F}$ such that $\mu(A) = -\Id$.
The vector measure $\mathrm{M}$ defined for $B \in \mathcal{F}$ by
\[ \mathrm{M}(B) = \frac{1}{2} \left( \mu(B \setminus A) - \mu( B \cap A) \right)= \frac{1}{2}\left( \mu( B \Delta A) + \Id \right) \]
is a POVM. Indeed, its range, which equals $\frac{1}{2} ( K +\Id)$, lies inside the positive semidefinite cone, and contains $\Id$. We get from
\eqref{eq:K_M-polar} that $K_\mathrm{M}=K$.

Conversely, for any POVM $\mathrm{M}$, formula \eqref{eq:K_M-polar} implies that $\pm \Id \in K \subset [-\Id,\Id]$.
The fact that $K$ is a zonoid follows, using
the general fact that the convex hull of the range of a vector measure is a zonoid (see \cite{Bolker}, Theorem 1.6).

In the case of zonotopes and discrete POVMs, these arguments have more elementary analogues which we do not repeat.
\end{proof}

We can now argue about the uniqueness part in Proposition \ref{proposition:POVM-states}. This is indeed a consequence
of the uniqueness of the measure associated to a zonoid in Proposition \ref{proposition:zonoids}: after rescaling and symmetrization, a measure
$\mu$ on $S(\C^d)$ satisfying
\eqref{eq:support-function-POVM} naturally induces a measure $\nu$ on the Hilbert--Schmidt sphere satisfying \eqref{eq:support-function-zonoid}
for $K=K_{\mathrm{M}}$.

Another characterization of zonoids involves the Banach space $L^1=L^1([0,1])$. A symmetric convex body $K$
is a zonoid if and only if the normed space $(\R^n,h_K)$ embeds isometrically into $L^1$. Therefore, Proposition \ref{proposition:POVM} can be restated
as a characterization of distinguishability norms on $\mathcal{H}(\C^d)$.

\begin{corollary}
Let $\|\cdot\|$ be a norm on $\mathcal{H}(\C^d)$. The following are equivalent
\begin{enumerate}
 \item There is POVM $\mathrm{M}$ on $\C^d$ such that $\|\cdot\|=\|\cdot\|_{\mathrm{M}}$.
 \item The normed space $(\mathcal{H}(\C^d),\|\cdot\|)$ is isometric to a subspace of $L^1$, and the following inequality is satisfied for any
 $\Delta \in \mathcal{H}(\C^d)$
 \[ | \tr \Delta | \leq \|\Delta\| \leq \tr | \Delta | .\]
\end{enumerate}
\end{corollary}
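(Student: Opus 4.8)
The plan is to deduce this corollary directly from Proposition~\ref{proposition:POVM} by translating the geometric conditions on $K$ into analytic conditions on the norm whose unit ball is $K^\circ$. Throughout, I take $\|\cdot\|$ a norm on $\cH(\C^d)$ (so its unit ball $B = \{\|\cdot\|\leq 1\}$ is a symmetric convex body), and I set $K = B^\circ$, which is then also a symmetric convex body; the correspondence $B \leftrightarrow K$ is an involution by the bipolar theorem. The key dictionary facts I will invoke are: (a) the inclusion $K \subset [-\Id,\Id]$ is equivalent, by polarity, to $B \supset [-\Id,\Id]^\circ$, and one computes that $[-\Id,\Id]^\circ$, taken inside $(\cH(\C^d), \tr(\cdot\,\cdot))$, is exactly the trace-norm ball $\{\Delta : \tr|\Delta| \leq 1\}$ --- indeed $[-\Id,\Id]$ is the self-adjoint part of the operator-norm ball, and the operator norm and trace norm are dual with respect to the Hilbert--Schmidt pairing; hence $K \subset [-\Id,\Id] \iff \|\Delta\| \leq \tr|\Delta|$ for all $\Delta$; (b) the condition $\pm\Id \in K$ is equivalent, again by the definition of the polar, to $\tr(\Delta \cdot (\pm\Id)) \leq 1$ whenever $\|\Delta\| \leq 1$, i.e. to $|\tr\Delta| \leq \|\Delta\|$ for all $\Delta$; (c) the condition that $K$ is a zonoid is equivalent, by the $L^1$-embedding characterization of zonoids recalled just before the corollary, to $(\cH(\C^d), h_K) = (\cH(\C^d), \|\cdot\|)$ embedding isometrically into $L^1$ --- here I use that $h_K = h_{B^\circ}$ is precisely the Minkowski gauge of $B$, namely $\|\cdot\|$ itself, since $B$ is a symmetric convex body.

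Concretely I would organize it as follows. First, observe that Proposition~\ref{proposition:POVM} says: there is a POVM $\mathrm{M}$ with $K_{\mathrm{M}} = K$ if and only if $K$ is a symmetric zonoid with $\pm\Id \in K \subset [-\Id,\Id]$. Second, recall that $B_{\mathrm{M}} = (K_{\mathrm{M}})^\circ$ and $K_{\mathrm{M}} = (B_{\mathrm{M}})^\circ$, so that "there is a POVM $\mathrm{M}$ with $\|\cdot\| = \|\cdot\|_{\mathrm{M}}$" is the same as "there is a POVM $\mathrm{M}$ with $B = B_{\mathrm{M}}$", which is the same as "$K := B^\circ = K_{\mathrm{M}}$ for some POVM $\mathrm{M}$". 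Third, apply Proposition~\ref{proposition:POVM} and then rewrite each of its three conditions on $K$ using (a), (b), (c) above: "$K$ symmetric" is automatic since $\|\cdot\|$ is a norm; "$K$ a zonoid" becomes the $L^1$-embedding statement; "$\pm\Id \in K$" becomes $|\tr\Delta| \leq \|\Delta\|$; "$K \subset [-\Id,\Id]$" becomes $\|\Delta\| \leq \tr|\Delta|$. Assembling, the two displayed inequalities in (2) together say exactly $\pm\Id \in K \subset [-\Id,\Id]$, and this completes the equivalence. I would also note in passing, to match the paper's own remark that $K_{\mathrm{M}}$ has nonempty interior iff $\mathrm{M}$ is informationally complete, that requiring $\|\cdot\|$ to be a genuine norm (rather than a seminorm) is precisely what forces $B$ to be a bounded body with nonempty interior, hence $K$ likewise, consistent with the informational-completeness picture; but this is a side comment, not needed for the stated equivalence.

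The only real content beyond bookkeeping is the identification $[-\Id,\Id]^\circ = \{\tr|\Delta|\leq 1\}$ inside the Hilbert--Schmidt Euclidean structure, so I expect that small computation to be the main (though minor) obstacle: one must be careful that the polar is taken with the $\tr(AB)$ pairing, recall that $[-\Id,\Id]$ is by definition the self-adjoint part of the $\|\cdot\|_\infty$ unit ball, and use the trace/operator-norm duality $\|\Delta\|_1 = \sup\{\tr(\Delta A) : \|A\|_\infty \leq 1\}$ together with the fact that for self-adjoint $\Delta$ the supremum is attained at a self-adjoint $A$ (diagonalize $\Delta$ and take $A$ the corresponding sign matrix). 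Everything else --- the polarity translations in (a) and (b), and invoking the $L^1$ characterization of zonoids for (c) --- is a direct application of definitions and of results already stated in the excerpt. I would write the proof in three or four sentences: reduce to Proposition~\ref{proposition:POVM} via the polar duality $B \leftrightarrow K$, dispatch the zonoid condition by the $L^1$-embedding fact, and dispatch $\pm\Id\in K\subset[-\Id,\Id]$ by the two polarity computations, flagging the $[-\Id,\Id]^\circ$ identity explicitly.
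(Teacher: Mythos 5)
Your proposal is correct and is precisely the argument the paper intends: the corollary is presented there without proof, as an immediate restatement of Proposition~\ref{proposition:POVM} via the $L^1$-embedding characterization of zonoids, and your polarity translations ($h_{B^\circ}=\|\cdot\|$, $\pm\Id\in K\iff|\tr\Delta|\leq\|\Delta\|$, and $K\subset[-\Id,\Id]\iff\|\Delta\|\leq\tr|\Delta|$ via $[-\Id,\Id]^\circ=\{\|\cdot\|_1\leq 1\}$) supply exactly the bookkeeping the authors leave implicit. No gaps.
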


\section{Local POVMs and tensor products of zonoids}

\label{section:tensorizing}

\subsection{Tensor products for zonoids}

There is a natural notion of tensor product for subspaces of $L^1$ which appeared in the Banach space literature (see e.g. \cite{FJ}).

\begin{definition} \label{definition:1-tensor}
Let $X,Y$ be two Banach spaces which can be embedded isometrically into $L^1$, i.e. such that there exist linear
norm-preserving maps $i : X \to L^1(\mu)$ and $j : Y \to L^1(\nu)$. Then, the $1$-tensor product of $X$ and $Y$ is defined as the completion of the
algebraic
tensor product $X \otimes Y$ for the norm
\[ \left\| \sum_k x_k \otimes y_k \right\|_{X \otimes^1 Y} = \int \int \left| \sum_k i(x_k)(s)j(y_k)(t) \right| \, \mathrm{d}\mu(s)
\mathrm{d}\nu(t) .\]
\end{definition}

It can be checked that the norm above is well-defined and does not depend on the particular choice of the embeddings $i,j$ (see e.g.
\cite{FJ} or Lemma 2 in \cite{RosenthalSzarek}).

In the finite-dimensional case, subspaces of $L^1$ are connected to zonoids. Therefore, Definition \ref{definition:1-tensor}
leads naturally to a notion of tensor product for (symmetric) zonoids.

\begin{definition} \label{def:otimes_Z}
Let $K \subset\R^m$ and $L \subset\R^n$ be two symmetric zonoids, and suppose that $\nu_K$ and $\nu_L$ are Borel measures on $S^{m-1}$ and $S^{n-1}$
respectively, such that for any $x \in \R^m$ and $y \in \R^n$,
\[ h_K(x) = \int_{S^{m-1}} | \langle x,\theta \rangle | \, \mathrm{d}\nu_K(\theta)\ \text{ and }\ h_L(y) = \int_{S^{n-1}}
| \langle y,\phi \rangle | \, \mathrm{d}\nu_L(\phi). \]
The zonoid tensor product of $K$ and $L$ is defined as the zonoid $K \otimes^Z L \subset \R^n \otimes \R^m$ whose support function satisfies
\begin{equation} \label{eq:tensorizing-zonoids} h_{K \otimes^Z L}(z) =\int_{S^{m-1}}\int_{S^{n-1}}
| \langle z,\theta\otimes\phi \rangle | \, \mathrm{d}\nu_K(\theta)\mathrm{d}\nu_L(\phi) \end{equation}
for any $z \in \R^m \otimes \R^n$.
\end{definition}

As in Definition \ref{definition:1-tensor}, this construction does not depend on the choice of the measures $\nu_K$ and $\nu_L$. This can be seen
directly: given $z \in \R^m \otimes \R^n$ and $\phi\in S^{n-1}$, set $\widetilde{z}(\phi)=\left(\Id\otimes\bra{\phi}\right)(z)$. We have
\begin{equation} \label{eq:ztilde-phi} h_{K \otimes^Z L}(z) =
\int_{S^{n-1}}h_K(\widetilde{z}(\phi))\, \mathrm{d}\nu_L(\phi),
\end{equation}
and therefore $K \otimes^Z L$ does not depend on $\nu_K$. The same argument applies for $\nu_L$.

In the case
of zonotopes, the zonoid tensor product takes a simpler form :
\[ \left( \sum_i \conv \{ \pm v_i \} \right) \otimes^Z \left( \sum_j \conv \{ \pm w_j \} \right)
= \sum_{i} \sum_{j} \conv \{ \pm v_i \otimes w_j \} .\]

Here is a first simple property of the zonoid tensor product.

\begin{lemma}
Given symmetric zonoids $K,L$ and linear maps $S,T$, we have
\[ S(K) \otimes^Z T(L) = (S \otimes T) ( K \otimes^Z L) \]
\end{lemma}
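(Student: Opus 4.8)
The plan is to verify the claimed identity $S(K) \otimes^Z T(L) = (S \otimes T)(K \otimes^Z L)$ by comparing support functions, since two symmetric closed convex sets coincide precisely when their support functions agree. First I would fix measures $\nu_K$ on $S^{m-1}$ and $\nu_L$ on $S^{n-1}$ representing $K$ and $L$ as in Definition \ref{def:otimes_Z}, so that \eqref{eq:tensorizing-zonoids} gives a formula for $h_{K \otimes^Z L}$. I would also record the elementary fact that for a linear map $S : \R^m \to \R^p$ and a symmetric convex set $K$, one has $h_{S(K)}(x) = h_K(S^* x)$, where $S^*$ denotes the adjoint (transpose) with respect to the ambient Euclidean structures; this follows immediately from $\langle x, Sy \rangle = \langle S^* x, y \rangle$.

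The computation then runs as follows. Take $z \in \R^p \otimes \R^q$ (the target space of $S \otimes T$). Using the adjoint identity for the linear map $S \otimes T$, whose adjoint is $S^* \otimes T^*$, we get
\[ h_{(S \otimes T)(K \otimes^Z L)}(z) = h_{K \otimes^Z L}\big( (S^* \otimes T^*) z \big) = \int_{S^{m-1}} \int_{S^{n-1}} \big| \langle (S^* \otimes T^*) z, \theta \otimes \phi \rangle \big| \, \mathrm{d}\nu_K(\theta) \, \mathrm{d}\nu_L(\phi). \]
On the other hand, $S(K)$ is a zonoid with support function $h_{S(K)}(x) = \int | \langle S^* x, \theta \rangle | \, \mathrm{d}\nu_K(\theta) = \int | \langle x, S\theta \rangle | \, \mathrm{d}\nu_K(\theta)$, i.e. $S(K)$ is represented by the pushforward of $\nu_K$ under $\theta \mapsto S\theta$ (after normalizing/symmetrizing onto the sphere, which by the remark following Proposition \ref{proposition:zonoids} and the independence noted after Definition \ref{def:otimes_Z} does not affect the tensor product). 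Similarly for $T(L)$. Plugging these representing measures into \eqref{eq:tensorizing-zonoids} for $S(K) \otimes^Z T(L)$ yields exactly
\[ h_{S(K) \otimes^Z T(L)}(z) = \int_{S^{m-1}} \int_{S^{n-1}} \big| \langle z, (S\theta) \otimes (T\phi) \rangle \big| \, \mathrm{d}\nu_K(\theta) \, \mathrm{d}\nu_L(\phi), \]
and the two expressions match because $\langle (S^* \otimes T^*) z, \theta \otimes \phi \rangle = \langle z, (S \otimes T)(\theta \otimes \phi) \rangle = \langle z, (S\theta) \otimes (T\phi) \rangle$. Since the support functions agree for all $z$, the two zonoids are equal.

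The only subtlety — and the one point that needs a sentence of care rather than a formula — is the bookkeeping about representing measures: Definition \ref{def:otimes_Z} is phrased in terms of even measures on the unit sphere, whereas the pushforward of $\nu_K$ under $S$ lives on $\R^m$ and need not be supported on $S^{m-1}$. This is harmless because, as observed in the remark after Proposition \ref{proposition:zonoids}, any measure on $\R^m$ for which linear forms are integrable represents a zonoid via the same integral formula, and because the discussion after Definition \ref{def:otimes_Z} (equation \eqref{eq:ztilde-phi}) shows $K \otimes^Z L$ depends only on the zonoids, not on the chosen representing measures. So one may freely use the pushforward measures. I expect this normalization remark to be the main (very minor) obstacle; the algebraic heart of the argument is the one-line identity for the adjoint of a tensor product of maps, which is routine.
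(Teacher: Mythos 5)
Your proof is correct. The paper states this lemma without any proof (it is offered as a ``first simple property''), and your support-function computation --- reducing everything to the identities $h_{S(K)}(x)=h_K(S^*x)$ and $(S\otimes T)^*=S^*\otimes T^*$, with the measure-normalization point correctly discharged via the remark after Proposition \ref{proposition:zonoids} and the independence of $\otimes^Z$ from the representing measures --- is precisely the routine verification the authors implicitly rely on.
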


Additionally, and crucially for the applications we have in mind, the zonoid tensor product is compatible with inclusions.

\begin{lemma}
\label{lemma:tensorizing-zonoids}
Let $K,K'$ be two symmetric zonoids in $\R^m$ with $K\subset K'$, and let $L,L'$ be two symmetric zonoids in $\R^n$ with $L\subset L'$. Then
\[ K \otimes^Z L \subset K' \otimes^Z L'. \]
\end{lemma}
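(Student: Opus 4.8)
The plan is to change one tensor factor at a time and then use monotonicity of the integral in the slicing formula \eqref{eq:ztilde-phi}. Since the zonoid tensor product is symmetric in its two arguments and since
\[ K \otimes^Z L \subset K' \otimes^Z L \subset K' \otimes^Z L' \]
would give the claim by transitivity of inclusion, it suffices to prove the following: if $K \subset K'$ are symmetric zonoids in $\R^m$ and $L$ is a symmetric zonoid in $\R^n$, then $K \otimes^Z L \subset K' \otimes^Z L$.

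To do this, I would first recall that $K \subset K'$ is equivalent to $h_K \leq h_{K'}$ pointwise on $\R^m$, so I only need to compare the support functions of $K \otimes^Z L$ and $K' \otimes^Z L$. Fix, via Proposition \ref{proposition:zonoids}(ii), a Borel positive even measure $\nu_L$ on $S^{n-1}$ representing $L$. Formula \eqref{eq:ztilde-phi}, which — as observed right after Definition \ref{def:otimes_Z} — does not depend on the measure chosen to represent the first factor, then gives, for every $z \in \R^m \otimes \R^n$,
\[ h_{K \otimes^Z L}(z) = \int_{S^{n-1}} h_K(\widetilde z(\phi)) \, \mathrm{d}\nu_L(\phi), \qquad h_{K' \otimes^Z L}(z) = \int_{S^{n-1}} h_{K'}(\widetilde z(\phi)) \, \mathrm{d}\nu_L(\phi), \]
with $\widetilde z(\phi) = (\Id \otimes \bra{\phi})(z) \in \R^m$, the point being that one uses the \emph{same} $\nu_L$ in both identities. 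Since $h_K(\widetilde z(\phi)) \leq h_{K'}(\widetilde z(\phi))$ for every $\phi$ and $\nu_L$ is nonnegative, the integrals compare termwise, whence $h_{K \otimes^Z L} \leq h_{K' \otimes^Z L}$, i.e. $K \otimes^Z L \subset K' \otimes^Z L$. Applying the same reasoning after exchanging the roles of the two factors — using the analogue of \eqref{eq:ztilde-phi} obtained by slicing along the first factor, i.e. $h_{K' \otimes^Z L}(z) = \int_{S^{m-1}} h_L\big((\bra{\theta} \otimes \Id)(z)\big)\, \mathrm{d}\nu_{K'}(\theta)$ for a fixed representing measure $\nu_{K'}$ of $K'$, together with $h_L \leq h_{L'}$ — yields $K' \otimes^Z L \subset K' \otimes^Z L'$, and chaining the two inclusions finishes the proof.

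I do not expect a genuine obstacle here: the whole content is the slicing identity \eqref{eq:ztilde-phi} plus monotonicity of integration against a nonnegative measure. The one point to be careful about is invoking \eqref{eq:ztilde-phi} with a measure that is common to the two sides being compared; this is legitimate precisely because of the independence of the construction from the choice of the representing measures established after Definition \ref{def:otimes_Z}, and one should make sure the representing measures used (here $\nu_L$ and $\nu_{K'}$) are taken to be the nonnegative ones furnished by Proposition \ref{proposition:zonoids}(ii).
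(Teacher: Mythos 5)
Your proposal is correct and follows essentially the same route as the paper: reduce to changing one factor at a time by transitivity, then compare support functions via the slicing identity \eqref{eq:ztilde-phi} and monotonicity of integration against the fixed nonnegative representing measure. The paper's own proof is just a more terse version of exactly this argument (and also notes the statement is a special case of Lemma 2 in \cite{RosenthalSzarek}).
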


\begin{proof}
This is a special case of Lemma 2 in \cite{RosenthalSzarek}. Here is a proof in the language of zonoids.
We may assume that $L=L'$, the general case following then by arguing that
$K \otimes^Z L \subset K' \otimes^Z L \subset K' \otimes^Z L'$.

In terms of support functions, we are thus reduced to showing that the inequality $h_K \leq h_{K'}$ implies the inequality
$h_{K \otimes^Z L} \leq h_{K' \otimes^Z L}$,
which is an easy consequence of \eqref{eq:ztilde-phi}.
\end{proof}

Suppose that $(X,\|\cdot\|_X)$ and $(Y,\|\cdot\|_Y)$ are Banach spaces with Euclidean norms, i.e. induced by some inner
products $\scalar{\cdot}{\cdot}_X$ and $\scalar{\cdot}{\cdot}_Y$. Their Euclidean tensor product $X \otimes^2 Y$ is defined (after completion)
by the norm induced by the inner product on the algebraic tensor product which satisfies
\[ \scalar{x \otimes y}{x' \otimes y'} = \scalar{x}{x'}_X \scalar{y}{y'}_Y. \]
It turns out that, for Euclidean norms, the tensor norms $\otimes^1$ and $\otimes^2$ are equivalent.

\begin{proposition}[see \cite{RosenthalSzarek,Bennett}] \label{proposition:2-vs-1}
If $X$ and $Y$ are two Banach spaces equipped with Euclidean norms, then
\[ \sqrt{\frac{2}{\pi}} \| \cdot \|_{X \otimes^2 Y} \leq \| \cdot \|_{X \otimes^1 Y}
\leq \| \cdot \|_{X \otimes^2 Y} .\]
\end{proposition}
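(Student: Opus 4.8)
The plan is to reduce the statement to a one-dimensional computation, exploiting that for Euclidean spaces every finite-dimensional configuration can be realized by Gaussian variables. First I would recall that a Euclidean space $X$ embeds isometrically into $L^1$ via Gaussian random variables: if $(g_i)$ are i.i.d. standard Gaussians, the map sending an orthonormal basis vector $e_i$ to $\sqrt{\pi/2}\, g_i$ is an isometry from $X$ into $L^1(\Omega,\P)$, because $\E|g| = \sqrt{2/\pi}$ and, more to the point, for any vector $x=\sum a_i e_i$ one has $\E\big|\sum a_i g_i\big| = \sqrt{2/\pi}\,\big(\sum a_i^2\big)^{1/2} = \sqrt{2/\pi}\,\|x\|_X$. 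The key feature is that this works simultaneously for \emph{all} vectors, since linear combinations of jointly Gaussian variables are Gaussian.

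Next I would use this to compute the $\otimes^1$ norm of an arbitrary tensor $z = \sum_k x_k \otimes y_k \in X \otimes Y$. Picking orthonormal bases of (the finite-dimensional spans of) the $x_k$'s and $y_k$'s, write $z$ as a matrix $(z_{ij})$, and use independent Gaussian families $(g_i)$ on the $X$-side and $(h_j)$ on the $Y$-side, each rescaled by $\sqrt{\pi/2}$. By Definition \ref{definition:1-tensor} and Fubini,
\[ \|z\|_{X \otimes^1 Y} = \frac{\pi}{2}\, \E_g \E_h \Big| \sum_{i,j} z_{ij} g_i h_j \Big|. \]
For each fixed realization of $(g_i)$, the inner sum $\sum_{i,j} z_{ij} g_i h_j = \sum_j \big(\sum_i z_{ij} g_i\big) h_j$ is, conditionally on $g$, a centered Gaussian in the variables $h_j$ with variance $\sum_j \big(\sum_i z_{ij} g_i\big)^2 = \| Z^{T} g\|_2^2$, where $Z=(z_{ij})$. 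Hence $\E_h\big|\sum_{i,j} z_{ij} g_i h_j\big| = \sqrt{2/\pi}\, \|Z^{T} g\|_2$, and so
\[ \|z\|_{X \otimes^1 Y} = \sqrt{\frac{\pi}{2}}\, \E_g \|Z^{T} g\|_2, \qquad \text{while} \qquad \|z\|_{X \otimes^2 Y} = \|Z\|_2 = \Big(\sum_{i,j} z_{ij}^2\Big)^{1/2}, \]
the latter because $\scalar{x\otimes y}{x'\otimes y'} = \scalar{x}{x'}_X\scalar{y}{y'}_Y$ makes $\{e_i\otimes f_j\}$ an orthonormal basis, so $\|z\|_{X\otimes^2 Y}$ is the Hilbert--Schmidt (Frobenius) norm of $Z$.

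It then remains to compare $\sqrt{\pi/2}\,\E_g\|Z^{T}g\|_2$ with $\|Z\|_2$ for a fixed matrix $Z$, i.e.\ to prove $\sqrt{2/\pi}\,\|Z\|_2 \le \E_g\|Z^{T}g\|_2 \le \|Z\|_2$. The upper bound follows from Jensen: $\E_g\|Z^{T}g\|_2 \le (\E_g\|Z^{T}g\|_2^2)^{1/2} = (\sum_{i,j}z_{ij}^2)^{1/2}=\|Z\|_2$, since $\E_g\|Z^{T}g\|_2^2 = \tr(Z Z^{T}) = \|Z\|_2^2$. For the lower bound I would diagonalize: writing the singular values of $Z$ as $s_1,\dots,s_r$, rotational invariance of the Gaussian vector $g$ gives $\E_g\|Z^{T}g\|_2 = \E\big(\sum_\ell s_\ell^2 g_\ell^2\big)^{1/2}$ with $(g_\ell)$ i.i.d.\ standard Gaussian; this is a norm of the vector $(s_\ell g_\ell)_\ell$, and comparing with the $\ell^1$-type bound $\big(\sum s_\ell^2 g_\ell^2\big)^{1/2} \ge \sum_\ell (s_\ell^2/\|s\|_2)\, g_\ell^2$ (Cauchy--Schwarz, with $\|s\|_2 = \|Z\|_2$) and taking expectations gives $\E_g\|Z^{T}g\|_2 \ge \|Z\|_2^{-1}\sum_\ell s_\ell^2\,\E g_\ell^2 = \|Z\|_2$; that overshoots, so instead one should use the sharp comparison between the $L^1$ and $L^2$ norms of a Gaussian, which is exactly the constant $\sqrt{2/\pi}$ — and here the main obstacle is to see that the \emph{worst case} is a rank-one $Z$ (a single segment), where $\E_g|s_1 g_1| = s_1\sqrt{2/\pi} = \sqrt{2/\pi}\,\|Z\|_2$, attaining the constant. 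I would justify this by noting $\big(\sum_\ell s_\ell^2 g_\ell^2\big)^{1/2}$ is a convex function of $(g_\ell^2)$ whose expectation, under the constraint $\sum s_\ell^2 = \|Z\|_2^2$ fixed, is minimized when the mass is concentrated on one coordinate (this is the content of the Gaussian case of Proposition \ref{proposition:2-vs-1} as recorded in \cite{RosenthalSzarek,Bennett}); concretely, one can write $\E_g\|Z^T g\|_2$ as an integral of $\|\cdot\|_2$ against the law of $(s_\ell |g_\ell|)$ and appeal to the Khintchine-type inequality $\|\sum a_\ell g_\ell\|_{L^1} \ge \sqrt{2/\pi}\,\|\sum a_\ell g_\ell\|_{L^2}$ applied on the auxiliary probability space — whichever route, the inequality $\E_g\|Z^T g\|_2 \ge \sqrt{2/\pi}\,\|Z\|_2$ is the crux and cites the references. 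This completes the chain $\sqrt{2/\pi}\,\|z\|_{X\otimes^2 Y} \le \|z\|_{X\otimes^1 Y} \le \|z\|_{X\otimes^2 Y}$.
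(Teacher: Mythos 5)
The paper offers no proof of this proposition (it is simply quoted from \cite{RosenthalSzarek,Bennett}), and your Gaussian realization is indeed the standard route: the isometric embedding $e_i\mapsto\sqrt{\pi/2}\,g_i$, the identity $\|z\|_{X\otimes^1Y}=\tfrac{\pi}{2}\,\E_g\E_h\big|\sum_{i,j}z_{ij}g_ih_j\big|=\sqrt{\pi/2}\,\E_g\|Z^{T}g\|_2$, the identification $\|z\|_{X\otimes^2Y}=\|Z\|_2$, and the sharp range $\sqrt{2/\pi}\,\|Z\|_2\leq\E_g\|Z^{T}g\|_2\leq\|Z\|_2$ (Jensen for the upper bound; for the lower bound, \emph{concavity} --- not convexity, as you write --- of $u\mapsto\E\big(\sum_\ell u_\ell g_\ell^2\big)^{1/2}$ on the simplex $\{u_\ell\geq 0,\ \sum u_\ell=\|Z\|_2^2\}$ places the minimum at a vertex, i.e.\ at rank-one $Z$; the auxiliary pointwise inequality $\big(\sum s_\ell^2g_\ell^2\big)^{1/2}\geq\sum_\ell(s_\ell^2/\|s\|_2)g_\ell^2$ that you briefly invoke is false, but you discard it anyway). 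All of these intermediate statements are correct.

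The gap is in the last step: combining $\|z\|_{X\otimes^1Y}=\sqrt{\pi/2}\,\E_g\|Z^{T}g\|_2$ with $\sqrt{2/\pi}\,\|Z\|_2\leq\E_g\|Z^{T}g\|_2\leq\|Z\|_2$ yields
\[ \|z\|_{X\otimes^2Y}\ \leq\ \|z\|_{X\otimes^1Y}\ \leq\ \sqrt{\pi/2}\;\|z\|_{X\otimes^2Y}, \]
not the chain you announce; you have silently dropped the prefactor $\sqrt{\pi/2}$ in the final sentence. This discrepancy is not repairable, because the inequality $\|\cdot\|_{X\otimes^1Y}\leq\|\cdot\|_{X\otimes^2Y}$ is actually false: for $z=e_1\otimes f_1+e_2\otimes f_2$ in $\ell_2^2\otimes\ell_2^2$ your own formula gives $\|z\|_{X\otimes^1Y}=\tfrac{\pi}{2}\,\E|g_1h_1+g_2h_2|=\tfrac{\pi}{2}>\sqrt{2}=\|z\|_{X\otimes^2Y}$. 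What your computation genuinely establishes is the statement with the two norms interchanged, namely $\sqrt{2/\pi}\,\|\cdot\|_{X\otimes^1Y}\leq\|\cdot\|_{X\otimes^2Y}\leq\|\cdot\|_{X\otimes^1Y}$, which is the correct form of the Rosenthal--Szarek/Bennett comparison (and suggests the proposition as printed has $\otimes^1$ and $\otimes^2$ transposed). So: right method, correct intermediate formulas, but the final constant-chasing is erroneous, and once it is done correctly your argument shows that the target inequality as literally stated cannot hold.
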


\subsection{Local POVMs}

In quantum mechanics, when a system is shared by several parties, the underlying global Hilbert space is the tensor product of the local Hilbert spaces
corresponding to each of the subsystems. A physically relevant class of POVMs on such a multipartite system is the one of local POVMs,
describing the situation where each party is only able to perform measurements on his own subsystem.

\begin{definition}
For $i=1,2$, let $\mathrm{M}_i$ denote a POVM on $\C^{d_i}$, defined on a $\sigma$-algebra $\mathcal{F}_i$ on a set $\Omega_i$. The tensor POVM
$\mathrm{M}_1 \otimes \mathrm{M}_2$ is the unique map defined on the product $\sigma$-algebra $\mathcal{F}_1 \otimes \mathcal{F}_2$ on $\Omega_1
\times \Omega_2$, and such that
\[ (\mathrm{M}_1 \otimes \mathrm{M}_2) ( A_1 \times A_2) = \mathrm{M}_1(A_1) \otimes \mathrm{M}_2(A_2) \]
for every $A_1 \in \mathcal{F}_1, A_2 \in \mathcal{F}_2$. By construction, $\mathrm{M}_1 \otimes \mathrm{M}_2$ is a POVM on $\C^{d_1} \otimes \C^{d_2}$.
\end{definition}

In the discrete case, this definition becomes more transparent: if $\mathrm{M}=(M_i)_{1\leq i\leq m}$ and
$\mathrm{N}=(N_j)_{1\leq j\leq n}$ are discrete POVMs, then $\mathrm{M}\otimes\mathrm{N}$ is also discrete, and
\[ \mathrm{M}\otimes\mathrm{N} = (M_i\otimes N_j)_{1\leq i\leq m,1\leq j\leq n} .\]

POVMs on $\C^{d_1} \otimes \C^{d_2}$ which can be decomposed as tensor product of two POVMs are called local POVMs.
If we identify the POVMs $\mathrm{M}_1$ and $\mathrm{M}_2$ with measures $\mu_1$ and $\mu_2$ as in Proposition \ref{proposition:POVM-states},
then the measure corresponding to $\mathrm{M}_1 \otimes \mathrm{M}_2$ is the image of the product measure $\mu_1 \times \mu_2$ under the map
$(\rho,\sigma) \mapsto \rho \otimes \sigma$. It thus follows that

\begin{proposition} \label{proposition:tensor-POVM}
If $\mathrm{M}$ and $\mathrm{N}$ are two POVMs, then
$ \|\cdot\|_{\mathrm{M} \otimes \mathrm{N}} = \|\cdot\|_{\mathrm{M}} \otimes^1 \|\cdot\|_{\mathrm{N}} $ and
$ K_{\mathrm{M} \otimes \mathrm{N}} = K_{\mathrm{M}} \otimes^Z K_{\mathrm{N}} $.
\end{proposition}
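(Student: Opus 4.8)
The plan is to verify the statement via the measure-theoretic description of POVMs furnished by Proposition~\ref{proposition:POVM-states}, reducing everything to a change-of-variables computation for integrals against product measures. Let $\mu$ and $\nu$ be the Borel probability measures on $S(\C^{d_1})$ and $S(\C^{d_2})$ associated to $\mathrm{M}$ and $\mathrm{N}$ respectively, with barycenters $\Id/d_1$ and $\Id/d_2$. First I would check that the measure associated to $\mathrm{M}\otimes\mathrm{N}$ is the pushforward of the product measure $\mu\times\nu$ under the map $(\rho,\sigma)\mapsto\rho\otimes\sigma$: this follows because on product sets $A_1\times A_2$ one has $(\mathrm{M}\otimes\mathrm{N})(A_1\times A_2)=\mathrm{M}(A_1)\otimes\mathrm{M}(A_2)$, the polar-decomposition densities multiply, and the total variation $|\mathrm{M}\otimes\mathrm{N}|$ equals $\tr(\mathrm{M}\otimes\mathrm{N})$ with total mass $d_1 d_2$; uniqueness in Proposition~\ref{proposition:POVM-states} then pins down this measure. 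Here I should also note $\rho\otimes\sigma\in S(\C^{d_1}\otimes\C^{d_2})$ and that the barycenter of the pushforward is $(\Id/d_1)\otimes(\Id/d_2)=\Id/(d_1 d_2)$, consistent with the requirement.

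Next, for $\Delta\in\cH(\C^{d_1}\otimes\C^{d_2})$, I would write out $\|\Delta\|_{\mathrm{M}\otimes\mathrm{N}}$ using \eqref{eq:support-function-POVM} applied to $\mathrm{M}\otimes\mathrm{N}$ and the pushforward description of its measure:
\[
\|\Delta\|_{\mathrm{M}\otimes\mathrm{N}} = d_1 d_2 \int_{S(\C^{d_1})}\int_{S(\C^{d_2})} \left| \tr\big(\Delta\, (\rho\otimes\sigma)\big)\right|\, \mathrm{d}\mu(\rho)\,\mathrm{d}\nu(\sigma).
\]
On the other side, I would unwind Definition~\ref{definition:1-tensor} concretely using the canonical $L^1$-embeddings of $(\cH(\C^{d_1}),\|\cdot\|_{\mathrm{M}})$ and $(\cH(\C^{d_2}),\|\cdot\|_{\mathrm{N}})$ that come out of Proposition~\ref{proposition:POVM-states}: the map $i$ sends $A\mapsto\big(\rho\mapsto d_1\tr(A\rho)\big)\in L^1(\mu)$ and similarly $j$ sends $B\mapsto\big(\sigma\mapsto d_2\tr(B\sigma)\big)\in L^1(\nu)$. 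Then for a simple tensor $\Delta=\sum_k A_k\otimes B_k$ one has $\sum_k i(A_k)(\rho)\, j(B_k)(\sigma) = d_1 d_2 \sum_k \tr(A_k\rho)\tr(B_k\sigma) = d_1 d_2\, \tr\big((\sum_k A_k\otimes B_k)(\rho\otimes\sigma)\big)$, using $\tr((A\otimes B)(\rho\otimes\sigma))=\tr(A\rho)\tr(B\sigma)$. Integrating $|\cdot|$ against $\mu\times\nu$ gives exactly the displayed double integral, so $\|\Delta\|_{\mathrm{M}\otimes\mathrm{N}}=\|\Delta\|_{\mathrm{M}}\otimes^1\|\mathrm{\cdot}\|_{\mathrm{N}}$ on simple tensors, hence on all of $\cH(\C^{d_1})\otimes\cH(\C^{d_2})=\cH(\C^{d_1}\otimes\C^{d_2})$ by linearity and density. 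The second identity $K_{\mathrm{M}\otimes\mathrm{N}}=K_{\mathrm{M}}\otimes^Z K_{\mathrm{N}}$ is then essentially a restatement: passing to unit balls and polars, the norm identity says that the unit ball of $\|\cdot\|_{\mathrm{M}\otimes\mathrm{N}}$ is the unit ball of the $\otimes^1$ norm, whose polar is by definition $K_{\mathrm{M}}\otimes^Z K_{\mathrm{N}}$; I would make this last step precise by comparing \eqref{eq:tensorizing-zonoids} (with $\nu_K,\nu_L$ the symmetrized images of $\mu,\nu$ on the Hilbert--Schmidt sphere, exactly as in the uniqueness discussion after Proposition~\ref{proposition:POVM}) with the support function of $K_{\mathrm{M}\otimes\mathrm{N}}$ computed from the distinguishability norm.

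The step I expect to be the main obstacle is the bookkeeping in identifying the correct $L^1$-embeddings and their normalizations — in particular keeping track of the factors $d_1$, $d_2$, and $d_1 d_2$, and being careful that Definition~\ref{definition:1-tensor}'s claim of independence from the choice of embeddings is what lets me use these specific ones rather than having to argue from an abstract isometry. A secondary subtlety is the passage between the "measure on states" picture and the "even measure on the Hilbert--Schmidt sphere" picture needed to match \eqref{eq:tensorizing-zonoids} verbatim; this is the same rescaling-and-symmetrization already invoked in the excerpt for the uniqueness argument, so I would simply cite that reduction rather than redo it. Everything else is routine: Fubini to split the product integral, the elementary trace identity for tensor products, and the bipolar/support-function dictionary from the notation section.
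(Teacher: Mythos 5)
Your proof is correct and follows essentially the same route as the paper, which simply observes that the measure associated to $\mathrm{M}\otimes\mathrm{N}$ is the pushforward of $\mu\times\nu$ under $(\rho,\sigma)\mapsto\rho\otimes\sigma$ and lets the norm and zonoid identities follow from the integral representation; you have merely filled in the bookkeeping the paper leaves implicit. (One trivial typo: $(\mathrm{M}\otimes\mathrm{N})(A_1\times A_2)$ should equal $\mathrm{M}(A_1)\otimes\mathrm{N}(A_2)$, not $\mathrm{M}(A_1)\otimes\mathrm{M}(A_2)$.)
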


These definitions and statements are given here only in the bipartite case for the sake of clarity, but
can be extended to the situation where a system is shared between any number $k$ of parties.

\section{Sparsifying POVMs}

\label{section:sparsification}

\subsection{The uniform POVM}

It has been proved in \cite{MWW} that, in several senses, the ``most efficient'' POVM on $\C^d$ is the ``most symmetric'' one, i.e.
the uniform POVM $\mathrm{U}_d$, which corresponds to the uniform measure on the set of pure states in the representation
\eqref{eq:support-function-POVM} from Proposition \ref{proposition:POVM}.

The corresponding norm is
\begin{equation} \label{eq:def-normU} \|\Delta\|_{\mathrm{U}_d} = d \E | \langle \psi | \Delta | \psi \rangle |, \end{equation}
where $\psi$ is a random Haar-distributed unit vector.

An important property is that the norm $\|\cdot\|_{\mathrm{U}_d}$ is equivalent
to a ``modified'' Hilbert--Schmidt norm.
\begin{proposition}[\cite{HMS,LW}] \label{proposition:norm-equivalence}
For every $\Delta \in \cH(\C^d)$, we have
\begin{equation} \label{eq:U-vs-modifiedHS} \frac{1}{\sqrt{18}}\|\Delta\|_{2(1)}\leq\|\Delta\|_{\mathrm{U}_d} \leq\|\Delta\|_{2(1)}, \end{equation}
where the norm $\|\cdot\|_{2(1)}$ is defined as
\begin{equation} \label{eq:modifiedHS} \|\Delta\|_{2(1)}=\sqrt{\mathrm{Tr}(\Delta^2)+(\mathrm{Tr}\Delta)^2}. \end{equation}
\end{proposition}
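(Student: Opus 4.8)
The plan is to prove both inequalities by the \emph{moment method}: one computes explicitly the low moments of the real random variable $X := \langle \psi | \Delta | \psi \rangle = \tr\big(\Delta\,\ketbra{\psi}{\psi}\big)$, where $\psi$ is Haar-distributed on the unit sphere of $\C^d$. The key input is the standard formula for the moments of a uniformly random unit vector,
\[ \E\,\ketbra{\psi}{\psi}^{\otimes k} = \binom{d+k-1}{k}^{-1} P^{(k)}_{\mathrm{sym}}, \]
where $P^{(k)}_{\mathrm{sym}}$ is the orthogonal projection onto the symmetric subspace of $(\C^d)^{\otimes k}$. For $k=2$ this reads $\E\,\ketbra{\psi}{\psi}^{\otimes 2} = (\Id+F)/(d(d+1))$, with $F$ the flip operator on $\C^d\otimes\C^d$; since $\tr\big[(\Delta\otimes\Delta)(\Id+F)\big]=(\tr\Delta)^2+\tr(\Delta^2)$, this yields the basic identity $\E\,X^2 = \|\Delta\|_{2(1)}^2/(d(d+1))$.

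The upper bound follows immediately from Jensen's inequality $\E|X|\le(\E X^2)^{1/2}$:
\[ \|\Delta\|_{\mathrm{U}_d} = d\,\E|X| \le d\cdot\frac{\|\Delta\|_{2(1)}}{\sqrt{d(d+1)}} = \sqrt{\tfrac{d}{d+1}}\;\|\Delta\|_{2(1)} \le \|\Delta\|_{2(1)} . \]

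For the lower bound I would use the elementary interpolation inequality $\E|X| \ge (\E X^2)^{3/2}(\E X^4)^{-1/2}$, obtained by applying H\"older to $\E X^2 = \E\big(|X|^{2/3}|X|^{4/3}\big)$ with exponents $3/2$ and $3$; thus only an \emph{upper} bound on the fourth moment is required. Taking $k=4$ above and writing $P^{(4)}_{\mathrm{sym}}=\tfrac1{24}\sum_{\pi\in S_4}P_\pi$, with $P_\pi$ permuting the four tensor factors and $\tr\big[\Delta^{\otimes4}P_\pi\big]=\prod_{c}\tr(\Delta^{|c|})$ over the cycles $c$ of $\pi$, a sum over the five conjugacy classes of $S_4$ (of sizes $1,6,8,3,6$) gives
\[ \E\,X^4 = \frac{(\tr\Delta)^4 + 6(\tr\Delta)^2\tr(\Delta^2) + 8(\tr\Delta)\tr(\Delta^3) + 3(\tr\Delta^2)^2 + 6\tr(\Delta^4)}{d(d+1)(d+2)(d+3)} . \]
Writing $a=(\tr\Delta)^2$ and $b=\tr(\Delta^2)$, the spectral estimates $\tr(\Delta^4)\le b^2$, $|\tr(\Delta^3)|\le b^{3/2}$ and $|\tr\Delta|\le\sqrt a$ bound the numerator by $(a+3b)^2 + 8\sqrt a\,b^{3/2} \le 9(a+b)^2 + 8(a+b)^2 = 17\,(a+b)^2 \le 18\,\|\Delta\|_{2(1)}^4$. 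Plugging this and the value of $\E X^2$ into the interpolation inequality, and using that the surviving dimensional factor $\big((d+2)(d+3)\big)^{1/2}/(d+1)$ is $\ge 1$ (because $(d+2)(d+3)\ge(d+1)^2$), gives $\|\Delta\|_{\mathrm{U}_d}=d\,\E|X|\ge \|\Delta\|_{2(1)}/\sqrt{18}$.

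The only step requiring genuine care is the fourth-moment estimate: the term $(\tr\Delta)\tr(\Delta^3)$ is indefinite and a priori of the same order as $\|\Delta\|_{2(1)}^4$, so one must combine spectral inequalities with $a+3b\le 3(a+b)$ and $\sqrt a\,b^{3/2}\le(a+b)^2$ (or a slightly sharper weighted AM--GM) to keep the overall constant at $18$. Everything else — the Haar moment identities and the two soft inequalities (Jensen and H\"older) — is routine.
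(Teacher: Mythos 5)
Your proof is correct. A caveat on the comparison: the paper does not prove Proposition \ref{proposition:norm-equivalence} at all --- it imports it from \cite{HMS,LW} --- so there is no in-paper argument to measure you against. What you give is exactly the fourth-moment method that the paper itself names as the engine behind this circle of results (see the discussion of \cite{AE} and \cite{Rudin60} and Table \ref{table:analogies}), and your computation of $\E X^{4}$ via $\E\,\ketbra{\psi}{\psi}^{\otimes 4}=\binom{d+3}{4}^{-1}P_{\Sym^4(\C^d)}=\frac{1}{d(d+1)(d+2)(d+3)}\sum_{\pi\in\mathfrak{S}_4}U(\pi)$ and the cycle-type formula $\tr(\Delta^{\otimes 4}U(\pi))=\prod_c\tr(\Delta^{|c|})$ is the $q=2$ case of the paper's own Lemma \ref{lemma:psi1}. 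All the steps check out: the H\"older interpolation $\E|X|\geq(\E X^2)^{3/2}(\E X^4)^{-1/2}$, the identity $\E X^2=\|\Delta\|_{2(1)}^2/(d(d+1))$, the conjugacy-class count $1+6+8+3+6=24$, the spectral bounds $|\tr(\Delta^3)|\leq b^{3/2}$ and $\tr(\Delta^4)\leq b^2$, the estimate $(a+3b)^2+8\sqrt{a}\,b^{3/2}\leq 17(a+b)^2$ (indeed $\sqrt{a}\,b^{3/2}=\sqrt{ab}\cdot b\leq\frac{1}{2}(a+b)^2$), and the dimensional factor $\sqrt{(d+2)(d+3)}/(d+1)\geq 1$. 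In fact your argument yields the slightly better constant $1/\sqrt{17}$ before you round up to $1/\sqrt{18}$, consistent with the paper's remark that $\sqrt{18}$ is presumably not optimal.
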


One can check that $\|\Delta\|_{2(1)}$ equals the $L^2$ norm of the random variable $\langle g | \Delta | g \rangle$, where $g$ is a standard Gaussian
vector in $\C^d$, while the $L^1$ norm of this random variable is nothing else than $\|\Delta\|_{\mathrm{U}_d}$. Therefore
Proposition \ref{proposition:norm-equivalence} can be seen as a reverse H\"older inequality, and an interesting problem would be to find the optimal constant in that inequality (the factor $\sqrt{18}$ is presumably far from optimal).

This dimension-free lower bound on the
distinguishing power of the uniform POVM is of interest in quantum information theory.
One could cite as one of its applications the possibility to establish lower-bounds on the dimensionality reduction of quantum states \cite{HMS}. However,
from a computational or algorithmic point of view, this statement involving a continuous POVM is of no practical use.
There has been interest therefore in the question of sparsifying $\mathrm{U}_d$, i.e. of finding a discrete POVM, with as few outcomes as possible,
which would be equivalent to $\mathrm{U}_d$ in terms of discriminating efficiency. Examples of such constructions arise from the theory
of projective $4$-designs.

Given an integer $t \geq 1$, an (exact) $t$-design is a finitely supported probability measure $\mu$ on $S_{\C^d}$ such that
\[ \int_{S_{\C^d}} \ketbra{\psi}{\psi}^{\otimes t} \, \mathrm{d}\mu(\psi) =
\int_{S_{\C^d}} \ketbra{\psi}{\psi}^{\otimes t} \, \mathrm{d}\sigma(\psi) = \binom{d+t-1}{t}^{-1} P_{\Sym^t(\C^d)} .\]
Here, $\sigma$ denotes the Haar probability measure on $S_{\C^d}$, and $P_{\Sym^t(\C^d)}$ denotes the orthogonal projection onto the symmetric subspace
$\Sym^t(\C^d) \subset (\C^d)^{\otimes t}$.

Note that a $t$-design is also a $t'$-design for any $t' \leq t$. Let $\mu$ be a $1$-design. The map $\psi \mapsto \ketbra{\psi}{\psi}$
pushes forward $\mu$ into a measure $\tilde{\mu}$ on the set of (pure) states, with barycenter equal to $\Id/d$. By Proposition \ref{proposition:POVM},
this measure corresponds to a POVM, and in the following we identify $t$-designs with the associated POVMs. For example the uniform POVM
$\mathrm{U}_d$ is a $t$-design for any $t$.

This notion can be relaxed: define an $\e$-approximate $t$-design to be a finitely supported measure $\mu$ on $S_{\C^d}$ such that
\[ (1-\e) \int_{S_{\C^d}} \ketbra{\psi}{\psi}^{\otimes t} \, \mathrm{d}\sigma(\psi) \leq
\int_{S_{\C^d}} \ketbra{\psi}{\psi}^{\otimes t} \, \mathrm{d}\mu(\psi) \leq
(1+\e) \int_{S_{\C^d}} \ketbra{\psi}{\psi}^{\otimes t} \, \mathrm{d} \sigma(\psi) .\]

It has been proved in \cite{AE} that a $4$-design (exact or approximate) supported on $N$ points yields a POVM $\mathrm{M}$ with $N$ outcomes such that
\begin{equation} \label{eq-AE} C^{-1} \|\cdot\|_{\mathrm{U}_d} \leq \|\cdot\|_{\mathrm{M}} \leq C \|\cdot\|_{\mathrm{U}_d} \end{equation}
for some constant $C$. The proof is based on the fourth moment method, which is used to control the first absolute moment of a random variable
by its second and fourth moments.

Now, what is the minimal cardinality of a $4$-design? The support of any exact or $\e$-approximate (provided $\e<1$) $4$-design must
contain at least $\dim (\Sym^4(\C^d)) = \binom{d+3}{4} = \Omega(d^4)$ points. Conversely, an argument
based on Carath\'eodory's theorem shows that there exist exact $4$-designs with $O(d^8)$ points. Starting from such an exact $4$-design,
the sparsification procedure from \cite{BSS} gives a deterministic and efficient algorithm which outputs an $\e$-approximate $4$-design supported
by $O(d^4/\e^2)$ points.

However, this approach has two drawbacks: the constant $C$ from \eqref{eq-AE} cannot be taken close to $1$, and the number of outcomes
has to be $\Omega(d^4)$. We are going to remove both inconveniences in our Theorem \ref{theorem:approximation-of-U}.

\subsection{Euclidean subspaces}

How do these ideas translate into the framework of zonoids? The analogue of $\mathrm{U_d}$ is the most symmetric zonoid, namely the Euclidean
ball $B_2^n \subset \R^n$. To connect with literature from functional analysis, it is worth emphasizing that approximating $B_2^n$ by a zonotope
which is the sum of $N$ segments is equivalent to embedding the space $\ell_2^n=(\R^n,\|\cdot\|_2)$ into the space $\ell_1^N=(\R^N,\|\cdot\|_1)$.
Indeed, assume that $x_1,\dots,x_N$ are points in $\R^n$ such that, for some constants $c,C$,
\[ c Z \subset B_2^n \subset C Z ,\]
where $Z = \conv \{\pm x_1 \} + \dots + \conv \{ \pm x_N \}$. Then the map $u: \R^n \to \R^N$ defined by
\[ u(x) = \Big( \scalar{x}{x_1}, \cdots, \scalar{x}{x_N} \Big) \]
satisfies $c \|u(x)\|_1 \leq \|x\|_2 \leq C\|u(x)\|_1$ for any $x \in \R^n$. In this context,
the ratio $C/c$ is often called the distortion of the embedding.

An early result by Rudin \cite{Rudin60} shows an explicit embedding of $\ell_2^n$ into $\ell_1^{O(n^2)}$ with distortion $\sqrt{3}$. This is proved by
the fourth moment method and can be seen as the analogue of the constructions based on 4-designs.
The following theorem (a variation on Dvoretzky's theorem)
has been a major improvement on Rudin's result, showing that $\ell_1^N$ has almost Euclidean sections of proportional dimension.

\begin{theorem}[\cite{FLM}] \label{theorem-dvoretzky}
For every $0<\e<1$, there exists a subspace $E \subset \R^N$ of dimension $n=c(\e)N$ such that for any $x \in E$,
\begin{equation} \label{eq:dvoretzky} (1-\e) M \|x\|_2 \leq \|x\|_1 \leq (1+\e) M\|x\|_2, \end{equation}
where $M$ denotes the average of the $1$-norm over the Euclidean unit sphere $S^{N-1}$.
\end{theorem}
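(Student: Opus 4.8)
The plan is to follow Milman's probabilistic proof of Dvoretzky's theorem, specialized to the space $\ell_1^N=(\R^N,\|\cdot\|_1)$, where the relevant geometric parameters happen to have exactly the right order to produce a subspace of \emph{proportional} dimension. Fix $0<\e<1$. First I would record two elementary facts about the function $f:S^{N-1}\to\R$, $f(x)=\|x\|_1$: it is Lipschitz with constant $\sqrt N$ with respect to the Euclidean metric (since $\|x\|_1\le\sqrt N\,\|x\|_2$, with equality at $(1,\dots,1)/\sqrt N$), and its average $M=\int_{S^{N-1}}\|x\|_1\,\mathrm{d}\sigma$ satisfies $M\asymp\sqrt N$ (because $\int_{S^{N-1}}|x_1|\,\mathrm{d}\sigma\asymp 1/\sqrt N$). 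The crucial point — and the reason the theorem yields a subspace of dimension linear in $N$ rather than logarithmic, as in general Dvoretzky — is that the ratio $\sqrt N/M$ of Lipschitz constant to mean is bounded by an absolute constant.

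Next I would invoke Lévy's concentration inequality on the sphere: for the $\sqrt N$-Lipschitz function $f$ one has $\sigma\big(|f-M|>t\big)\le C\exp(-c\,t^2/N)$. Taking $t=\tfrac{\e}{3}M\asymp\e\sqrt N$ turns this into $\sigma\big(|\,\|x\|_1-M\,|>\tfrac{\e}{3}M\big)\le C\exp(-c'\e^2 N)$, an exponentially small failure probability. Now let $U:\R^n\to\R^N$ be a Haar-random linear isometry, so that $E:=U(\R^n)$ is a uniformly random $n$-dimensional subspace and, for each fixed $v\in S^{n-1}$, the vector $Uv$ is uniformly distributed on $S^{N-1}$. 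Fix a $\delta$-net $\mathcal N$ of $S^{n-1}$ with $|\mathcal N|\le(3/\delta)^n$. A union bound gives
\[
\P\Big(\exists\,v\in\mathcal N:\ \big|\,\|Uv\|_1-M\,\big|>\tfrac{\e}{3}M\Big)\ \le\ (3/\delta)^n\cdot C\exp(-c'\e^2 N),
\]
which is strictly less than $1$ as soon as $n\log(3/\delta)<c'\e^2 N$.

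It then remains to upgrade control on the net to control on all of $S^{n-1}$ by the standard successive-approximation argument: writing $A=\max_{x\in S^{n-1}}\|Ux\|_1$, for any $x$ one picks $v\in\mathcal N$ with $\|x-v\|_2\le\delta$ and estimates $\|Ux\|_1\le\|Uv\|_1+\delta A$, whence $A\le(1+\tfrac{\e}{3})M/(1-\delta)$, and the matching lower bound follows similarly. Choosing $\delta=\delta(\e)$ a small enough multiple of $\e$ then yields $(1-\e)M\le\|Ux\|_1\le(1+\e)M$ for every $x\in S^{n-1}$, hence \eqref{eq:dvoretzky} on all of $E$ by homogeneity. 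With $\delta$ so fixed, the constraint above reads $n\le c(\e)N$, and since the failure probability is less than $1$, such a subspace $E$ exists. The only genuinely delicate steps are quantitative rather than conceptual: checking that $M\asymp\sqrt N$ (and that $M$ lies within the concentration scale of the median, so that Lévy's lemma may be applied with $M$ in place of the median), and calibrating $\delta$ against $\e$ so that the two-sided net estimate closes. There is no obstacle beyond the observation highlighted above, namely that $\ell_1^N$ has Dvoretzky dimension of order $N$.
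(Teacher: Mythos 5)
Your argument is correct and is exactly the concentration-of-measure proof from \cite{FLM} that the paper cites for this theorem (L\'evy's lemma applied to the $\sqrt N$-Lipschitz function $\|\cdot\|_1$ whose spherical mean is $M\asymp\sqrt N$, a union bound over a net in a Haar-random subspace, and the successive-approximation upgrade from the net to the whole sphere), and it yields $c(\e)=O(\e^2|\log\e|^{-1})$ as the surrounding text indicates. One typographical slip: for a $\sqrt N$-Lipschitz function on $S^{N-1}$ L\'evy's inequality reads $\sigma(|f-M|>t)\le C\exp(-ct^2)$ rather than $C\exp(-ct^2/N)$; the consequence $\exp(-c'\e^2N)$ that you then derive for $t\asymp\e\sqrt N$ is the correct one, so nothing downstream is affected.
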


Theorem \ref{theorem-dvoretzky} was first proved in \cite{FLM}, making a seminal use of measure concentration in the form of L\'evy's lemma. The argument
shows that a generic subspace $E$ (i.e. picked uniformly at random amongst all $c(\e)N$-dimensional subspaces of $\R^N$) satisfies the conclusion of the
theorem with high probability for $c(\e)=O\left(\e^2 |\log \e|^{-1}\right)$.
This was later improved in \cite{Gordon} to $c(\e)=O\left(\e^2\right)$.

\subsection{Sparsification of the uniform POVM}

Translated in the language of zonotopes, Theorem \ref{theorem-dvoretzky} states that the sum of $O(n)$ randomly chosen segments in $\R^n$
is close to the Euclidean ball $B_2^n$. More precisely, for any $0<\e<1$, the zonotope $Z=\conv\{\pm x_1\}+\cdots+\conv\{\pm x_N\}$,
with $N=c(\e)^{-1}n$ and $x_1,\ldots,x_N$ randomly chosen points in $\R^n$, is $\e$-close to the Euclidean ball $B_2^n$, in the sense
that $(1-\e)Z\subset B_2^n\subset(1+\e)Z$.

By analogy, we expect a POVM constructed from $O(d^2)$ randomly chosen elements to be close to the uniform POVM.
This random construction can be achieved as follows:
let $(\ket{\psi_i})_{1\leq i \leq n}$ be independent random vectors, uniformly chosen on the unit sphere of $\C^d$. Set
$P_i = \ketbra{\psi_i}{\psi_i}$, $1 \leq i \leq n$, and $S = P_1 + \dots + P_n$. When $n \geq d$, $S$ is almost surely invertible,
and we may consider the random POVM
\begin{equation} \label{eq:randomPOVM} \mathrm{M} = (S^{-1/2} P_i S^{-1/2})_{1 \leq i \leq n } .\end{equation}

\begin{theorem} \label{theorem:approximation-of-U}
Let $\mathrm{M}$ be a random POVM on $\C^d$ with $n$ outcomes, defined as in \eqref{eq:randomPOVM}, and let $0<\e<1$.
If $n \geq C\e^{-2} |\log \e| d^2$, then with high probability
the POVM $\mathrm{M}$ satisfies the inequalities
\[ (1-\e) \|\Delta\|_{\mathrm{U}_d} \leq \|\Delta\|_{\mathrm{M}} \leq (1+\e) \|\Delta\|_{\mathrm{U}_d} \]
for every $\Delta \in \cH(\C^d)$.
\end{theorem}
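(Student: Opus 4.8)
The plan is to compare $\mathrm{M}$ with $\mathrm{U}_d$ in two stages: first compare the \emph{unnormalized} family $(P_i)_{1\le i\le n}$ with $\mathrm{U}_d$, then absorb the correction $S^{-1/2}$. The starting point is the identity
\[
\|\Delta\|_{\mathrm{M}}=\sum_{i=1}^{n}\bigl|\tr\bigl(\Delta\,S^{-1/2}P_iS^{-1/2}\bigr)\bigr|
=\sum_{i=1}^{n}\bigl|\langle\psi_i|\,S^{-1/2}\Delta S^{-1/2}\,|\psi_i\rangle\bigr|,
\]
valid for every $\Delta\in\cH(\C^d)$. Hence it suffices to prove, simultaneously with high probability provided $n\ge C\e^{-2}|\log\e|\,d^2$, the two statements: (a) \emph{uniform comparison}: for \emph{every} $A\in\cH(\C^d)$,
\[
\Bigl(1-\tfrac{\e}{3}\Bigr)\tfrac{n}{d}\|A\|_{\mathrm{U}_d}\ \le\ \sum_{i=1}^{n}\bigl|\langle\psi_i|A|\psi_i\rangle\bigr|\ \le\ \Bigl(1+\tfrac{\e}{3}\Bigr)\tfrac{n}{d}\|A\|_{\mathrm{U}_d};
\]
and (b) \emph{normalization}: $\tfrac{n}{d}\|S^{-1/2}\Delta S^{-1/2}\|_{\mathrm{U}_d}\in\bigl(1\pm\tfrac{\e}{3}\bigr)\|\Delta\|_{\mathrm{U}_d}$ for every $\Delta$. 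Indeed, applying (a) to the random operator $A=S^{-1/2}\Delta S^{-1/2}$ and then invoking (b) gives $\|\Delta\|_{\mathrm{M}}\in(1\pm\e)\|\Delta\|_{\mathrm{U}_d}$ on the intersection of the two favourable events, after adjusting constants.

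Statement (a) is the core, and is the POVM incarnation of the Figiel--Lindenstrauss--Milman route to Dvoretzky's theorem (Theorem~\ref{theorem-dvoretzky}); I would prove it by a Bernstein-and-net argument (this is what Section~\ref{section:bernstein} is for). By homogeneity it is enough to treat $A$ on the unit sphere of $(\cH(\C^d),\|\cdot\|_{2(1)})$. For a \emph{fixed} such $A$, the variables $X_i=|\langle\psi_i|A|\psi_i\rangle|$ are i.i.d., with $\E X_i=\tfrac1d\|A\|_{\mathrm{U}_d}$, and the crucial point is that they are sub-exponential \emph{at the scale of their own mean}: the sub-exponential parameter of $X_i$ is $\lesssim\tfrac1d\|A\|_{2(1)}$ (a Bernstein-type tail bound for the quadratic form $\langle\psi|A|\psi\rangle$, coming from the Dirichlet/Beta law of the coordinates of $\psi$), while by the reverse H\"older inequality of Proposition~\ref{proposition:norm-equivalence} the quantities $\tfrac1d\|A\|_{2(1)}$ and $\tfrac1d\|A\|_{\mathrm{U}_d}=\E X_i$ are comparable. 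Bernstein's inequality then yields $\P\bigl(\bigl|\sum_iX_i-n\,\E X_i\bigr|>\tfrac{\e}{3}\,n\,\E X_i\bigr)\le 2e^{-c\e^2 n}$. Since the metric entropy of the unit sphere of a $d^2$-dimensional Euclidean-type space at scale $\sim\e$ is $O(d^2|\log\e|)$, the hypothesis $n\ge C\e^{-2}|\log\e|\,d^2$ lets a union bound over an $\e$-net close, and a routine approximation step (again using Proposition~\ref{proposition:norm-equivalence}) transfers the estimate from the net to all of $\cH(\C^d)$; the $|\log\e|$ is inherited from the metric-entropy bound, exactly as in Theorem~\ref{theorem-dvoretzky}.

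For (b), write $S^{-1/2}=\sqrt{d/n}\,(\Id+E)$ with $E$ self-adjoint. Since $\E S=\tfrac nd\Id$ and the $\cH(\C^d)$-valued summands $P_i-\tfrac1d\Id$ are centred, bounded in Hilbert--Schmidt norm and independent, a second-moment (Chebyshev) estimate gives $\|S-\tfrac nd\Id\|_2=O(\sqrt n)$, hence $\|E\|_2\lesssim\tfrac dn\|S-\tfrac nd\Id\|_2\lesssim d/\sqrt n\le\e/25$ with high probability as soon as $n\ge C\e^{-2}d^2$; it matters that we control $S$ in Hilbert--Schmidt, not operator, norm, which is what avoids an extra $\log d$ factor. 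Then
\[
\tfrac{n}{d}\|S^{-1/2}\Delta S^{-1/2}\|_{\mathrm{U}_d}=\|(\Id+E)\Delta(\Id+E)\|_{\mathrm{U}_d}=\bigl\|\Delta+(E\Delta+\Delta E+E\Delta E)\bigr\|_{\mathrm{U}_d},
\]
and the triangle inequality reduces (b) to bounding $\|E\Delta+\Delta E\|_{\mathrm{U}_d}$ and $\|E\Delta E\|_{\mathrm{U}_d}$. Passing to the representation $\|B\|_{\mathrm{U}_d}=d\,\E_{\psi}|\langle\psi|B|\psi\rangle|$ and using Cauchy--Schwarz gives $\|E\Delta+\Delta E\|_{\mathrm{U}_d}\le 2\|E\|_2\|\Delta\|_2$ and $\|E\Delta E\|_{\mathrm{U}_d}\le\|E\|_2^2\|\Delta\|_2$; Proposition~\ref{proposition:norm-equivalence} turns $\|\Delta\|_2$ into $\|\Delta\|_{\mathrm{U}_d}$ up to a constant, and $\|E\|_2\lesssim\e$ closes the estimate.

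The main obstacle is extracting the \emph{sharp} count $O(d^2)$ in step (a): an argument using only the boundedness $X_i\le\|A\|_{\infty}$ would cost an extra factor of $d$ (the per-sample signal-to-noise ratio $\E X_i/\|A\|_{\infty}$ is only of order $1/d$), and it is precisely the reverse H\"older inequality of Proposition~\ref{proposition:norm-equivalence} — which makes the $X_i$ sub-exponential at the scale of their mean — that restores the Bernstein exponent $\gtrsim\e^2 n$ and hence the optimal dimension count. Everything else — the concentration behind (b), the passage from a net to all operators, the almost-sure invertibility of $S$ and measurability of $\mathrm{M}$, and the bookkeeping of the various $\e/3$'s — is routine.
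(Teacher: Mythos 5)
Your strategy coincides with the paper's: your step (a) is exactly the argument of Sections \ref{sec:uniform-POVM}--\ref{section:bernstein} (a $\psi_1$-bound on $|\langle\psi|A|\psi\rangle|$ at the scale of its mean via the reverse H\"older inequality of Proposition \ref{proposition:norm-equivalence}, then Bernstein's inequality, a union bound over a net of cardinality $(3/\e)^{d^2}$, and the standard passage from the net to the whole ball), and your step (b) plays the role of the paper's comparison between $|||\cdot|||$ and $\|\cdot\|_{\mathrm{M}}$. Your way of organizing (b) --- writing $S^{-1/2}=\sqrt{d/n}(\Id+E)$ and bounding $\|E\Delta+\Delta E\|_{\mathrm{U}_d}+\|E\Delta E\|_{\mathrm{U}_d}\lesssim\|E\|_2\,\|\Delta\|_{\mathrm{U}_d}$ through the $2(1)$-norm --- is a legitimate variant of the paper's bound $\|\Delta-T\Delta T\|_{\mathrm{M}}\le\|\Delta-T\Delta T\|_1\le 2\eta\|\Delta\|_1$ with $\eta\sim\e/\sqrt{d}$, and correctly identifies that $\|E\|_2\lesssim\e$ suffices.

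The one genuine gap is the probability estimate in (b). The second moment is $\E\|S-\tfrac nd\Id\|_2^2=n(1-\tfrac1d)\le n$, so Chebyshev gives $\|S-\tfrac nd\Id\|_2\le K\sqrt n$ only with probability $1-K^{-2}$; since you need $\|S-\tfrac nd\Id\|_2\lesssim\tfrac{n\e}{d}$, which at $n\asymp\e^{-2}d^2$ equals a constant multiple of $\sqrt n$, the parameter $K$ is forced to be a constant and your failure probability for (b) is a fixed constant --- not the $\exp(-c(\e)d)$ that ``with high probability'' means here (see the sentence following the theorem), and not even $o(1)$ as $d\to\infty$. The obvious upgrade via bounded differences does not rescue this at $n\asymp d^2$: perturbing one $\psi_i$ moves $\|S-\E S\|_2$ by $O(1)$, so a deviation $t$ costs only $e^{-ct^2/n}$, and making that $e^{-cd}$ forces $t\gtrsim\sqrt{nd}$ and hence $n\gtrsim d^3/\e^2$. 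You need a genuine concentration input for $S$: the paper uses the operator-norm Wishart estimate of Proposition \ref{proposition:Wishart} at scale $\eta=\e/\sqrt{18d}$, which costs $C^d\exp(-cn\e^2/d)$ and is harmless once $n\ge C\e^{-2}d^2$ --- in particular your worry that operator-norm control of $S$ would incur an extra $\log d$ is unfounded. (Alternatively, a sharper concentration inequality for the supremum $\|S-\E S\|_2=\sup_{\|A\|_2\le1}\sum_i\tr\big(A(P_i-\Id/d)\big)$, whose variance parameter is only $\asymp n/d^2$, would make your Hilbert--Schmidt route work with an exponentially small failure probability.) Everything else in the proposal is sound.
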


By ``with high probability'' we mean that the probability that the conclusion fails is less than $\exp(-c(\e)d)$ for some constant $c(\e)$.
Theorem \ref{theorem:approximation-of-U} is proved in Section \ref{sec:uniform-POVM}, the proof being based on a careful use of $\e$-nets and deviation
inequalities. It does not seem possible to deduce formally Theorem \ref{theorem:approximation-of-U} from the existing Banach space literature.

Theorem \ref{theorem:approximation-of-U} shows that the uniform POVM on $\C^d$ can be $\e$-approximated
(in the sense of closeness of distinguishability norms)
by a
POVM with $n=O(\e^{-2}|\log\e|d^2)$ outcomes. Note that the dependence of $n$ with respect to $d$ is optimal: since a POVM on $\C^d$ must have at least
$d^2$ outcomes to be informationally complete, one cannot hope for a tighter dimensional dependence. The dependence with respect to $\e$ is less clear:
the factor $|\log \e|$ can probably be removed but we do not pursue this direction.

Our construction is random and a natural question is whether deterministic constructions yielding comparable properties exist. A lot of effort has
been put in derandomizing Theorem \ref{theorem-dvoretzky}. We refer to \cite{IS} for bibliography and mention two of the latest results.
Given any $0<\gamma<1$, it is shown in \cite{IS} how to construct, from $cn^{\gamma}$ random bits (i.e. an amount of randomness sub-linear in $n$) a
subspace of $\ell_1^N$ satisfying
\eqref{eq:dvoretzky} with $N\leq (\gamma\e)^{-C\gamma}n$. A completely explicit construction appears in \cite{Indyk}, with
$N\leq n2^{C(\e)(\log\log n)^2}=n^{1+C(\e)o(n)}$. It is not obvious how to adapt these constructions to obtain sparsifications of the uniform POVM
using few or no randomness.

\subsection{Sparsification of any POVM}

Theorem \ref{theorem-dvoretzky}
initiated intensive research in the late 80's \cite{Schechtman,BLM,Talagrand} on the theme of ``approximation of zonoids by zonotopes'',
trying to extend the result for
the Euclidean ball (the most symmetric zonoid) to an arbitrary zonoid. This culminated in Talagrand's proof \cite{Talagrand} that for any zonoid
$Y \subset \R^n$ and any $0<\e<1$, there exists a zonotope $Z \subset \R^n$ which is the sum of $O(\e^{-2}n \log n)$ segments and such that
$(1-\e)Y \subset Z \subset (1+\e)Y$. A more precise version is stated in Section \ref{sec:approximation-any}. Whether the $\log n$ factor
can be removed is still an open problem.

This result easily implies a similar result for POVMs, provided we consider the larger class of sub-POVMs.
A discrete sub-POVM with $n$ outcomes is a
finite family $\mathrm{M} = (M_i)_{1\leq i\leq n}$ of $n$ positive operators such that $S = \sum_{i=1}^n M_i \leq \Id$.
As for POVMs, the norm associated to a sub-POVM $\mathrm{M}$ is defined for $\Delta \in \cH(\C^d)$ by
\[ \| \Delta \|_{\mathrm{M}} = \sum_{i=1}^n |\tr (\Delta M_i)|. \]
We prove the following result in Section \ref{sec:approximation-any}.

\begin{theorem}
\label{theorem:approximation-any}
Given any POVM $\mathrm{M}$ on $\C^d$ and any $0<\e<1$, there is a sub-POVM $\mathrm{M}' = (M'_i)_{1\leq i\leq n}$, with $n \leq C \e^{-2}d^2 \log(d)$
such that, for any $\Delta \in \cH(\C^d)$,
\[ (1-\e) \| \Delta \|_{\mathrm{M}} \leq \|\Delta\|_{\mathrm{M}'} \leq \|\Delta\|_{\mathrm{M}}. \]
Moreover, we can guarantee that the states $M'_i/\tr (M'_i)$ belong to the support of the measure $\mu$ associated to $\mathrm{M}$.
\end{theorem}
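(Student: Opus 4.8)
The plan is to transport the statement through the dictionary between POVMs and zonoids and invoke Talagrand's approximation theorem. By Proposition \ref{proposition:POVM}, the convex body $K_{\mathrm{M}} \subset \cH(\C^d)$ is a symmetric zonoid satisfying $\pm\Id \in K_{\mathrm{M}} \subset [-\Id,\Id]$, living in the real vector space $\cH(\C^d)$ of dimension $n_0 := d^2$. Talagrand's theorem (in the precise form to be recalled in Section \ref{sec:approximation-any}) then produces, for the given $\e$, a zonotope $Z = \conv\{\pm v_1\} + \cdots + \conv\{\pm v_n\}$ with $n \leq C\e^{-2} d^2 \log d$ and $(1-\e) K_{\mathrm{M}} \subset Z \subset K_{\mathrm{M}}$ — note I would use the one-sided version, either directly from the literature or by first applying the two-sided version to $(1+\e)^{-1}K_{\mathrm{M}}$ and rescaling, so as to get the clean inequality $\|\Delta\|_{\mathrm{M}'} \leq \|\Delta\|_{\mathrm{M}}$ asserted in the theorem. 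Dualizing the inclusions via support functions (recall $K \subset L \iff h_K \leq h_L$, and that $h_{K_{\mathrm{M}}} = \|\cdot\|_{\mathrm{M}}$ since $K_{\mathrm{M}} = B_{\mathrm{M}}^\circ$), this reads exactly $(1-\e)\|\Delta\|_{\mathrm{M}} \leq \|\Delta\|_Z \leq \|\Delta\|_{\mathrm{M}}$ where $\|\Delta\|_Z = h_Z(\Delta) = \sum_{i=1}^n |\tr(\Delta v_i)|$.

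The remaining point is that $Z$, being merely a zonotope with $\pm\Id$ not necessarily among its ``reachable'' points and not necessarily contained in $[-\Id,\Id]$, defines a sub-POVM rather than a zonoid of the form $K_{\mathrm{M}'}$. But a sub-POVM is precisely what the statement allows: I claim the family $(M_i')_{1\leq i\leq n}$ with $M_i'$ a suitable positive multiple of a rank-one projection along the ``direction'' of $v_i$ works. Concretely, since $Z \subset K_{\mathrm{M}} \subset [-\Id,\Id]$, each segment $\conv\{\pm v_i\}$ lies in $[-\Id,\Id]$, forcing $-\Id \leq v_i \leq \Id$; I want instead genuinely positive operators, so I would not take the $v_i$ directly. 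Here the control on supports enters: I would run the argument not on $K_{\mathrm{M}}$ abstractly but on the zonotope approximations furnished by Corollary \ref{corollary:approximation}, whose defining positive operators are multiples of states $\rho_{k,j}$ in the support of $\mu$. Talagrand's theorem, being a statement about choosing a sub-collection / reweighting of the segments generating a given zonotope (its proof proceeds by a random selection among the generating segments, followed by a correction), can be applied to such an approximating zonotope $Z_m$ of $K_{\mathrm{M}}$ so that the output segments are of the form $\conv\{\pm t_i \rho_i\}$ with $t_i > 0$ and $\rho_i$ in the support of $\mu$. Setting $M_i' = t_i \rho_i$ then gives positive operators, and the sub-POVM condition $\sum M_i' \leq \Id$ is equivalent to $Z \subset [-\Id,\Id]$, which holds because $Z \subset K_{\mathrm{M}} \subset [-\Id,\Id]$ — wait, more carefully: $\sum \conv\{\pm M_i'\} = Z \subset [-\Id,\Id]$ gives $\sum M_i' = h_Z(\Id) \cdot(\text{direction }\Id)$-type bound; since $h_Z(\Id) \leq h_{[-\Id,\Id]}(\Id) = d$ is too weak, I instead note $0 \leq \sum M_i'$ and for any unit $\psi$, $\bra{\psi}(\sum M_i')\ket{\psi} = \sum t_i \bra{\psi}\rho_i\ket{\psi}$, and one recovers $\sum M_i' \leq \Id$ from $Z \subset K_{\mathrm{M}} \subset [-\Id,\Id]$ applied to the test operator $\ketbra{\psi}{\psi}$, since $\langle \ketbra{\psi}{\psi}, \sum M_i'\rangle \leq h_{[-\Id,\Id]}(\ketbra\psi\psi)$... this needs the sharper observation that $h_Z \leq h_{K_{\mathrm{M}}} \leq \|\cdot\|_1$, evaluated on rank-one projections, which gives exactly $\bra\psi (\sum M_i') \ket\psi \leq 1$, hence $\sum M_i' \leq \Id$.

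The main obstacle I anticipate is precisely this last bookkeeping: ensuring simultaneously that (a) the output operators are positive, (b) the support condition on $M_i'/\tr M_i'$ is met, and (c) $\sum M_i' \leq \Id$, while Talagrand's theorem as usually stated is agnostic about the sign structure of the generating vectors and about which ``endpoint'' of each zonotope segment it keeps. The cleanest route is to phrase everything from the start in terms of the positive representation of $K_{\mathrm{M}}$ from Proposition \ref{proposition:POVM-states} — i.e. $\|\Delta\|_{\mathrm{M}} = d\int |\tr(\Delta\rho)|\,d\mu(\rho)$ — so that $K_{\mathrm{M}}$ is presented as a zonoid whose ``generating measure'' lives on (positive) states, then apply the quantitative zonoid-approximation result to empirical/discretized versions of $\mu$ so that the selected atoms are honestly states in $\operatorname{supp}\mu$, with positive weights. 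Properties (a) and (b) then hold by construction, and (c) follows from the norm inequality $\|\cdot\|_{\mathrm{M}'} \leq \|\cdot\|_{\mathrm{M}} \leq \|\cdot\|_1$ restricted to rank-one projections, exactly as above. The dimension count $n \leq C\e^{-2}d^2\log d$ is inherited verbatim from Talagrand's bound with the ambient dimension $d^2$.
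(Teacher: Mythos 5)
Your proposal is correct and follows essentially the same route as the paper: reduce to a discrete POVM via Corollary \ref{corollary:approximation} (which preserves the support condition), apply Talagrand's theorem in its selection-and-reweighting form to the zonotope $\sum_i \conv\{\pm M_i\}$ so that the retained generators are positive multiples $\lambda_i M_i$ of the original elements, and verify the sub-POVM condition by testing $h_{Z'} \leq h_{K_{\mathrm{M}}} \leq \|\cdot\|_1$ on rank-one projections $\ketbra{x}{x}$. The only cosmetic difference is your one-sided rescaling to get $\|\cdot\|_{\mathrm{M}'}\leq\|\cdot\|_{\mathrm{M}}$ exactly, where the paper simply absorbs the $(1+\e)^{-1}$ into $(1-\e)$.
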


We do not know whether Theorem \ref{theorem:approximation-any} still holds if we want $\mathrm{M}'$ to be a POVM.
Given a sub-POVM $(M_i)_{1\leq i\leq n}$, there are at least
two natural ways to modify it into a POVM. A solution is to add an extra outcome corresponding to the operator $\Id-S$,
and another one is to substitute $S^{-1/2}M_iS^{-1/2}$
in place of $M_i$, as we proceeded in \eqref{eq:randomPOVM}. However for a general POVM, the error terms arising from this renormalization step may
exceed the quantity to be approximated.

\section{Sparsifying local POVMs}
\label{section:sparsification-multipartite}

Proposition \ref{prop:tensorizing-sparsifications} below is an immediate corollary of Lemma \ref{lemma:tensorizing-zonoids} and Proposition
\ref{proposition:tensor-POVM}. In words, it shows that, on a
multipartite system, a local POVM can be sparsified by tensorizing sparsifications of each of its factors.

\begin{proposition}
\label{prop:tensorizing-sparsifications}
Let $0<\e<1$. Let $\mathrm{M}_1,\ldots,\mathrm{M}_k$ be POVMs and $\mathrm{M}_1',\ldots,\mathrm{M}_k'$ be (sub-)POVMs, on $\C^{d_1},\ldots,\C^{d_k}$
respectively, satisfying, for all $1\leq i\leq k$, and for all $\Delta\in\cH(\C^{d_i})$,
\[ (1-\e)\|\Delta\|_{\mathrm{M}_i}\leq\|\Delta\|_{\mathrm{M}_i'}\leq(1+\e)\|\Delta\|_{\mathrm{M}_i}. \]
Then, for any $\Delta\in\cH(\C^{d_1}\otimes\cdots\otimes\C^{d_k})$,
\[ (1-\e)^k\|\Delta\|_{\mathrm{M}_1\otimes\cdots\otimes\mathrm{M}_k} \leq\|\Delta\|_{\mathrm{M}_1'\otimes\cdots\otimes\mathrm{M}_k'}
\leq(1+\e)^k\|\Delta\|_{\mathrm{M}_1\otimes\cdots\otimes\mathrm{M}_k}. \]
\end{proposition}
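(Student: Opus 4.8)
The plan is to deduce Proposition \ref{prop:tensorizing-sparsifications} directly from the two ingredients already in hand, namely Proposition \ref{proposition:tensor-POVM} (which identifies $\|\cdot\|_{\mathrm{M}_1\otimes\cdots\otimes\mathrm{M}_k}$ with the iterated $1$-tensor product $\|\cdot\|_{\mathrm{M}_1}\otimes^1\cdots\otimes^1\|\cdot\|_{\mathrm{M}_k}$, and dually $K_{\mathrm{M}_1\otimes\cdots\otimes\mathrm{M}_k}$ with $K_{\mathrm{M}_1}\otimes^Z\cdots\otimes^Z K_{\mathrm{M}_k}$) and Lemma \ref{lemma:tensorizing-zonoids} (monotonicity of $\otimes^Z$ under inclusions). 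First I would reformulate the hypotheses geometrically: for each $i$, the inequality $(1-\e)\|\cdot\|_{\mathrm{M}_i}\leq\|\cdot\|_{\mathrm{M}_i'}\leq(1+\e)\|\cdot\|_{\mathrm{M}_i}$ on $\cH(\C^{d_i})$ is, after taking polars, precisely the chain of inclusions
\[ \frac{1}{1+\e} K_{\mathrm{M}_i} \subset K_{\mathrm{M}_i'} \subset \frac{1}{1-\e} K_{\mathrm{M}_i}, \]
since passing to the polar reverses inclusions and turns scaling a norm by $\lambda$ into scaling the dual body by $1/\lambda$. (One only needs that $\|\cdot\|_{\mathrm{M}_i'}$ is a semi-norm with unit ball $B_{\mathrm{M}_i'}$ and $K_{\mathrm{M}_i'}=(B_{\mathrm{M}_i'})^\circ$; the sub-POVM case is handled identically since the construction of $K_{\mathrm{M}'}$ from $\|\cdot\|_{\mathrm{M}'}$ used nothing beyond $\|\cdot\|_{\mathrm{M}'}$ being a semi-norm.)

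Next I would apply Lemma \ref{lemma:tensorizing-zonoids}, which I would first extend from two factors to $k$ factors by an immediate induction (the $k$-fold zonoid tensor product is associative and the bipartite monotonicity propagates factor by factor, exactly as remarked after Proposition \ref{proposition:tensor-POVM} that the constructions extend to $k$ parties). Tensoring the displayed inclusions over $i=1,\dots,k$ and pulling the scalars $1/(1\pm\e)$ out through $\otimes^Z$ via the linear-map compatibility (the first unnamed Lemma of Section \ref{section:tensorizing}, applied with $S_i$ a homothety) yields
\[ \frac{1}{(1+\e)^k}\, K_{\mathrm{M}_1}\otimes^Z\cdots\otimes^Z K_{\mathrm{M}_k}
\subset K_{\mathrm{M}_1'}\otimes^Z\cdots\otimes^Z K_{\mathrm{M}_k'}
\subset \frac{1}{(1-\e)^k}\, K_{\mathrm{M}_1}\otimes^Z\cdots\otimes^Z K_{\mathrm{M}_k}. \]
By Proposition \ref{proposition:tensor-POVM} the two extreme bodies are $\frac{1}{(1\pm\e)^k}K_{\mathrm{M}_1\otimes\cdots\otimes\mathrm{M}_k}$ and the middle one is $K_{\mathrm{M}_1'\otimes\cdots\otimes\mathrm{M}_k'}$; taking polars once more (which reverses the inclusions and inverts the scalars) gives exactly
\[ (1-\e)^k\|\Delta\|_{\mathrm{M}_1\otimes\cdots\otimes\mathrm{M}_k}\leq\|\Delta\|_{\mathrm{M}_1'\otimes\cdots\otimes\mathrm{M}_k'}\leq(1+\e)^k\|\Delta\|_{\mathrm{M}_1\otimes\cdots\otimes\mathrm{M}_k} \]
for every $\Delta\in\cH(\C^{d_1}\otimes\cdots\otimes\C^{d_k})$, which is the claim.

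There is really no deep obstacle here — the proposition is, as the excerpt says, an immediate corollary — so the only points demanding a little care are bookkeeping ones: (a) checking that Proposition \ref{proposition:tensor-POVM} and Lemma \ref{lemma:tensorizing-zonoids}, both stated bipartitely, legitimately iterate to $k$ factors, which follows from the evident associativity of $\otimes^1$ and $\otimes^Z$ (visible from Definition \ref{definition:1-tensor} and formula \eqref{eq:tensorizing-zonoids}); and (b) making sure the polarity dictionary is applied consistently, in particular that one may pass freely between the norm formulation and the body formulation even for sub-POVMs, where $\|\cdot\|_{\mathrm{M}'}$ need not be a genuine norm — but this is harmless because $B_{\mathrm{M}'}$ is still a symmetric convex body (possibly unbounded in the non-informationally-complete directions) and the bipolar theorem still applies. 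Alternatively, one can sidestep polarity entirely and argue purely at the level of norms: Definition \ref{definition:1-tensor} makes it transparent that if $\|\cdot\|_{X_i}\leq\|\cdot\|_{X_i'}$ for all $i$ then $\|\cdot\|_{X_1\otimes^1\cdots\otimes^1 X_k}\leq\|\cdot\|_{X_1'\otimes^1\cdots\otimes^1 X_k'}$, and scaling each $\|\cdot\|_{X_i}$ by $\lambda_i$ scales the tensor-product norm by $\prod_i\lambda_i$; feeding in $\lambda_i=1\pm\e$ and invoking Proposition \ref{proposition:tensor-POVM} gives the result in one line. I would likely present this second, norm-level argument as the main proof and relegate the zonoid-inclusion picture to a remark, since it avoids any subtlety about unbounded polars.
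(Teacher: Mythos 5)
Your overall route is the one the paper intends: the authors give no written proof, saying only that the proposition is an immediate corollary of Lemma \ref{lemma:tensorizing-zonoids} and Proposition \ref{proposition:tensor-POVM}, and both of your arguments are fleshings-out of exactly those two ingredients. There is, however, a concrete bookkeeping error in your zonoid-inclusion version. Since $h_{K_{\mathrm{M}}}=\|\cdot\|_{\mathrm{M}}$ (the support function of the polar of the unit ball of a norm is the norm itself; equivalently, read it off \eqref{eq:K_M-zonotope}), the hypothesis $(1-\e)\|\cdot\|_{\mathrm{M}_i}\leq\|\cdot\|_{\mathrm{M}_i'}\leq(1+\e)\|\cdot\|_{\mathrm{M}_i}$ translates into
\[ (1-\e)\,K_{\mathrm{M}_i}\ \subset\ K_{\mathrm{M}_i'}\ \subset\ (1+\e)\,K_{\mathrm{M}_i}, \]
not into $\tfrac{1}{1+\e}K_{\mathrm{M}_i}\subset K_{\mathrm{M}_i'}\subset\tfrac{1}{1-\e}K_{\mathrm{M}_i}$ as you wrote: scaling a norm by $\lambda$ scales its unit ball by $1/\lambda$ but scales the \emph{polar} body $K$ by $\lambda$, so you inverted the scalars once too often. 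As written, your tensored chain followed by ``taking polars once more'' does not ``give exactly'' the claim; it yields the upper bound $\|\cdot\|_{\mathrm{M}_1'\otimes\cdots\otimes\mathrm{M}_k'}\leq(1-\e)^{-k}\|\cdot\|_{\mathrm{M}_1\otimes\cdots\otimes\mathrm{M}_k}$, which is strictly weaker than $(1+\e)^k\|\cdot\|_{\mathrm{M}_1\otimes\cdots\otimes\mathrm{M}_k}$ for $0<\e<1$. With the corrected inclusions no second polarity is needed at all: tensor them via the iterated Lemma \ref{lemma:tensorizing-zonoids}, pull the scalars out by the linear-map compatibility, and read off support functions.

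Your alternative norm-level argument is correct and is indeed the cleanest way to present this; the only caveat is that the monotonicity of $\otimes^1$ under norm domination on a fixed underlying space is not literally ``transparent'' from Definition \ref{definition:1-tensor}, because the two norms come with different embeddings into $L^1$. That monotonicity is precisely the content of Lemma \ref{lemma:tensorizing-zonoids} (via \eqref{eq:ztilde-phi}) or of Lemma 2 in \cite{RosenthalSzarek}, so cite one of those rather than the definition. Your remarks on the sub-POVM case (the semi-norm still has a well-defined polar body, namely the zonotope $\sum_i\conv\{\pm M_i'\}$, and Proposition \ref{proposition:tensor-POVM} extends to discrete sub-POVMs by the explicit formula) are correct and worth keeping.
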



Let us give a concrete application of Proposition \ref{prop:tensorizing-sparsifications}. We consider $k$ finite-dimensional Hilbert spaces
$\C^{d_1},\ldots,\C^{d_k}$ and define the local uniform POVM on the $k$-partite Hilbert space $\C^{d_1}\otimes\cdots\otimes\C^{d_k}$ as the tensor
product of the $k$ uniform POVMs $\mathrm{U}_{d_1},\ldots,\mathrm{U}_{d_k}$. We will denote it by $\mathrm{LU}$. The
corresponding distinguishability norm can be described, for any $\Delta\in\cH(\C^{d_1}\otimes\cdots\otimes\C^{d_k})$, as
\[ \|\Delta\|_{\mathrm{LU}} = d \E \left| \langle \psi_1 \otimes \cdots \otimes \psi_k | \Delta | \psi_1 \otimes \cdots \otimes \psi_k
\rangle \right|, \]
where $d=d_1\times\cdots\times d_k$ is the dimension of the global Hilbert space, and where the random unit vectors $\psi_1, \dots ,\psi_k$ are
independent and Haar-distributed in $\C^{d_1},\ldots,\C^{d_k}$ respectively.

The following multipartite generalization of Proposition \ref{proposition:norm-equivalence} shows that the norm $\|\cdot\|_{\mathrm{LU}}$,
in analogy to the norm $\|\cdot\|_{\mathrm{U}}$, is equivalent to a ``modified'' Hilbert--Schmidt norm.

\begin{proposition} [\cite{LW}] \label{proposition:norm-equivalence-multipartite}
For every $\Delta\in\cH(\C^{d_1}\otimes\cdots\otimes\C^{d_k})$, we have
\begin{equation} \label{eq:U-vs-modifiedHS-multi} \frac{1}{18^{k/2}}\|\Delta\|_{2(k)}\leq\|\Delta\|_{\mathrm{LU}} \leq\|\Delta\|_{2(k)}, \end{equation}
where the norm $\|\cdot\|_{2(k)}$ is defined as
\begin{equation} \label{eq:modifiedHS-multi} \|\Delta\|_{2(k)}=\sqrt{\sum_{I\subset \{1,\dots,k\}} \tr \left[\big(\tr_{I}\Delta\big)^2\right]}.
\end{equation}
Here $\tr_I$ denotes the partial trace over all parties $I \subset \{1,\dots,k\}$.
\end{proposition}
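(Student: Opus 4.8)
The plan is to reduce the statement to a Gaussian integral and then run the fourth-moment method, exactly as in the one-partite case treated in Proposition \ref{proposition:norm-equivalence}. First I would set up the Gaussian model: let $g_1,\dots,g_k$ be independent standard Gaussian vectors in $\C^{d_1},\dots,\C^{d_k}$, put $G=g_1\otimes\cdots\otimes g_k$ and $X=\langle G|\Delta|G\rangle$. Writing $g_i=r_i\psi_i$ with $r_i=\|g_i\|$ and $\psi_i$ Haar-distributed on the unit sphere, independent, and using $\E\,r_i^2=d_i$, one gets $\|\Delta\|_{\mathrm{LU}}=\E|X|$; and Wick's formula over the independent blocks (check it on product operators $\Delta=A_1\otimes\cdots\otimes A_k$, then extend by polarization) gives $\E\,X^2=\sum_{I\subseteq\{1,\dots,k\}}\tr\big[(\tr_I\Delta)^2\big]=\|\Delta\|_{2(k)}^2$. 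The upper bound is then immediate from Jensen's inequality: $\|\Delta\|_{\mathrm{LU}}=\E|X|\leq(\E\,X^2)^{1/2}=\|\Delta\|_{2(k)}$.

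For the lower bound I would use the fourth-moment (H\"older) inequality $\E\,X^2\leq(\E|X|)^{2/3}(\E\,X^4)^{1/3}$, i.e. $\|\Delta\|_{\mathrm{LU}}=\E|X|\geq(\E\,X^2)^{3/2}(\E\,X^4)^{-1/2}$, which reduces the task to the multipartite fourth-moment estimate $\E\,X^4\leq 18^k(\E\,X^2)^2$; granting it, $\|\Delta\|_{\mathrm{LU}}\geq 18^{-k/2}(\E\,X^2)^{1/2}=18^{-k/2}\|\Delta\|_{2(k)}$. To attack this estimate I would expand by Wick's formula, reorganize $(\C^{d_1}\otimes\cdots\otimes\C^{d_k})^{\otimes 4}$ as $\bigotimes_i(\C^{d_i})^{\otimes 4}$, and write $\E\,X^4=\big\langle\Delta^{\otimes 4},\bigotimes_i\Theta_4^{(i)}\big\rangle$ and $(\E\,X^2)^2=\big\langle\Delta^{\otimes 4},\bigotimes_i(\Theta_2^{(i)})^{\otimes 2}\big\rangle$, where $\Theta_4^{(i)}=\E\big(|g_i\rangle\langle g_i|^{\otimes 4}\big)=\sum_{\pi\in S_4}P_\pi$ is a positive multiple of $P_{\Sym^4(\C^{d_i})}$ and $\Theta_2^{(i)}=\E\big(|g_i\rangle\langle g_i|^{\otimes 2}\big)=\Id+F$ (with $F$ the swap) is a positive multiple of $P_{\Sym^2(\C^{d_i})}$. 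The one-partite case of Proposition \ref{proposition:norm-equivalence} is precisely the positivity $\big\langle\Delta^{\otimes 4},18(\Theta_2^{(i)})^{\otimes 2}-\Theta_4^{(i)}\big\rangle\geq 0$ for all Hermitian $\Delta$ on $\C^{d_i}$, and the whole content of the lower bound is that this positivity tensorizes over $i$.

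The hard part will be exactly this tensorization: $\E\,X^4$ and $(\E\,X^2)^2$ are genuine quartic forms in $\Delta$ that do \emph{not} factor over the tensor factors when $\Delta$ is not a product operator, so one cannot simply multiply the $k$ one-partite inequalities. I would try to exploit the positivity $\Delta^{\otimes 4}\geq 0$ together with the fact that $\Theta_4^{(i)}$ and $(\Theta_2^{(i)})^{\otimes 2}$ are positive combinations of the permutation operators $P_\pi$, which factor as $\bigotimes_iP_\pi^{(i)}$: decompose $18(\Theta_2)^{\otimes 2}-\Theta_4$ into $S_4$-isotypic blocks and establish a multiplicativity statement for the corresponding positive cones. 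As a sanity check, for product operators $\Delta=A_1\otimes\cdots\otimes A_k$ one does get $\E\,X^4/(\E\,X^2)^2=\prod_i\E\,X_i^4/(\E\,X_i^2)^2\leq 18^k$, which shows the constant has the right form.

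Finally, if one is willing to settle for a constant of the shape $c^k$ rather than the stated $18^{-k/2}$, there is a much shorter route that uses only the machinery of Section \ref{section:tensorizing}: since $\mathrm{LU}=\mathrm{U}_{d_1}\otimes\cdots\otimes\mathrm{U}_{d_k}$, Proposition \ref{proposition:tensor-POVM} gives $\|\cdot\|_{\mathrm{LU}}=\|\cdot\|_{\mathrm{U}_{d_1}}\otimes^1\cdots\otimes^1\|\cdot\|_{\mathrm{U}_{d_k}}$; one sandwiches each factor between $\tfrac1{\sqrt{18}}\|\cdot\|_{2(1)}$ and $\|\cdot\|_{2(1)}$ by Proposition \ref{proposition:norm-equivalence}, invokes monotonicity of $\otimes^1$ under the norm order (Lemma \ref{lemma:tensorizing-zonoids}), passes from $\otimes^1$ to $\otimes^2$ by iterating Proposition \ref{proposition:2-vs-1} one factor at a time (legitimate because a partially $1$-tensored space is still a subspace of $L^1$ while each peeled-off factor is Euclidean), and identifies $\|\cdot\|_{2(1)}^{\otimes^2 k}=\|\cdot\|_{2(k)}$ by the Wick check above; this yields $(2/\pi)^{(k-1)/2}\,18^{-k/2}\|\Delta\|_{2(k)}\leq\|\Delta\|_{\mathrm{LU}}\leq\|\Delta\|_{2(k)}$, weaker on the left by the factor $(2/\pi)^{(k-1)/2}$. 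Getting the clean $18^{-k/2}$ claimed in the proposition is what forces the direct fourth-moment computation of the previous two paragraphs.
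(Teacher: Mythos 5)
Your closing paragraph \emph{is} the paper's proof. The authors explicitly state that a direct proof of the constant $18^{-k/2}$ appears in \cite{LW}, and that what they present is only a deduction from the unipartite Proposition \ref{proposition:norm-equivalence} ``with a worse constant'': they identify $\|\cdot\|_{2(k)}=\|\cdot\|_{2(1)}\otimes^2\cdots\otimes^2\|\cdot\|_{2(1)}$, pass between $\otimes^2$ and $\otimes^1$ via Proposition \ref{proposition:2-vs-1}, use $\|\cdot\|_{\mathrm{LU}}=\|\cdot\|_{\mathrm{U}_{d_1}}\otimes^1\cdots\otimes^1\|\cdot\|_{\mathrm{U}_{d_k}}$ together with the monotonicity Lemma \ref{lemma:tensorizing-zonoids}, and end with $(2/\pi)^{(k-1)/2}\,18^{-k/2}\|\Delta\|_{2(k)}\leq\|\Delta\|_{\mathrm{LU}}\leq\|\Delta\|_{2(k)}$ --- exactly the chain and the exact loss factor you describe. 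So the part of your proposal that is actually complete coincides with the paper's argument, and your assessment of what it buys (a constant of the form $c^k$, not the stated one) matches the authors' own.

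Your primary route --- the direct multipartite fourth-moment computation --- is the \cite{LW}-style argument, and its setup is sound: $\|\Delta\|_{\mathrm{LU}}=\E|X|$ for $X=\langle G|\Delta|G\rangle$ with $G$ a product of independent Gaussians, $\E X^2=\|\Delta\|_{2(k)}^2$ by Wick, the upper bound by Cauchy--Schwarz, and the reduction of the lower bound via H\"older to $\E X^4\leq 18^k(\E X^2)^2$. But that last inequality is the entire content of the hard direction, and you do not prove it: as you yourself note, the quartic forms do not factor over the tensor legs for non-product $\Delta$, and the proposed remedy (``decompose into $S_4$-isotypic blocks and establish a multiplicativity statement for the corresponding positive cones'') is a program, not an argument --- one would at minimum need to exhibit $18\,\Theta_2^{\otimes 2}-\Theta_4$ as a positive combination of operators that tensorize, which is not automatic from positivity of the bilinear pairing against $\Delta^{\otimes 4}$. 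This gap is genuine, but it is also precisely the step the paper itself outsources to \cite{LW} rather than filling in. If you intend your write-up to be self-contained, either complete the moment estimate or present the tensor-norm route as the main proof and cite \cite{LW} for the sharper constant, as the paper does.
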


\begin{proof}[Proof of Proposition \ref{proposition:norm-equivalence-multipartite}]
A direct proof appears in \cite{LW}, but we find interesting to show that in can be deduced (with a worst constant) from Proposition
\ref{proposition:norm-equivalence}.
If we denote by $\scalar{\cdot}{\cdot}_H$ the inner product inducing a Euclidean norm $\|\cdot\|_H$, we have
\[ \langle A_1\otimes\cdots\otimes A_k, B_1\otimes\cdots\otimes B_k \rangle_{2(k)} =\langle A_1, B_1 \rangle_{2(1)}\times\cdots\times\langle A_k,
B_k \rangle_{2(1)} \]
which is equivalent to saying that
\[ \|\cdot\|_{2(k)} = \|\cdot\|_{2(1)} \otimes^2 \cdots \otimes^2 \|\cdot\|_{2(1)} .\]
We thus get by Proposition \ref{proposition:2-vs-1},
\[ c_0^{k-1}\|\cdot\|_{2(k)} \leq \|\cdot\|_{2(1)} \otimes^1 \cdots
\otimes^1  \|\cdot\|_{2(1)} \leq \|\cdot\|_{2(k)} \]
with $c_0=\sqrt{2/\pi}$.
Now, we also know by Proposition \ref{proposition:tensor-POVM} that on $\cH(\C^{d_1}\otimes\cdots\otimes\C^{d_k})$, $\|\cdot\|_{\mathrm{LU}}=
\|\cdot\|_{\mathrm{U}_{d_1}}\otimes^1\cdots\otimes^1\|\cdot\|_{\mathrm{U}_{d_k}}$, and by Proposition \ref{proposition:norm-equivalence}
that $c\|\cdot\|_{2(1)}\leq\|\cdot\|_{\mathrm{U}_d}\leq \|\cdot\|_{2(1)}$ for some constant $c$ ($c=1/\sqrt{18}$ works). So by Lemma
\ref{lemma:tensorizing-zonoids},
\[ c^k \, \|\cdot\|_{2(1)} \otimes^1 \cdots \otimes^1 \|\cdot\|_{2(1)}
\leq \|\cdot\|_{\mathrm{LU}} \leq  \|\cdot\|_{2(1)} \otimes^1 \cdots \otimes^1 \|\cdot\|_{2(1)}, \]
and therefore
\[ c_0^{k-1} c^k \|\cdot\|_{2(k)} \leq \|\cdot\|_{\mathrm{LU}} \leq \|\cdot\|_{2(k)}.  \qedhere\]
\end{proof}

Remarkably, local dimensions do not appear in equation \eqref{eq:U-vs-modifiedHS-multi}.
This striking fact that local POVMs can have asymptotically non-vanishing distinguishing power can be used to construct an algorithm that solves the
Weak Membership Problem for separability in quasi-polynomial time (see \cite{BCY} for a description in the bipartite case). Hence the importance of
being able to sparsify the
local uniform POVM by a POVM for which the locality property is preserved and which has a number of outcomes that optimally scales as the square of
the global dimension. We state the corresponding multipartite version of Theorem \ref{theorem:approximation-of-U}, which is straightforwardly obtained
by combining the unipartite version with Proposition \ref{prop:tensorizing-sparsifications}.

\begin{theorem} \label{theorem:approximation-LU}
Let $0<\e<1$. For all $1\leq i\leq k$, let $\mathrm{M}_i$ be a random POVM on $\C^{d_i}$ with $n_i\geq C\e^{-2}|\log\e|d_i^2$ outcomes, defined as in
\eqref{eq:randomPOVM}. Then, with high probability, the local POVM $\mathrm{M}_1\otimes\cdots\otimes\mathrm{M}_k$ on
$\C^{d_1}\otimes\cdots\otimes\C^{d_k}$ is such that, for any $\Delta\in\cH(\C^{d_1}\otimes\cdots\otimes\C^{d_k})$,
\[ (1-\e)^k\|\Delta\|_{\mathrm{LU}} \leq\|\Delta\|_{\mathrm{M}_1\otimes\cdots\otimes\mathrm{M}_k} \leq(1+\e)^k\|\Delta\|_{\mathrm{LU}}. \]
\end{theorem}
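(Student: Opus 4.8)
The plan is to deduce Theorem~\ref{theorem:approximation-LU} directly from the unipartite result, Theorem~\ref{theorem:approximation-of-U}, combined with the tensorization principle encapsulated in Proposition~\ref{prop:tensorizing-sparsifications}. First I would invoke Theorem~\ref{theorem:approximation-of-U} separately for each index $1\leq i\leq k$: since $n_i\geq C\e^{-2}|\log\e|d_i^2$, with probability at least $1-\exp(-c(\e)d_i)$ the random POVM $\mathrm{M}_i$ satisfies $(1-\e)\|\cdot\|_{\mathrm{U}_{d_i}}\leq\|\cdot\|_{\mathrm{M}_i}\leq(1+\e)\|\cdot\|_{\mathrm{U}_{d_i}}$ on $\cH(\C^{d_i})$. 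By a union bound, all $k$ events hold simultaneously except with probability at most $\sum_{i=1}^k\exp(-c(\e)d_i)$, which is still of the form $\exp(-c(\e)d)$ for a suitable constant (one may absorb the polynomial-in-$k$ loss, or simply note $\sum_i e^{-c(\e)d_i}\leq k e^{-c(\e)\min_i d_i}$); this is what ``with high probability'' means here.

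Conditioned on this event, I would apply Proposition~\ref{prop:tensorizing-sparsifications} with $\mathrm{M}_i'=\mathrm{M}_i$ (all genuine POVMs, so no sub-POVM subtlety arises), which yields immediately
\[ (1-\e)^k\|\Delta\|_{\mathrm{U}_{d_1}\otimes\cdots\otimes\mathrm{U}_{d_k}} \leq\|\Delta\|_{\mathrm{M}_1\otimes\cdots\otimes\mathrm{M}_k} \leq(1+\e)^k\|\Delta\|_{\mathrm{U}_{d_1}\otimes\cdots\otimes\mathrm{U}_{d_k}} \]
for every $\Delta\in\cH(\C^{d_1}\otimes\cdots\otimes\C^{d_k})$. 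It then only remains to identify $\mathrm{U}_{d_1}\otimes\cdots\otimes\mathrm{U}_{d_k}$ with $\mathrm{LU}$, which is precisely the definition of the local uniform POVM given just before Proposition~\ref{proposition:norm-equivalence-multipartite}. This finishes the argument.

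There is essentially no hard step: the entire content has been front-loaded into Theorem~\ref{theorem:approximation-of-U} (whose proof, via $\e$-nets and deviation inequalities, is the real work) and into Proposition~\ref{prop:tensorizing-sparsifications} (itself a formal consequence of Lemma~\ref{lemma:tensorizing-zonoids} and Proposition~\ref{proposition:tensor-POVM}). The only point requiring a word of care is the probabilistic bookkeeping in the union bound — making sure that the failure probability $\sum_i\exp(-c(\e)d_i)$ is still legitimately described as $\exp(-c'(\e)d)$; since the constant $c(\e)$ is allowed to change from line to line and $k\leq d$, this is harmless. One might also remark that the multiplicative constants $(1\pm\e)^k$ degrade with the number of parties $k$, which is unavoidable with this tensorization approach and already visible in Proposition~\ref{proposition:norm-equivalence-multipartite}; if a $k$-independent distortion were desired one would need a genuinely different, non-tensorized construction, which we do not pursue.
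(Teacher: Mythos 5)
Your proof is correct and follows exactly the paper's route: the paper obtains Theorem \ref{theorem:approximation-LU} precisely by applying Theorem \ref{theorem:approximation-of-U} to each factor and combining with Proposition \ref{prop:tensorizing-sparsifications}. Your union-bound bookkeeping for the failure probability is a reasonable (and slightly more explicit) rendering of the paper's unquantified ``with high probability.''
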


Let us rephrase the content of Theorem \ref{theorem:approximation-LU}: the local uniform POVM on $\C^{d_1}\otimes\cdots\otimes\C^{d_k}$ can be
$k\e$-approximated (in terms of distinguishability norms) by a POVM which is also local and has a total number of outcomes
$n=O(C^k\e^{-2k}|\log\e|^kd^2)$, where $d=d_1\times\cdots\times d_k$. Note that the dimensional dependence of $n$ is optimal. On the contrary,
the dependence of $n$ on $\e$ deteriorates as $k$ grows. The high-dimensional situation our result applies to is thus really
the one of a ``small'' number of ``large'' subsystems (i.e. $k$ fixed and $d_1,\ldots,d_k\rightarrow+\infty$), and not of
a ``large'' number of ``small'' subsystems.

\section{Proof of Theorem \ref{theorem:approximation-of-U}}
\label{sec:uniform-POVM}


In this section we prove Theorem \ref{theorem:approximation-of-U}.
Let $n\in\N$ and $(\ket{\psi_i})_{1 \leq i \leq n}$ be independent random unit vectors,
uniformly distributed on the unit sphere of $\C^d$. Our main technical estimates are a couple of probabilistic inequalities. Proposition
\ref{proposition:Wishart} is an immediate consequence of Theorem 1 in \cite{Aubrun}. Proposition \ref{proposition:large-deviations} is a consequence
of Bernstein inequalities. However, its proof requires some careful estimates which we postpone to Section \ref{section:bernstein}.

\begin{proposition} \label{proposition:Wishart}
If $(\ket{\psi_i})_{1 \leq i \leq n}$ are independent random vectors, uniformly distributed on the unit sphere of $\C^d$, then for every $0<\eta<1$
\[ \P \left( (1-\eta) \frac{\Id}{d} \leq \frac{1}{n} \sum_{i=1}^n \ketbra{\psi_i}{\psi_i} \leq (1+\eta) \frac{\Id}{d} \right) \geq 1 -
C^d \exp(-c n \eta^2). \]
\end{proposition}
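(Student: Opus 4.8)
The plan is to prove this as a statement about the smallest and largest eigenvalues of the (normalized) Wishart-type matrix $W = \frac{1}{n}\sum_{i=1}^n \ketbra{\psi_i}{\psi_i}$, where the $\ket{\psi_i}$ are i.i.d.\ uniform on the complex unit sphere of $\C^d$. Since $\E W = \Id/d$, the two-sided operator inequality $(1-\eta)\Id/d \leq W \leq (1+\eta)\Id/d$ is equivalent to the single scalar event $\|dW - \Id\|_\infty \leq \eta$. So it suffices to produce a tail bound of the form $\P(\|dW-\Id\|_\infty > \eta) \leq C^d \exp(-cn\eta^2)$, which is exactly the shape one gets from an $\e$-net argument over the sphere combined with a one-dimensional large deviation estimate, and this is precisely what is provided by Theorem~1 of \cite{Aubrun}.

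Concretely, I would first recall that a unit vector uniform on the sphere of $\C^d$ can be realized as $\ket{\psi_i} = \ket{g_i}/\|g_i\|_2$ with $g_i$ a standard complex Gaussian vector; equivalently, $d\,W$ has the same law as the sample covariance of appropriately normalized Gaussian vectors. The matrix $\sum_i \ketbra{g_i}{g_i}$ is a complex Wishart matrix, whose extreme singular values are exceedingly well understood: for $n \geq d$ its smallest and largest eigenvalues concentrate around $(\sqrt{n}\mp\sqrt{d})^2$, with Gaussian-type deviations. The normalization by the $\|g_i\|_2^2$ (each of which is a $\chi^2$ with $2d$ degrees of freedom, hence concentrated around $d$) contributes only lower-order fluctuations. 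Theorem~1 in \cite{Aubrun} packages exactly such a deviation inequality for this model, uniformly in the relevant parameters, in a form from which one reads off directly that
\[ \P\!\left( \Big\| \tfrac{d}{n}\textstyle\sum_{i=1}^n \ketbra{\psi_i}{\psi_i} - \Id \Big\|_\infty > \eta \right) \leq C^d \exp(-c n \eta^2) \]
for all $0<\eta<1$, with $C,c$ absolute constants. Rearranging the event inside the probability into the displayed operator-interval form, and passing to the complementary event, yields the claimed lower bound $1 - C^d\exp(-cn\eta^2)$.

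The only point requiring care — and what I would regard as the ``main obstacle'' if one were not allowed to invoke \cite{Aubrun} as a black box — is getting the dependence on $\eta$ to be genuinely $\eta^2$ (rather than $\eta$) in the exponent, uniformly down to small $\eta$, while keeping the prefactor merely exponential in $d$ (the $C^d$ coming from the cardinality of a $1/2$-net of the sphere of $\C^d \cong \R^{2d}$). This is the standard subtlety in net-based proofs of operator-norm deviation bounds: a crude union bound over a net gives the right prefactor but one must feed it a sub-exponential (Bernstein-type) tail for the scalar quadratic forms $\langle x | (dW - \Id) | x\rangle$, whose variance proxy scales like $1/n$, so that the per-point failure probability is $\exp(-cn\eta^2)$ in the regime $\eta \lesssim 1$ and $\exp(-cn\eta)$ only for $\eta \gtrsim 1$. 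Since the statement restricts to $0<\eta<1$, the $\eta^2$ regime is the operative one and everything goes through. As the excerpt explicitly states that ``Proposition~\ref{proposition:Wishart} is an immediate consequence of Theorem~1 in \cite{Aubrun},'' the proof reduces to quoting that theorem and performing the elementary rewriting of the event described above.
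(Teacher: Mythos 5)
Your proposal is correct and takes essentially the same route as the paper, which gives no further argument beyond observing that the statement is an immediate consequence of Theorem~1 in \cite{Aubrun}; your rewriting of the two-sided operator inequality as the event $\|\frac{d}{n}\sum_i \ketbra{\psi_i}{\psi_i}-\Id\|_\infty\leq\eta$ and the surrounding discussion of the net-plus-Bernstein mechanism are accurate context for why that citation suffices.
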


\begin{proposition} \label{proposition:large-deviations}
Let $\Delta \in \cH(\C^d)$, and $(\ket{\psi_i})_{1 \leq i \leq n}$ be independent random vectors, uniformly distributed on the unit sphere of $\C^d$.
For $1 \leq i \leq n$, consider the random variables $X_i = d| \bra{\psi_i} \Delta \ket{\psi_i} |$ and $Y_i=X_i-\E X_i=X_i - \|\Delta\|_{\mathrm{U}_d}$.
Then, for any $t > 0$,
\[ \P \left( \left| \frac{1}{n} \sum_{i=1}^n Y_i \right| \geq t \| \Delta \|_{\mathrm{U}_d} \right) \leq 2 \exp ( - c'_0 n \min(t,t^2)) .\]
\end{proposition}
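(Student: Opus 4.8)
The plan is to prove Proposition \ref{proposition:large-deviations} by a truncated Bernstein-type argument, controlling the tails of the i.i.d.\ random variables $Y_i = X_i - \E X_i$ with $X_i = d|\bra{\psi_i}\Delta\ket{\psi_i}|$. Without loss of generality we may normalize so that $\|\Delta\|_{\mathrm{U}_d} = \E X_i = 1$; the goal is then to show $\P(|\frac1n\sum Y_i| \geq t) \leq 2\exp(-c'_0 n \min(t,t^2))$. The standard obstacle for Bernstein's inequality is that the $X_i$ need not be bounded: although $|\bra{\psi_i}\Delta\ket{\psi_i}| \leq \|\Delta\|_\infty$, the operator norm $\|\Delta\|_\infty$ can be arbitrarily large compared to $\|\Delta\|_{\mathrm{U}_d}$ (think of a rank-one $\Delta$). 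So the first and main step is to establish good moment/sub-exponential control on $X_i$ in terms of its mean alone, using Proposition \ref{proposition:norm-equivalence}.

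The key observation I would use is the norm equivalence $\frac{1}{\sqrt{18}}\|\Delta\|_{2(1)} \leq \|\Delta\|_{\mathrm{U}_d} \leq \|\Delta\|_{2(1)}$, together with the fact (noted right after Proposition \ref{proposition:norm-equivalence}) that $\|\Delta\|_{2(1)}$ is the $L^2$-norm of $\langle g|\Delta|g\rangle$ for $g$ a standard complex Gaussian vector, while $\|\Delta\|_{\mathrm{U}_d}$ is its $L^1$-norm. Hypercontractivity of Gaussian (or, equivalently, spherical) polynomial chaos of degree $2$ then gives that $\langle\psi|\Delta|\psi\rangle$, viewed as a degree-$2$ polynomial in the real and imaginary parts of $\psi$, has all its $L^p$-norms comparable: $\|X_i\|_p \leq (Cp)\,\|X_i\|_2 \leq C'p\,\|X_i\|_1 = C'p$ after our normalization. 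In other words $X_i$ is sub-exponential with $\psi$-norm $O(1)$, hence so is $Y_i = X_i - 1$, with $\E Y_i = 0$ and $\E Y_i^2 = \mathrm{Var}(X_i) \leq \E X_i^2 \leq C$. This is exactly the input that feeds a Bernstein inequality for sums of centered sub-exponential variables.

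Given that, the second step is routine: by the Bernstein inequality for i.i.d.\ centered random variables with $\E Y_i^2 \leq \sigma^2 = O(1)$ and sub-exponential parameter $K = O(1)$, one has
\[ \P\left(\left|\frac1n\sum_{i=1}^n Y_i\right| \geq t\right) \leq 2\exp\left(-cn\min\left(\frac{t^2}{\sigma^2},\frac{t}{K}\right)\right) \leq 2\exp(-c'_0 n\min(t,t^2)), \]
which is precisely the claimed bound after undoing the normalization (replacing $t$ by $t\|\Delta\|_{\mathrm{U}_d}$ throughout). Alternatively, if one wants to avoid quoting a black-box sub-exponential Bernstein inequality, one can prove the moment generating function bound $\E e^{\lambda Y_i} \leq e^{C\lambda^2}$ for $|\lambda| \leq c$ directly from the moment bounds $\|Y_i\|_p \leq C'p$ via the series expansion $\E e^{\lambda Y_i} = \sum_k \lambda^k \E Y_i^k / k!$, and then optimize over $\lambda$ in the two regimes $t \lesssim 1$ and $t \gtrsim 1$ — this is the hands-on version that I expect the authors actually carry out in Section \ref{section:bernstein}.

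The main obstacle is therefore entirely concentrated in the first step: proving the uniform moment bound $\|X_i\|_p \leq C'p$. The cleanest route is via hypercontractivity on the sphere $S_{\C^d}$ (or the Gaussian hypercontractive inequality transferred through the standard comparison between $\psi$ uniform on the sphere and $g/\|g\|$), applied to the quadratic form $q(\psi) = \langle\psi|\Delta|\psi\rangle$; since $q$ lies in the span of spherical harmonics of degree $\leq 2$, hypercontractivity yields $\|q\|_p \leq (p-1)\|q\|_2$, and then Proposition \ref{proposition:norm-equivalence} converts $\|q\|_2 = \frac1d\|\Delta\|_{2(1)}$ into $O(1/d \cdot \|\Delta\|_{\mathrm{U}_d})$, i.e.\ $\|X_i\|_p = d\|q\|_p \leq Cp\|\Delta\|_{\mathrm{U}_d}$. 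One must be a little careful that the constant in the hypercontractive estimate is genuinely dimension-free (it is, for fixed degree), and that the sphere-versus-Gaussian transference does not cost a dimension-dependent factor in the relevant range of $p$ (which it does not, since we only need $p$ up to order $n$ and the corrections are lower order); handling these points carefully is the technical heart of the argument, and is presumably why the authors defer it to a dedicated section.
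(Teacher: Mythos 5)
Your proposal is correct and shares the paper's overall skeleton --- reduce the statement to a sub-exponential ($\psi_1$) moment bound on $X = d|\bra{\psi}\Delta\ket{\psi}|$ of the form $(\E X^p)^{1/p} \lesssim p\,\|\Delta\|_{2(1)}$, convert $\|\Delta\|_{2(1)}$ into $\|\Delta\|_{\mathrm{U}_d}$ via Proposition \ref{proposition:norm-equivalence}, and finish with Bernstein's inequality for centered $\psi_1$ variables --- but you prove the key moment bound by a genuinely different method. The paper computes the even moments explicitly: writing $\E\left[\tr(\Delta P)\right]^{2q} = \tr\left(\Delta^{\otimes 2q}\,\E P^{\otimes 2q}\right)$ and using $\E P^{\otimes 2q} = \frac{(2q)!}{d(d+1)\cdots(d+2q-1)}P_{\Sym^{2q}(\C^d)}$, it expands the projector over permutations and bounds each term through the cycle decomposition by $\prod_i |\tr \Delta^{\ell_i}| \leq \|\Delta\|_{2(1)}^{2q}$, which yields $(\E X^{2q})^{1/2q} \leq 2q\,\|\Delta\|_{2(1)}$ and hence $\|X\|_{\psi_1} \leq \|\Delta\|_{2(1)}$ with an explicit constant (Lemma \ref{lemma:psi1}). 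You instead invoke hypercontractivity for degree-$2$ chaos; both work, the paper's computation being elementary and self-contained given the standard symmetric-subspace formula, while your route is shorter modulo the black box and would generalize to higher-degree polynomials of $\psi$. On the one point you flag as delicate --- the sphere-versus-Gaussian transference --- the cleanest fix is to note that $\psi = g/\|g\|$ is independent of $\|g\|$, so that $\E|\bra{g}\Delta\ket{g}|^p = \E\|g\|^{2p}\cdot\E|\bra{\psi}\Delta\ket{\psi}|^p$, and Jensen gives $\E\|g\|^{2p} \geq d^p$; hence $\left(\E\, (d|\bra{\psi}\Delta\ket{\psi}|)^p\right)^{1/p} \leq \left(\E|\bra{g}\Delta\ket{g}|^p\right)^{1/p} \leq (p-1)\|\Delta\|_{2(1)}$ exactly, with no dimension-dependent loss for any $p$. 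The Bernstein step is then identical in both arguments (the paper quotes it as Theorem \ref{th:Bernstein} rather than re-deriving the moment generating function bound by hand).
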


We now show how to derive Theorem \ref{theorem:approximation-of-U} from the estimates in Propositions \ref{proposition:Wishart} and \ref{proposition:large-deviations}.
For each $1\leq i\leq n$, set $P_i=\ketbra{\psi_i}{\psi_i}$, and introduce the (random) norm defined for any $\Delta\in\cH(\C^d)$ as
\[ |||\Delta||| = \frac{d}{n} \sum_{i=1}^n |\tr (\Delta P_i)|. \]
We will now prove that $|||\cdot|||$ is, with probability close to $1$, a good approximation to $\|\cdot\|_{\mathrm{U}_d}$.
First, using Proposition \ref{proposition:large-deviations}, we obtain that for any $0 < \e < 1$ and any $\Delta \in \cH(\C^d)$
\begin{equation} \label{eq:single-Delta}
\P \left( (1-\e) \|\Delta\|_{\mathrm{U}_d} \leq |||\Delta||| \leq (1+\e) \|\Delta\|_{\mathrm{U}_d} \right) \geq 1- 2\exp(-c'_0 n \e^2).
\end{equation}

We next use a net argument. Fix $0<\e<1/3$ and a $\e$-net $\mathcal{N}$ inside the unit ball for the norm $\|\cdot\|_{\mathrm{U}_d}$,
with respect to the distance induced by
$\|\cdot\|_{\mathrm{U}_d}$. A standard volumetric argument (see \cite{Pisier}, Lemma 4.10) shows that we may assume
$\card (\mathcal{N}) \leq (1 + 2/\e)^{d^2} \leq (3/\e)^{d^2}$.
Introduce the quantities
\[ A := \sup \{ ||| \Delta ||| \st \| \Delta \|_{\mathrm{U}_d} \leq 1 \}, \]
\[ A' := \sup \{ ||| \Delta ||| \st \Delta \in \mathcal{N} \}. \]
Given $\Delta$ such that $\|\Delta\|_{\mathrm{U}_d} \leq 1$, there is $\Delta_0 \in \mathcal{N}$ with $\|\Delta-\Delta_0\|_{\mathrm{U}_d} \leq \e$.
By the triangle inequality,
we have $|||\Delta||| \leq A' + |||\Delta - \Delta_0||| \leq A'+ \e A$. Taking supremum over $\Delta$ yields $A \leq A'+\e A$ i.e. $A \leq
\frac{A'}{1-\e}$.

If we introduce $B := \inf \{ ||| \Delta ||| \st \| \Delta \|_{\mathrm{U}_d} = 1 \}$ and $B' := \inf \{ ||| \Delta ||| \st \Delta \in \mathcal{N} \}$,
a similar argument shows that $B \geq B' - \e A$, so that in fact $B \geq B' - \frac{\e A'}{1-\e}$.
We therefore have the implications
\begin{equation} \label{eq:implication} 1-\e \leq B' \leq A' \leq 1+\e \ \ \Longrightarrow \ \  1-\e-\frac{\e(1+\e)}{1-\e} \leq  B \leq A \leq
\frac{1+\e}{1-\e}
\ \ \Longrightarrow \ \ 1-3\e \leq B \leq A \leq 1+ 3\e. \end{equation}
By the union bound, we get from \eqref{eq:single-Delta} that $\P(1-\e \leq B' \leq A' \leq 1+\e) \geq 1-2 \card( \mathcal{N}) \exp(-c'_0n \e^2)$.
Combined with (\ref{eq:implication}), and using homogeneity of norms, this yields
\begin{equation} \label{eq:global-Delta} \P \Big( (1-3\e) \|\cdot \|_{\mathrm{U}_d} \leq |||\cdot||| \leq (1+3\e) \|\cdot\|_{\mathrm{U}_d} \Big) \geq 1- 2 \left(\frac{3}{\e}\right)^{d^2}
\exp(-c'_0n \e^2). \end{equation}
This probability estimate is non-trivial, and can be made close to $1$, provided $n \gtrsim d^2 \e^{-2} | \log \e |$.

Whenever $n \geq d$, the
vectors $(\ket{\psi_i})_{1 \leq i \leq n}$ generically span $\C^d$, and therefore the operator $S = P_1 + \cdots + P_n$ is invertible. We may
then define $\widetilde{P}_i = S^{-1/2} P_i S^{-1/2}$ so that $\mathrm{M} = (\widetilde{P}_i)_{1 \leq i \leq n}$ is a POVM. The norm associated to
$\mathrm{M}$ is, for any $\Delta\in\cH(\C^d)$,
\[ \|\Delta\|_{\mathrm{M}} = \sum_{i=1}^n |\tr (\Delta\widetilde{P}_i)|. \]
We now argue that the norms $|||\cdot|||$ and
$\|\cdot\|_{\mathrm{M}}$ are similar enough (modulo normalization), because the modified operators $\widetilde{P}_i$ are close
enough to the initial ones $P_i$. This is achieved by showing that $T:=\left(\frac{d}{n} S \right)^{-1/2}$ is close to $\Id$ (in operator-norm distance).
We use Proposition \ref{proposition:Wishart} for $\eta = \e \|\Delta\|_{\mathrm{U}_d}/\|\Delta\|_1$.
By Proposition \ref{proposition:norm-equivalence}, we have $\eta \geq \e / \sqrt{18d}$.
Proposition \ref{proposition:Wishart} implies that
\begin{equation} \label{eq:boundZ} \P( \|T-\Id\|_{\infty} \geq \eta ) \leq \P( \|T^{-2}-\Id\|_{\infty} \geq \eta ) \leq C^d \exp(-c' n \e^2 /d).
\end{equation}
This upper bound is much smaller than $1$ provided $n \geq C_1 \e^{-2}d^2$. Also, note that the event $\|T-\Id\|_{\infty} \leq \eta$
implies that
\[ \|\Delta-T \Delta T\|_{\mathrm{M}} \leq \|\Delta - T\Delta T\|_{1} \leq  \|\Delta\|_1 \|\Id-T\|_{\infty}
\left( 1 + \|T\|_{\infty} \right) \leq  2 \eta \|\Delta\|_1 = 2 \e \|\Delta\|_{\mathrm{U}_d}. \]

Using the cyclic property of the trace, we check that $\|T\Delta T\|_{\mathrm{M}} = |||\Delta|||$.
Now, choose $n$ larger than both $C_0 \e^{-2} |\log \e| d^2$ and $C_1 \e^{-2}d^2$.
With high probability, the events from equations \eqref{eq:global-Delta} and \eqref{eq:boundZ} both hold.
We then obtain for every $\Delta \in \cH(\C^d)$,
\[
\|\Delta\|_{\mathrm{M}} \leq \| T\Delta T\|_{\mathrm{M}} + \|\Delta-T\Delta T\|_{\mathrm{M}} \leq |||\Delta||| + 2\e \|\Delta\|_{\mathrm{U}_d} \leq
(1+ 5\e) \|\Delta\|_{\mathrm{U}_d}
\]
and similarly $\|\Delta\|_{\mathrm{M}} \geq (1- 5\e) \|\Delta\|_{\mathrm{U}_d}$.
This is precisely the result from Theorem \ref{theorem:approximation-of-U} with $5\e$ instead of $\e$, which of course can be absorbed by renaming
the constants appropriately.

\section{Proof of Proposition \ref{proposition:large-deviations}}

\label{section:bernstein}

The proof is a direct application of a large deviation inequality for sums of
independent sub-exponential (or $\psi_1$) random variables. Recall that the $\psi_1$-norm of a random variable $X$ (which quantifies the exponential
decay
of the tail) may be defined via the growth of even moments
\[ \|X\|_{\psi_1} := \sup_{q \in \N}\frac{1}{2q}\big(\E |X|^{2q}\big)^{1/{2q}} .\]
This definition is more practical than the standard definition through the Orlicz function $x \mapsto \exp(x)-1$, and leads to an equivalent norm
(see \cite{CGLP},
Corollary 1.1.6). The large deviation inequality for a sum of independent $\psi_1$ random variables is known as Bernstein's inequality.

\begin{theorem}[Bernstein's inequality, see \cite{CGLP}, Theorem 1.2.5.]
\label{th:Bernstein}
Let $X_1,\ldots,X_n$ be $n$ independent $\psi_1$ random variables with mean zero.
Setting $M=\underset{1\leq i\leq n}{\max}\|X_i\|_{\psi_1}$ and $\sigma^2=\frac{1}{n}\underset{1\leq i\leq n}{\sum}\|X_i\|_{\psi_1}^2$, we have
\[ \forall\ t>0,\ \P\left(\left|\frac{1}{n}\sum_{i=1}^n X_i\right|\geq t\right)\leq 2\exp\left(-c_0 n\min\left(\frac{t^2}{\sigma^2},\frac{t}{M}\right)
\right),\]
$c_0>0$ being a universal constant.
\end{theorem}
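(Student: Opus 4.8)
The plan is to use the classical exponential moment (Laplace transform) method. The heart of the matter is a bound on the moment generating function of a single centered $\psi_1$ variable: I claim there are universal constants $c,C_1>0$ such that if $\E X=0$ and $\|X\|_{\psi_1}=K$, then
\[ \E \exp(\lambda X) \leq \exp\!\big(C_1 \lambda^2 K^2\big) \quad \text{whenever } |\lambda| \leq c/K. \]
Granting this, the remainder is a standard Chernoff argument followed by an optimization over $\lambda$ which naturally produces the two regimes $t^2/\sigma^2$ and $t/M$ appearing in the statement.

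To establish the MGF bound I would first convert the even-moment definition of $\|\cdot\|_{\psi_1}$ into a bound on all absolute moments. By definition $\E|X|^{2q}\leq (2qK)^{2q}$ for every $q\in\N$; interpolating to odd exponents via the monotonicity of $L^p$-norms (Lyapunov's inequality, comparing with the next even moment) gives $\E|X|^k\leq (C_0 k K)^k$ for every integer $k\geq 1$ and a universal $C_0$ (one may take $C_0=2$). Then I would Taylor-expand the exponential, using $\E X=0$ to kill the linear term:
\[ \E \exp(\lambda X) = 1 + \sum_{k\geq 2}\frac{\lambda^k\,\E X^k}{k!} \leq 1 + \sum_{k\geq 2}\frac{|\lambda|^k (C_0 k K)^k}{k!}. \]
The elementary estimate $k!\geq (k/e)^k$ collapses the general term to $(C_0 e\,|\lambda|K)^k$, so the series is geometric; as soon as $|\lambda|\leq c/K$ with $c$ small enough that $C_0 e c\leq 1/2$, its tail from $k=2$ is bounded by $C_1\lambda^2 K^2$. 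Using $1+u\leq e^u$ yields the claimed bound.

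With the single-variable estimate in hand, independence gives $\E\exp(\lambda\sum_i X_i)=\prod_i\E\exp(\lambda X_i)\leq\exp(C_1\lambda^2 n\sigma^2)$, valid for $|\lambda|\leq c/M$ (the binding constraint coming from the largest $\|X_i\|_{\psi_1}$). Markov's inequality applied to $\exp(\lambda\sum_i X_i)$ then gives, for $t>0$,
\[ \P\Big(\sum_{i=1}^n X_i \geq nt\Big) \leq \exp\!\big(-\lambda n t + C_1\lambda^2 n\sigma^2\big). \]
The unconstrained optimum is $\lambda^\ast=t/(2C_1\sigma^2)$. If $\lambda^\ast\leq c/M$ I take $\lambda=\lambda^\ast$ and obtain the sub-Gaussian bound $\exp(-nt^2/(4C_1\sigma^2))$; otherwise I take the boundary value $\lambda=c/M$, and a short computation (using precisely $\lambda^\ast>c/M$, i.e. $\sigma^2<tM/(2C_1 c)$, to absorb the quadratic term into half of the linear one) gives the sub-exponential bound $\exp(-cnt/(2M))$. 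The two cases combine into $\exp(-c_0 n\min(t^2/\sigma^2,\,t/M))$. Applying the same reasoning to the variables $-X_i$ (which are also centered with identical $\psi_1$-norm) controls the lower tail, and a union bound produces the factor $2$. Since $\P(|n^{-1}\sum_i X_i|\geq t)=\P(|\sum_i X_i|\geq nt)$, this is exactly the asserted inequality. The only genuinely delicate step is the MGF lemma — extracting the moment bounds $\E|X|^k\leq(C_0 kK)^k$ from the even-moment definition and checking convergence of the Taylor series in the correct range of $\lambda$; everything downstream is optimization and bookkeeping.
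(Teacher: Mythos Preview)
Your proof is correct and follows the standard Laplace-transform/Chernoff route to Bernstein's inequality. However, there is nothing to compare against: the paper does \emph{not} prove this theorem. It is stated with an explicit citation (``see \cite{CGLP}, Theorem 1.2.5'') and used as a black box in the proof of Proposition~\ref{proposition:large-deviations}. What you have written is precisely the kind of argument one would expect to find in the cited reference, and the details you give --- extracting all-moment bounds $\E|X|^k\leq (C_0kK)^k$ from the even-moment definition, summing the Taylor series geometrically in the range $|\lambda|\leq c/K$, and then optimizing over $\lambda$ subject to the constraint $|\lambda|\leq c/M$ --- are all sound.
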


For $\Delta \in \cH(\C^d)$, consider the random variables $X_i = d|\tr (\Delta P_i)|$ with $P_i=\ketbra{\psi_i}{\psi_i}$,
and $Y_i=X_i - \E X_i =d|\tr (\Delta P_i)| - \|\Delta\|_{\mathrm{U}_d}$. The random variables
$Y_i$ are independent and have mean zero. The key lemma is a bound on their $\psi_1$ norm.

\begin{lemma} \label{lemma:psi1}
 Let $\Delta\in \cH(\C^d)$ and consider the random variable $X:=d|\tr (\Delta P) |$, where $P=\ketbra{\psi}{\psi}$,
with $\psi$ uniformly distributed on the unit sphere of $\C^d$.
Then $\|X\|_{\psi_1} \leq \|\Delta\|_{2(1)}$ and $\|X - \E X\|_{\psi_1} \leq 3 \|\Delta\|_{2(1)} \leq 3\sqrt{18} \|\Delta\|_{\mathrm{U}_d}$.
\end{lemma}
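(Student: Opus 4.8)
The plan is to estimate the even moments of $X = d|\tr(\Delta P)|$ directly, using the fact that $P = \ketbra{\psi}{\psi}$ with $\psi$ Haar-distributed on the unit sphere of $\C^d$, so that $\tr(\Delta P) = \bra{\psi}\Delta\ket{\psi}$. Concretely, for $q \in \N$,
\[
\E X^{2q} = d^{2q}\, \E |\bra{\psi}\Delta\ket{\psi}|^{2q},
\]
and I want to bound this by $(2q)^{2q}\|\Delta\|_{2(1)}^{2q}$ so that, taking $2q$-th roots and dividing by $2q$, the definition of $\|\cdot\|_{\psi_1}$ gives $\|X\|_{\psi_1} \le \|\Delta\|_{2(1)}$. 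The cleanest route is to pass to a standard complex Gaussian vector $g$ in $\C^d$: writing $g = r\psi$ with $r = \|g\|$ independent of the direction $\psi$, we have $\bra{g}\Delta\ket{g} = r^2 \bra{\psi}\Delta\ket{\psi}$, hence $\E|\bra{g}\Delta\ket{g}|^{2q} = (\E r^{4q})\,\E|\bra{\psi}\Delta\ket{\psi}|^{2q}$. Since $r^2$ is a sum of $d$ i.i.d. exponentials, $\E r^{4q} = \frac{(d+2q-1)!}{(d-1)!} \ge d^{2q}$, so it suffices to bound the Gaussian moments: $\E|\bra{g}\Delta\ket{g}|^{2q} \le (2q)^{2q}\|\Delta\|_{2(1)}^{2q}$, recalling from the text that $\|\Delta\|_{2(1)}$ is exactly the $L^2$-norm of the real random variable $\bra{g}\Delta\ket{g}$.

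The Gaussian moment bound is the technical heart. I would diagonalize $\Delta = \sum_j \lambda_j \ketbra{e_j}{e_j}$, so that $\bra{g}\Delta\ket{g} = \sum_j \lambda_j |g_j|^2$ where $g_j$ are i.i.d. standard complex Gaussians, i.e. $|g_j|^2$ are i.i.d. with mean $1$. Thus $Z := \bra{g}\Delta\ket{g}$ is a linear combination of independent centered-plus-mean exponential-type variables; its moment generating function is finite in a neighbourhood of $0$ and one computes $\|Z\|_{\psi_1}$ in terms of $\|\Delta\|_\infty$ and, via the variance, $\|\Delta\|_{2(1)}$. A clean way: $Z - \E Z = \sum_j \lambda_j(|g_j|^2 - 1)$ is a sum of independent $\psi_1$ variables with $\|\lambda_j(|g_j|^2-1)\|_{\psi_1} \le c|\lambda_j|$, and using the known fact that for sums of independent mean-zero $\psi_1$ variables the $\psi_1$-norm is controlled by $(\sum_j \|\lambda_j(|g_j|^2-1)\|_{\psi_1}^2)^{1/2} \lesssim (\sum_j \lambda_j^2)^{1/2} = \|\Delta\|_2$ together with the maximum $\max_j|\lambda_j| = \|\Delta\|_\infty$. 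Combining with $|\E Z| = |\tr\Delta|$ and the elementary inequalities $\|\Delta\|_2 \le \|\Delta\|_{2(1)}$, $|\tr\Delta| \le \|\Delta\|_{2(1)}$, $\|\Delta\|_\infty \le \|\Delta\|_2 \le \|\Delta\|_{2(1)}$, one gets $\|Z\|_{\psi_1} \le \|\Delta\|_{2(1)}$ after tracking constants; a direct even-moment computation via multinomial expansion is an alternative that avoids invoking the sum lemma but is more laborious.

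For the second assertion, I would use the triangle inequality for the $\psi_1$ (semi-)norm: $\|X - \E X\|_{\psi_1} \le \|X\|_{\psi_1} + |\E X|\,\|1\|_{\psi_1}$. Since $\|1\|_{\psi_1} = \sup_q \frac{1}{2q} = \frac{1}{2} \le 1$ and $\E X = \|\Delta\|_{\mathrm{U}_d} \le \|\Delta\|_{2(1)}$ by Proposition \ref{proposition:norm-equivalence}, this gives $\|X - \E X\|_{\psi_1} \le 2\|\Delta\|_{2(1)}$, comfortably inside the claimed factor $3$ (the slack absorbs whatever constant loss occurs above). The final inequality $3\|\Delta\|_{2(1)} \le 3\sqrt{18}\,\|\Delta\|_{\mathrm{U}_d}$ is then just the left-hand side of \eqref{eq:U-vs-modifiedHS}.

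The main obstacle I anticipate is getting the constant in the Gaussian moment bound to be clean enough that $\|X\|_{\psi_1} \le \|\Delta\|_{2(1)}$ holds with constant exactly $1$ (not merely up to a universal factor); this forces either a careful bare-hands even-moment estimate — expanding $(\sum_j \lambda_j |g_j|^2)^{2q}$, using $\E|g_j|^{2k} = k!$, and dominating the resulting sum by $(2q)^{2q}(\sum_j \lambda_j^2 + (\sum_j\lambda_j)^2)^q$ — or a sharp version of the one-dimensional exponential-tail estimate. Once that clean bound is in hand, everything else is bookkeeping with the triangle inequality and the already-established norm comparisons.
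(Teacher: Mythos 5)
Your reduction from the sphere to the Gaussian vector (via $\E r^{4q}=d(d+1)\cdots(d+2q-1)\ge d^{2q}$, so that $\E X^{2q}\le \E|\langle g|\Delta|g\rangle|^{2q}$) is correct and is essentially the paper's computation in disguise, and your handling of the centering step and of the final comparison with $\|\cdot\|_{\mathrm{U}_d}$ is fine. The gap is exactly where you flag it, and it is not closed: the entire content of the lemma is the moment bound $\E\,\langle g|\Delta|g\rangle^{2q}\le(2q)^{2q}\|\Delta\|_{2(1)}^{2q}$ with that clean constant, and neither of your two proposed routes delivers it as written. The route via the general ``sum of independent mean-zero $\psi_1$ variables'' lemma can only give $\|Z-\E Z\|_{\psi_1}\le C\big(\sum_j\lambda_j^2\big)^{1/2}$ for some universal $C$ which there is no reason to expect equals $1$, and the deterministic part $|\tr\Delta|$ must then be added back with its own constant; so ``tracking constants'' lands you at $\|X\|_{\psi_1}\le C\|\Delta\|_{2(1)}$, which would still suffice for Proposition \ref{proposition:large-deviations} but is not the inequality claimed in the lemma. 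The second route (bare-hands multinomial expansion) is the one that works, but it is precisely the missing technical argument: you have not indicated how to dominate the multinomial sum by $(2q)^{2q}\big(\tr(\Delta^2)+(\tr\Delta)^2\big)^{q}$.

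The clean way to finish in your Gaussian framing is Wick's theorem: $\E\,\langle g|\Delta|g\rangle^{2q}=\sum_{\pi\in\mathfrak{S}_{2q}}\tr\big(\Delta^{\otimes 2q}U(\pi)\big)=\sum_{\pi}\prod_i\tr(\Delta^{\ell_i})$, the product running over the cycle lengths $\ell_i$ of $\pi$. Then $|\tr(\Delta^{\ell})|\le[\tr(\Delta^2)]^{\ell/2}\le\|\Delta\|_{2(1)}^{\ell}$ for $\ell\ge2$ and $|\tr\Delta|\le\|\Delta\|_{2(1)}$ for $\ell=1$ give $\big|\prod_i\tr(\Delta^{\ell_i})\big|\le\|\Delta\|_{2(1)}^{2q}$, hence the bound $(2q)!\,\|\Delta\|_{2(1)}^{2q}\le(2q)^{2q}\|\Delta\|_{2(1)}^{2q}$. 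This is exactly what the paper does, phrased on the sphere via the explicit formula $\E P^{\otimes 2q}=\frac{1}{(d+2q-1)\cdots d}\sum_\pi U(\pi)$ rather than via Wick's formula for $g$. Until that cycle-decomposition estimate (or an equivalent sharp computation) is written down, the claim $\|X\|_{\psi_1}\le\|\Delta\|_{2(1)}$ is not proved.
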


Therefore, we may apply Bernstein's inequality with $M= \sigma \leq 3\sqrt{18} \| \Delta \|_{\mathrm{U}_d} $, yielding Proposition \ref{proposition:large-deviations}.

\begin{proof}[Proof of Lemma \ref{lemma:psi1}]
For each integer $q$, we compute
\[ \E \left[\tr (\Delta P)\right]^{2q}   = \E \tr\left(\Delta^{\otimes 2q}P^{\otimes 2q}\right) = \tr \left( \Delta^{\otimes 2q}
\left[\E P^{\otimes 2q}\right] \right) .\]
We use the fact (see e.g. \cite{Harrow}) that
\[ \E P^{\otimes 2q} = \frac{(2q)!}{(d+2q-1)\times\cdots\times d}
P_{\Sym^{2q}(\C^d)} =\frac{1}{(d+2q-1)\times\cdots\times d} \sum_{\pi\in\mathfrak{S}_{2q}} U(\pi) ,\]
where $P_{\Sym^{2q}(\C^d)}$ denotes the orthogonal projection
onto the symmetric subspace $\Sym^{2q}(\C^d)\subset(\C^d)^{\otimes 2q}$, and for each permutation $\pi\in\mathfrak{S}_{2q}$, $U(\pi)$ denotes the
associated permutation
unitary on $(\C^d)^{\otimes 2q}$. This yields
\[ \E \left[\tr (\Delta P)\right]^{2q} = \frac{1}{(d+2q-1)\times\cdots\times d}\sum_{\pi\in\mathfrak{S}_{2q}}
\tr \left(\Delta^{\otimes 2q}U(\pi)\right) .\]
If $\ell_1,\dots,\ell_k$ denote the lengths of the cycles appearing in the cycle decomposition of a permutation $\pi \in \mathfrak{S}_{2q}$, we have
$\ell_1 + \cdots + \ell_k = 2q$ and
\[ \tr \left(\Delta^{\otimes 2q}U(\pi)\right) = \prod_{i=1}^k \tr (\Delta^{\ell_i}). \]
Now, for any integer $\ell \geq 2$, we have $| \tr (\Delta^\ell) | \leq [\tr (\Delta^2)]^{\ell/2} \leq \|\Delta\|_{2(1)}^\ell$. The inequality
$| \tr (\Delta^\ell) | \leq \|\Delta\|_{2(1)}^\ell$ is also (trivially) true for $\ell=1$. Therefore
$\left|\tr \left(\Delta^{\otimes 2q}U(\pi)\right) \right| \leq \|\Delta\|_{2(1)}^{2q}$. It follows that
\[ \E \left[\tr (\Delta P)\right]^{2q} \leq \frac{(2q)!}{d^{2q}} \|\Delta\|_{2(1)}^{2q} \leq \left( \frac{2q \|\Delta\|_{2(1)}}{d} \right)^{2q}, \]
so that $\left( \E X^{2q} \right)^{1/2q} \leq 2q \|\Delta\|_{2(1)}$, and thus $\|X\|_{\psi_1} \leq \|\Delta\|_{2(1)}$.
The last part of the Lemma follows from the triangle
inequality, since $\| \E X \|_{\psi_1}=|\E X| \leq 2 \| X \|_{\psi_1}$, and from the equivalence \eqref{eq:U-vs-modifiedHS} between the norms
$\|\cdot\|_{\mathrm{U}_d}$
and $\|\cdot\|_{2(1)}$.
\end{proof}

\section{Proof of Theorem \ref{theorem:approximation-any}}
\label{sec:approximation-any}

Here is a version of Talagrand's theorem which is suitable for our purposes.

\begin{theorem}[\cite{Talagrand}] \label{theorem:talagrand}
Let $Z \subset \R^n$ be a symmetric zonotope, with
\[ Z = \sum_{i \in I} \conv \{ \pm u_i \} \]
for a finite family of vectors $(u_i)_{i \in I}$. Then for every $\e >0$ there exists a subset $J \subset I$
with $\card J \leq C n \log n/\e^2$, and positive numbers $(\lambda_i)_{i \in J}$ such that the zonotope
\[ Z' = \sum_{i \in J} \conv \{ \pm \lambda_i u_i \} \]
satisfies $Z' \subset Z \subset (1+\e) Z'$.
\end{theorem}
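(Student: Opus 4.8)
The plan is to realize the support function $h_Z(x) = \sum_{i \in I} |\scalar{x}{u_i}|$ as the $L^1$-norm of the coordinate map $x \mapsto (\scalar{x}{u_i})_{i\in I}$ and to produce $Z'$ by a weighted random subselection of the original generators. Passing to support functions, the inclusions $Z' \subset Z \subset (1+\e)Z'$ are equivalent to the multiplicative estimate $(1-\delta)h_Z \leq h_{Z'} \leq (1+\delta)h_Z$ on the unit sphere, up to a harmless rescaling of the weights $\lambda_i$ by $(1+\delta)^{-1}$ carried out at the very end. So it suffices to find positive weights on a subset $J\subset I$ of size $O(n\log n/\e^2)$ for which $x \mapsto \sum_{i\in J}\lambda_i|\scalar{x}{u_i}|$ approximates $h_Z$ within a factor $1\pm\delta$, with $\delta\simeq\e$, uniformly over $x$.

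The first step is to bring $(u_i)$ into a good position by a \emph{change of density}. The norm $h_Z$ embeds isometrically into $L^1$, and I would invoke the $L^1$-version of Lewis's theorem (equivalently, compute the $\ell_1$-Lewis weights of the family) to produce sampling probabilities $(p_i)_{i\in I}$, $\sum_i p_i = 1$, in which the contribution of each generator is balanced. Concretely, writing $h_Z(x) = \E_{i\sim p}\big[|\scalar{x}{u_i}|/p_i\big]$, the reweighted generators form a decomposition of the identity and the leverage ratios $\sup_x |\scalar{x}{u_i}|/\big(p_i\, h_Z(x)\big)$ are controlled uniformly in $i$. This is the step in which the intrinsic dimension $n$ (rather than $\card I$) governs the sample size, and it is what makes the subsequent empirical estimate well-conditioned.

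Next I would sample $m = O(n\log n/\e^2)$ indices $i_1,\dots,i_m$ independently from $p$ and set $h_{Z'}(x) = \tfrac1m\sum_{k=1}^m |\scalar{x}{u_{i_k}}|/p_{i_k}$, so that $\E\, h_{Z'}(x) = h_Z(x)$ for each fixed $x$; the set $J$ is then the collection of sampled indices, with $\lambda_i = (mp_i)^{-1}$ times the multiplicity, and automatically $J\subset I$. For a single $x$ the deviation is controlled exactly as in Proposition~\ref{proposition:large-deviations}: the summands are independent, bounded (hence $\psi_1$) variables whose $\psi_1$-norm is comparable to $h_Z(x)$ thanks to the Lewis normalization, and Bernstein's inequality yields failure probability $\exp(-cm\delta^2)$ in the Gaussian regime.

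The main obstacle is upgrading this to a bound uniform over the whole $h_Z$-unit sphere while keeping $m$ only $O(n\log n)$. A $\delta$-net of that sphere has cardinality $\exp(Cn)$, which a union bound could afford against the Gaussian tail $\exp(-cm\delta^2)$; the difficulty is the sub-exponential regime, where the tail degrades to $\exp(-cm\delta)$ and a few atypically large values of $|\scalar{x}{u_{i_k}}|/p_{i_k}$ destroy any uniform sub-Gaussian control. I would handle this by the truncation/deletion scheme of \cite{BLM,Talagrand}: split each generator's contribution at a threshold, treat the bounded parts by the net-plus-Bernstein argument above, and dispose of the heavy parts by a separate union bound showing that, uniformly in $x$, only a negligible fraction of the mass comes from such terms. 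Equivalently, one bounds the empirical process $\E\sup_{h_Z(x)\leq1}|h_{Z'}(x)-h_Z(x)|$ by symmetrization and a chaining (majorizing-measure) estimate, the Lewis position controlling the relevant metric entropy; the logarithmic loss producing the $n\log n$ factor is precisely the price of the heavy tails. Finally, rescaling the weights by $(1+\delta)^{-1}$ turns the two-sided estimate into $Z'\subset Z\subset(1+\e)Z'$, the $\lambda_i$ remaining positive and $J$ still a subset of $I$.
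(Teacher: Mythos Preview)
The paper does not contain a proof of this statement. Theorem~\ref{theorem:talagrand} is quoted from \cite{Talagrand} as a black box and is immediately applied in Section~\ref{sec:approximation-any} to deduce Theorem~\ref{theorem:approximation-any}; no argument for it is given or sketched in the paper. So there is nothing to compare your attempt against here.

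That said, what you have written is a reasonable outline of the circle of ideas in \cite{BLM,Talagrand}: the Lewis change of density to equalize the contributions of the generators, importance sampling with weights $\lambda_i \propto 1/p_i$, Bernstein-type concentration for a fixed direction, and then a uniform bound via truncation and chaining to cope with the heavy tails that prevent a naive net argument from closing. The one place I would flag as genuinely incomplete rather than merely sketched is the uniform step: the sentence ``dispose of the heavy parts by a separate union bound showing that, uniformly in $x$, only a negligible fraction of the mass comes from such terms'' hides essentially all of the work in \cite{Talagrand}. A plain union bound over a net of size $\exp(Cn)$ does not control the heavy part; one needs either the iterated truncation/regularization of \cite{BLM} or Talagrand's majorizing-measure bound on the empirical process, and it is precisely this machinery that produces the extra $\log n$. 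Your parenthetical ``equivalently, one bounds the empirical process \dots\ by symmetrization and a chaining (majorizing-measure) estimate'' is the correct pointer, but as written the proposal does not carry it out. If you intend this as a proof rather than a roadmap, that step needs to be made explicit.
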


Theorem \ref{theorem:approximation-any} is a very simple consequence of Theorem \ref{theorem:talagrand}. Let $\mathrm{M}$ be a POVM to
be sparsified. Using Corollary
\ref{corollary:approximation}, we may assume that $\mathrm{M} = (M_i)_{i \in I}$ is discrete. Applying Theorem \ref{theorem:talagrand} to the
zonotope $K_\mathrm{M}=\sum_{i \in I} \conv \{ \pm M_i \}$ (which lives in a $d^2$-dimensional space), we obtain a zonotope $Z' = \sum_{i \in J} \conv \{ \pm \lambda_i M_i \}$
with $\card J \leq C d^2 \log d /\e^2$ such that $Z' \subset K_{\mathrm{M}} \subset (1+\e) Z'$.
It remains to show that $\mathrm{M}'=(\lambda_i M_i)_{i \in J}$ is a sub-POVM. We know that $h_{Z'} \leq h_{K_\mathrm{M}}$.
Therefore, given a unit vector
$x \in \C^d$, the inequality $h_{Z'}(\Delta) \leq h_{K_\mathrm{M}}(\Delta)$ applied with $\Delta=\ketbra{x}{x}$ shows that
\[  \sum_{i \in J} \lambda_i \left| \langle x | M_i | x \rangle \right| \leq \| \ketbra{x}{x} \|_{\mathrm{M}} \leq \|
\ketbra{x}{x} \|_1 =1, \]
and therefore $\sum_{i \in J} \lambda_iM_i \leq \Id$, as required. Since the inclusions $Z' \subset K_{\mathrm{M}} \subset (1+\e) Z'$ are
equivalent to the inequalities $\|\cdot\|_{\mathrm{M}'} \leq \|\cdot\|_{\mathrm{M}} \leq (1+\e) \|\cdot\|_{\mathrm{M}'}$, Theorem
\ref{theorem:approximation-any} follows.

\section*{Acknowledgements}

We thank Andreas Winter for having first raised the general question of finding POVMs with few outcomes but good discriminating power.
We also thank Marius Junge for suggesting the possible connection between POVMs and zonoids, and for pointing out to us relevant literature.

\addcontentsline{toc}{section}{References}



\end{document}